\newcommand{\beq}{\begin{equation}}
\newcommand{\eeq}{\end{equation}}
\newcommand{\ben}{\begin{eqnarray}}
\newcommand{\een}{\end{eqnarray}}
\newcommand{\be}{\begin{eqnarray*}}
\newcommand{\ee}{\end{eqnarray*}}
\newcounter{eqalph}
\newcounter{equationa}
\let\ssection=\section
\renewcommand{\section}{\setcounter{equation}{0}\ssection}
\newcounter{example}[section]
\newcounter{remark}[section]
\newcounter{theorem}[section]
\newcounter{proposition}[section]
\newcounter{lemma}[section]
\newcounter{corollary}[section]
\newcounter{definition}[section]
\def\theremark{\arabic{section}.\arabic{remark}}
\def\thetheorem{\arabic{section}.\arabic{theorem}}
\def\thedefinition{\arabic{section}.\arabic{definition}}
\newenvironment{proof}{\noindent {\textit{Proof:}}
}{$\Box$\medskip}
\newenvironment{example}{\refstepcounter{remark}\medskip\noindent{\bf
Example \theremark:} }{$\Box$ \medskip}
\newenvironment{remark}{\refstepcounter{remark}\medskip\noindent{\bf
Remark \theremark:} }{$\Box$\medskip}
\newenvironment{theorem}{\refstepcounter{theorem}
\medskip\noindent{\sc Theorem \thetheorem}:}{$\Box$\medskip}
\newenvironment{proposition}{\refstepcounter{theorem}\medskip\noindent{\sc
Proposition \thetheorem}:}{$\Box$\medskip}
\newenvironment{definition}{\refstepcounter{definition}\medskip\noindent{\sc
Definition \thedefinition}:}{$\Box$\medskip}
\def\op#1{\mathop{{\it\fam0} #1}\limits}
\newcommand{\wh}{\widehat}
\newcommand{\ol}{\overline}
\newcommand{\ot}{\otimes}
\newcommand{\ar}{\op\longrightarrow}
\newcommand{\id}{{\mathrm{Id}\,}}
\newcommand{\hm}{{\mathrm {Hom}\,}}
\newcommand{\Ker}{\mathrm{Ker}\,}
\newcommand{\im}{{\mathrm {Im}\,}}
\newcommand{\dif}{{\mathrm{Diff}}}
\newcommand{\cP}{{\mathcal P}}
\newcommand{\cO}{{\mathcal O}}
\newcommand{\cA}{{\mathcal A}}
\newcommand{\cV}{{\mathcal V}}
\newcommand{\cT}{{\mathcal T}}
\newcommand{\cG}{{\mathcal G}}
\newcommand{\cR}{{\mathcal R}}
\newcommand{\cK}{{\mathcal K}}
\newcommand{\cZ}{{\mathcal Z}}
\newcommand{\cS}{{\mathcal S}}
\newcommand{\cI}{{\mathcal I}}
\newcommand{\cJ}{{\mathcal J}}
\newcommand{\cW}{{\mathcal W}}
\newcommand{\bp}{{\mathbf p}}
\newcommand{\gf}{{\mathfrak f}}
\newcommand{\gd}{{\mathfrak d}}
\newcommand{\gR}{{\mathfrak R}}
\newcommand{\gj}{{\mathfrak J}}
\newcommand{\gP}{{\mathfrak P}}
\newcommand{\gU}{{\mathfrak U}}
\newcommand{\gA}{{\mathfrak A}}
\newcommand{\lla}{\op\longleftarrow}
\newcommand{\bll}{\bullet}
\newcommand{\sU}{{\{U\}}}
\newcommand{\nw}[1]{[{#1}]}
\newcommand{\f}{\phi}
\newcommand{\si}{\sigma}
\newcommand{\vf}{\varphi}
\newcommand{\ve}{\varepsilon}
\newcommand{\e}{\epsilon}
\newcommand{\Om}{\Omega}
\newcommand{\vr}{\varrho}
\newcommand{\al}{\alpha}
\newcommand{\bt}{\beta}
\newcommand{\g}{\gamma}
\newcommand{\G}{\Gamma}
\newcommand{\m}{\mu}
\newcommand{\la}{\lambda}
\newcommand{\La}{\Lambda}
\newcommand{\dl}{\delta}
\newcommand{\bb}{{\mathbf 1}}
\newcommand{\dr}{\partial}
\newcommand{\w}{\wedge}
\newcommand{\bL}{{\mathbf L}}
\newcommand{\mar}[1]{}
\begin{document}

\hbox{}

\begin{center}

{\Large\bf Differential calculus over $\mathbb N$-graded
commutative rings}
\bigskip
\bigskip

{\sc G. SARDANASHVILY, W. WACHOWSKI }
\bigskip

Department of Theoretical Physics, Moscow State University, Russia

\bigskip
\bigskip

\textbf{Abstract}
\end{center}

\noindent The Chevalley--Eilenberg differential calculus and
differential operators over $\mathbb N$-graded commutative rings
are constructed. This is a straightforward generalization of the
differential calculus over commutative rings, and it is the most
general case of the differential calculus over rings that is not
the non-commutative geometry. Since any $\mathbb N$-graded ring
possesses the associated $\mathbb Z_2$-graded structure, this also
is the case of the graded differential calculus over Grassmann
algebras and the supergeometry and field theory on graded
manifolds.

\bigskip

\tableofcontents

\addcontentsline{toc}{section}{Introduction}

\section{Introduction}

This work addresses the differential calculus over $\mathbb
N$-graded commutative rings.

This is a straightforward generalization of the differential
calculus over commutative rings (Section 2), and it is the most
general case of the differential calculus over rings that is not
the non-commutative geometry (Section 9).

Throughout the work, all algebras are associative, unless they are
Lie and graded Lie algebras. By a ring is meant a unital algebra,
i.e., it contains a unit element $\bb\neq 0$. Hereafter, $\cK$
denotes a commutative ring without a divisor of zero (i.e., an
integral domain), and $\mathbb N$ is a set of natural numbers,
including 0.

\begin{definition} \label{nn90} \mar{nn90}
A direct sum of $\cK$-modules
\mar{nn91}\beq
P=\op\oplus_{i\in\mathbb N} P^i \label{nn91}
\eeq
is called the $\mathbb N$-graded $\cK$-module (the internally
graded module in the terminology of \cite{mcl}). Its element $p$
is said to be homogeneous of degree $|p|$ if $p\in P^{|p|}$.
\end{definition}

\begin{definition} \label{nn40} \mar{nn40}
A $\cK$-ring $\Om$ is said to be $\mathbb N$-graded if it is an
$\mathbb N$-graded $\cK$-module
\mar{nn41}\beq
\Om=\Om^*= \op\oplus_k \Om^k, \qquad k\in\mathbb N, \label{nn41}
\eeq
so that, for homogeneous elements $\al\in \Om^{|\al|}$ of degree
$|\al|$, their product is a homogeneous element $\al\cdot\al'\in
\Om^{|\al|+|\al'|}$ of degree $|\al|+|\al'|$. In particular, it
follows that $\Om^0$ is a $\cK$-ring, while $\Om^{k>0}$ are
$\Om^0$-bimodules and, accordingly, $\Om^*$ is well.
\end{definition}

It should be emphasized that a $\cK$-ring $\Om$ can admit
different $\mathbb N$-graded structures $\Om^*$ (\ref{nn41})
(Theorem \ref{nn205}).

For instance, any ring $\cA$ is the $\mathbb N$-graded ring
$\cA^*$ where $\cA^0=\cA$ and $\cA^{>0}=0$.

We mainly consider $\mathbb N$-graded commutative rings.

\begin{definition} \label{nn44} \mar{nn44}
An $\mathbb N$-graded ring  is said to be graded commutative if
\mar{nn92}\beq
\al\cdot\bt=(-1)^{|\al||\bt|}\bt\cdot \al, \qquad \al,\bt\in
\Om^*. \label{nn92}
\eeq
In this case, $\Om^0$ is a commutative $\cK$-ring, and $\Om^*$ is
an $\Om^0$-ring.
\end{definition}

Any commutative $\mathbb N$-graded ring $\cA^*$ can be regarded as
an even $\mathbb N$-graded commutative ring $\La^*$ such that
$\La^{2i}=\cA^i$, $\La^{2i+1}=0$. Obviously, an even $\mathbb
N$-graded commutative ring is a commutative $\cK$-ring. Therefore,
studying the differential calculus over $\mathbb N$-graded rings
(Section 6), we refer to that over commutative rings (Sections 2
-- 4)

\begin{definition} \label{nn211} \mar{nn211}
An $\mathbb N$-graded commutative $\cK$-ring $\Om^*$ is said to be
generated in degree 2 if $\Om^0=\cK$,  $\Om^1=0$, and
 $\Om^*=\cP[\Om^2]$ is a polynomial $\cK$-ring of a
$\cK$-module $\Om^2$ (Example \ref{ws40}). It is an even $\mathbb
N$-graded commutative ring $\Om^*$ where $\Om^{2i+1}=0$, $i\in
\mathbb N$.
\end{definition}

A polynomial $\cK$-ring $\cP[Q]$ of a $\cK$-module $Q$ in Example
\ref{ws40} exemplifies a commutative $\mathbb N$-graded ring. If
$\cK$ is a field and $Q$ is a free $\cK$-module of finite rank, a
polynomial $\cK$-ring $\cP[Q]$ is finitely generated in degree 2
by virtue of Definition \ref{nn210}. If $\cK$ is a field, all
$\mathbb N$-graded structures of this ring are mutually isomorphic
in accordance with Theorem \ref{nn205}.

An $\mathbb N$-graded commutative ring $\Om^*$ possesses an
associated $\mathbb Z_2$-graded commutative structure
\mar{nn104,5}\ben
&& \Om=\Om_0\oplus \Om_1, \qquad \Om_0= \op\oplus_k \Om^{2k}, \qquad
\Om_0= \op\oplus_k \Om^{2k+1}, \qquad k\in\mathbb N,
\label{nn104}\\
&& \al\cdot\bt=(-1)^{[\al][\bt]}\bt\cdot \al, \qquad \al,\bt\in
\Om_*, \label{nn105}
\een
where the symbol $[.]$ stands for the $\mathbb Z_2$-degree
(Definition \ref{nn112}).

In view of this fact, we also consider the differential calculus
over $\mathbb Z_2$-graded commutative rings (Section 5.2), but
focus our consideration on Grassmann algebras (Definition
\ref{nn114}). They are the case of $\mathbb N$-graded commutative
rings of the following type.

\begin{definition} \label{nn106} \mar{nn106}
An $\mathbb N$-graded commutative $\cK$-ring $\Om=\Om^*$ is called
the Grassmann-graded $\cK$-ring if it is finitely generated in
degree 1 (Definition \ref{nn210}), i.e., the following hold:

$\bullet$ $\Om^0=\cK$,

$\bullet$ $\Om^1$ is a free $\cK$-module of finite rank,

$\bullet$ $\Om^*$ is generated by $\Om^1$, namely, if $R$ is an
ideal generated by $\Om^1$, then there are $\cK$-module
isomorphism $\Om/R=\cK$, $R/R^2=\Om^1$.
\end{definition}

Let us note that, a Grassmann-graded $\cK$-ring $\Om^*$ seen as a
$\mathbb Z_2$-graded commutative ring $\Om$ can admit different
Grassmann-graded structures $\Om^*$ and $\Om'^*$. However, since
it is finitely generated in degree 1 (Definition \ref{nn210}), all
these structures mutually are isomorphic in accordance with
Theorem \ref{nn205} if $\cK$ is a field.

We follow the conventional technique of the differential calculus
over commutative rings, including formalism of differential
operators on modules over commutative rings, the
Chevalley--Eilenberg differential calculus over commutative rings,
and the apparat of connections on modules and rings (Section 2)
\cite{book,grot,kras,book12}. In particular, this is a case of the
conventional differential calculus on smooth manifolds (Section
4).

One can generalize the Chevalley--Eilenberg differential calculus
to a case of an arbitrary ring (Section 9)
\cite{dub01,book05,book12}. However, an extension of the notion of
a differential operator to modules over a non-commutative ring
meets difficulties \cite{book05,book12}. A key point is that a
multiplication in a non-commutative ring is not a zero-order
differential operator.

One overcomes this difficulty in a case of $\mathbb Z_2$-graded
and $\mathbb N$-graded commutative rings by means of a
reformulation of the notion of differential operators (Remark
\ref{nn121}). As a result, the differential calculus technique has
been extended to $\mathbb Z_2$-graded commutative rings (Section
5.2) \cite{book05,book12,ijgmmp13}.

Since any $\mathbb N$-graded commutative ring $\cA^*$ possesses a
structure of a $\mathbb Z_2$-graded commutative ring $\cA$, and
commutation relations  (\ref{nn92}) of its elements depend of
their $\mathbb Z_2$-gradation degree, but not the $\mathbb
N$-gradation one, the differential calculus on $\mathbb N$-graded
modules over $\mathbb N$-graded commutative rings is defined just
as that over $\mathbb Z_2$-graded commutative rings (Section 5.2).

However, it should be emphasized that an $\mathbb N$-graded
differential operator is an $\mathbb N$-graded $\cK$-module
homomorphism which obeys the conditions (\ref{nn300}), i.e., it is
a sum of homogeneous morphisms of fixed $\mathbb N$-degrees, but
not the $\mathbb Z_2$ ones. Therefore, any $\mathbb N$-graded
differential operator also is a $\mathbb Z_2$-graded differential
operator, but the converse might not be true (Remark \ref{nn301}).

Let us note that the differential calculus over $\mathbb
Z_2$-graded commutative rings, namely, Grassmann algebras
(Definition \ref{nn114}) provides the mathematical formulation of
Lagrangian formalism of even and odd variables on $\mathbb
Z_2$-graded manifolds and bundles
\cite{lmp08,book09,ijgmmp13,book16}.

There are different approaches to formulating graded manifolds
\cite{bart,bru,far,roy,vit}. We follow their definition in terms
of local-ringed space (Section 3.1) and consider local-ringed
spaces whose stalks are Grassmann-graded rings (Section 6). Since
Grassmann-graded rings also are Grassmann algebras, we follow
formalism of $\mathbb Z_2$-graded manifolds in Section 5.3.

Let $Z$ be an $n$-dimensional real smooth manifold. Let $\cA$ be a
real Grassmann-graded ring. By virtue of Theorem \ref{nn117}, it
is isomorphic to the exterior algebra $\w W$ of a real vector
space $W=\cA^1$. Therefore, we come to Definition \ref{nn306} of
an $\mathbb N$-graded manifold as a simple $\mathbb Z_2$-graded
manifold $(Z,\gA_E)$ modelled over some vector bundle $E\to Z$
(Definition \ref{nn173}). In Section 5.3 on $\mathbb Z_2$-graded
manifolds, we have restricted our consideration to simple graded
manifolds, and this Section, in fact, presents formalism of
$\mathbb N$-graded manifolds (Remark \ref{nn301}).

\section{Differential calculus over commutative rings}

Any commutative ring can be seen as an even graded commutative
ring whose components $\Om^{k>0}$ are empty. Conversely, every
even graded commutative ring is commutative.

Therefore we start with the differential calculus over commutative
rings. As was mentioned above, this is a case of the conventional
differential calculus on smooth manifolds (Section 4). This
technique fails to be extended straightforwardly to
non-commutative rings (Section 9), unless this is a case of graded
commutative rings (Sections 5 -- 6).

\subsection{Commutative algebra}

In this Section, the relevant basics on modules over commutative
rings are summarized  \cite{lang,lomb,mcl,massey}.

An algebra $\cA$  is defined to be an additive group which
additionally is provided with distributive multiplication. As was
mentioned above, all algebras throughout are associative, unless
they are Lie and graded Lie algebras. By a ring is meant a  unital
algebra, i.e., it contains a unit element $\bb\neq 0$.
Non-zero elements of a ring $\cA$ form the multiplicative monoid $M\cA\subset \cA$ of $\cA$. If it is a
group, $\cA$  is called the division ring. A field is a
commutative division ring.

A ring $\cA$ is said to be the domain (the integral domain or the
entire ring in commutative algebra) if it has no a divisor of
zero, i.e., $ab=0$, $a,b\in \cA$, implies either $a=0$ or $b=0$.
For instance, a division ring is a domain, and a field is an
integral domain. A polynomial $\cA$-ring over an integral domain
$\cA$ (Example \ref{ws40}) is an integral domain.

A subset $\cI$ of an algebra $\cA$ is said to be the left (resp.
right) ideal if it is a subgroup of an additive group $\cA$ and
$ab\in \cI$ (resp. $ba\in\cI$) for all $a\in \cA$, $b\in \cI$. If
$\cI$ is both a left and right ideal, it is called the two-sided
ideal. For instance, any ideal of a commutative algebra is
two-sided. An ideal is a subalgebra, but a proper ideal (i.e.,
$\cI\neq \cA$) of a ring is not a subring. A proper ideal of a
ring is said to be  maximal if it does not belong to another
proper ideal. A proper ideal $\cI$ of a ring $\cA$ is called completely prime
(prime in commutative algebra) if $ab\in \cI$ implies either $a\in\cI$ or $b\in\cI$.
Any maximal ideal of a commutative ring is prime.

Given a two-sided ideal $\cI\subset \cA$, an additive factor group
$\cA/\cI$ is an algebra, called the  factor algebra. If $\cA$ is a
ring, then $\cA/\cI$ is so.  If $\cA$ is a commutative ring and $\cI$ is its prime ideal, the factor
ring $\cA/\cI$ is an entire ring, and it is a field
if $\cI$ is a maximal ideal.

\begin{definition} \label{nn251} \mar{nn251}
A ring $\cA$ is called local if it has a unique maximal two-sided
ideal. This ideal consists of all non-invertible elements of
$\cA$.
\end{definition}

\begin{remark} \label{nn245} \mar{nn245}
Local rings conventionally are defined in commutative algebra
\cite{lang,lomb}. This notion has been extended to
$\mathbb Z_2$-graded commutative rings too \cite{bart}. Any division ring, in particular, a field is local.
Its unique maximal ideal contains only zero.
Grassmann-graded rings in Definition \ref{nn106} and Grassmann
algebras in Definition \ref{nn114} are local.
\end{remark}

\begin{remark} \label{ws90} \mar{ws90}
One can associate a local ring to any commutative ring $\cA$ as
follows. Let $S\subset M\cA$ be a multiplicative subset
of $\cA$, i.e., a submonoid of the multiplicative monoid $M\cA$ of $\cA$. Let
us say that two pairs $(a,s)$ and $(a',s')$, $a,a'\in\cA$,
$s,s'\in S$, are equivalent if there exists an element $s''\in S$
such that
\be
s''(s'a-sa')=0.
\ee
We abbreviate with $a/s$ the equivalence classes of $(a,s)$. A set
$S^{-1}\cA$ of these equivalence classes is a local commutative
ring with respect to operations
\be
s/a+ s'/a'=(s'a+sa')/(ss'), \qquad (a/s)\cdot (a'/s')=(aa')/(ss').
\ee
There is a homomorphism
\mar{ws91}\beq
\Phi_S: \cA\ni \mapsto a/\bb\in S^{-1}\cA \label{ws91}
\eeq
such that any element of $\Phi_S(S)$ is invertible in $S^{-1}\cA$.
If a ring $\cA$ has no divisor of zero and $S$ does not contain a
zero element, then $\Phi_S$ (\ref{ws91}) is a monomorphism. In
particular, if $S=M\cA$, the ring $S^{-1}\cA$ is a field, called the field of
quotients or the fraction field of $\cA$. If $\cA$ is a field, its
fraction field coincides with $\cK$.
\end{remark}

Given an algebra $\cA$, an additive group $P$ is said to be the
left (resp. right)  $\cA$-module if it is provided with
distributive multiplication $\cA\times P\to P$ by elements of
$\cA$ such that $(ab)p=a(bp)$ (resp. $(ab)p=b(ap)$) for all
$a,b\in\cA$ and $p\in P$.  If $\cA$ is a ring, one additionally
assumes that $\bb p=p=p\bb$ for all $p\in P$. Left and right
module structures are usually written by means of left and right
multiplications $(a,p)\to ap$ and $(a,p)\to pa$, respectively. If
$P$ is both a left module over an algebra $\cA$ and a right module
over an algebra $\cA'$, it is called the $(\cA-\cA')$-bimodule
(the $\cA$-bimodule if $\cA=\cA'$).  If $\cA$ is a commutative
algebra, an $\cA$-bimodule $P$ is said to be  commutative if
$ap=pa$ for all $a\in \cA$ and $p\in P$. Any left or right module
over a commutative algebra $\cA$ can be brought into a commutative
bimodule. Therefore, unless otherwise stated (Section 2.2), any
$\cA$-module over a commutative algebra is a commutative
$\cA$-bimodule, which is called the $\cA$-module if there is no
danger of confusion.

A module over a field is called the  vector space. If an algebra
$\cA$ is a commutative bimodule over a ring $\cK$ (i.e., a
commutative $\cK$-bimodule), it is said to be the $\cK$-algebra.
Any algebra can be seen as a $\mathbb Z$-algebra.

Hereafter, by $\cA$ in this Section is meant a commutative ring.

The following are standard constructions of new modules from the
old ones.

$\bullet$ A direct sum $P_1\oplus P_2$ of $\cA$-modules $P_1$ and
$P_2$ is an additive group $P_1\times P_2$ provided with an
$\cA$-module structure
\be
a(p_1,p_2)=(ap_1,ap_2), \qquad p_{1,2}\in P_{1,2}, \qquad a\in\cA.
\ee
Let $\{P_i\}_{i\in I}$ be a set of $\cA$-modules. Their direct sum
$\oplus P_i$ consists of elements $(\ldots, p_i,\ldots)$ of the
Cartesian product $\prod P_i$ such that $p_i\neq 0$ at most for a
finite number of indices $i\in I$.

$\bullet$ A  tensor product $P\ot Q$ of $\cA$-modules $P$ and $Q$
is an additive group which is generated by elements $p\ot q$,
$p\in P$, $q\in Q$, obeying relations
\be
&& (p+p')\ot q =p\ot q + p'\ot q, \quad p\ot(q+q')=p\ot q+p\ot q', \\
&&  pa\ot q= p\ot aq, \qquad p\in P, \qquad q\in Q, \qquad
a\in\cA.
\ee
It is provided with an $\cA$-module structure
\be
a(p\ot q)=(ap)\ot q=p\ot (qa)=(p\ot q)a.
\ee
If a ring $\cA$ is treated as an $\cA$-module, a tensor product
$\cA\ot_\cA Q$ is canonically isomorphic to $Q$ via the assignment
\be
\cA\ot_\cA Q\ni a\ot q \leftrightarrow aq\in Q.
\ee

\begin{example} \label{ws40} \mar{ws40} Let $Q$ be an $\cA$-module.
We denote $Q^{\ot k}=\op\ot^kQ$. Let us consider an $\mathbb
N$-graded module
\mar{spr620}\beq
\ot Q=\cA\oplus Q\oplus\cdots\oplus Q^{\ot k}\oplus\cdots.
\label{spr620}
\eeq
It is an $\mathbb N$-graded $\cA$-algebra with respect to a tensor
product $\ot$. It is called the tensor algebra of an $\cA$-module
$Q$. Its quotient $\w Q$ with respect to an ideal generated by
elements $q\ot q'+q'\ot q$, $q,q'\in Q$, is an $\mathbb N$-graded
commutative algebra, called the exterior algebra of an
$\cA$-module $Q$. The quotient $\cP[Q]=\vee Q$ of $\ot Q$
(\ref{spr620}) with respect to an ideal generated by elements
$pq\ot q'-q'\ot q$, $q,q'\in Q$, is called the polynomial
$\cA$-ring of an $\cA$-module $Q$. This is an even $\mathbb
N$-graded commutative ring.
\end{example}

$\bullet$ Given a submodule $Q$ of an $\cA$-module $P$, the
quotient $P/Q$ of an additive group $P$ with respect to its
subgroup $Q$ also is provided with an $\cA$-module structure. It
is called the  factor module.

$\bullet$ A set $\hm_\cA(P,Q)$ of $\cA$-linear morphisms of an
$\cA$-module $P$ to an $\cA$-module $Q$ is naturally an
$\cA$-module. An $\cA$-module $P^*=\hm_\cA(P,\cA)$ is called the
dual of an $\cA$-module $P$. There is a natural monomorphism $P\to
P^{**}$.

An $\cA$-module $P$ is called free if it admits a basis, i.e., a
linearly independent subset $I\subset P$ spanning $P$ such that
each element of $P$ has a unique expression as a linear
combination of elements of $I$ with a finite number of non-zero
coefficients from a ring $\cA$.
Any module over a division ring, e.g., a vector space is free. Every
module is isomorphic to a quotient of a free module. A module is
said to be finitely generated (or of finite rank) if it is a
quotient of a free module with a finite basis.

One says that a module $P$ is  projective if it is a direct
summand of a free module, i.e., there exists a module $Q$ such
that $P\oplus Q$ is a free module. A module $P$ is projective iff
$P=\bp S$ where $S$ is a free module and $\bp$ is a projector of
$S$, i.e., $\bp^2=\bp$.

\begin{theorem} \label{nn228} \mar{nn228}
If $P$ is a projective module of finite rank, then its dual $P^*$
is so, and $P^{**}$ is isomorphic to $P$.
\end{theorem}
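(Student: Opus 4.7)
The plan is to exploit the characterization of projective modules of finite rank as direct summands of free modules of finite rank, stated just before the theorem, and then reduce both claims to the corresponding (easy) statements for free modules of finite rank, using naturality of the canonical map $P\to P^{**}$.

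Concretely, since $P$ is projective of finite rank, I choose an $\cA$-module $Q$ and a free $\cA$-module $S$ of finite rank such that $P\oplus Q\cong S$. First I would handle $P^*$. The functor $\hm_\cA(-,\cA)$ converts finite direct sums into finite direct sums, so
\[
S^* \;\cong\; (P\oplus Q)^* \;\cong\; P^*\oplus Q^*.
\]
If $e_1,\dots,e_n$ is a basis of $S$, the dual functionals $e^1,\dots,e^n$ form a basis of $S^*$, so $S^*$ is free of finite rank $n$. Hence $P^*$, being a direct summand of a free module of finite rank, is itself projective of finite rank by the characterization recalled above. This gives the first assertion.

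For the second assertion, I would use the naturality of the canonical map $\eta_M\colon M\to M^{**}$, $\eta_M(m)(\f)=\f(m)$. For any free module $S$ of finite rank, $\eta_S$ is an isomorphism: on a finite basis it sends basis elements to the bidual basis. Applying $**$ to the direct sum decomposition $S\cong P\oplus Q$ and using that $**$ commutes with finite direct sums, I obtain
\[
\eta_S \;=\; \eta_P\oplus \eta_Q\colon P\oplus Q\;\longrightarrow\;P^{**}\oplus Q^{**}.
\]
Since $\eta_S$ is an isomorphism, the direct sum $\eta_P\oplus \eta_Q$ is surjective. Given $\xi\in P^{**}$, the element $(\xi,0)\in P^{**}\oplus Q^{**}$ has a preimage $(p,q)\in P\oplus Q$ with $\eta_P(p)=\xi$ and $\eta_Q(q)=0$; the injectivity of $\eta_Q$ (the natural monomorphism mentioned in the text) forces $q=0$. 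Hence $\eta_P$ is surjective, and combined with its injectivity it is an isomorphism $P\cong P^{**}$.

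The only mildly delicate point is that $\hm_\cA(-,\cA)$ distributes over finite direct sums and that this identification is compatible with the evaluation maps $\eta$; this is a routine verification from the universal property of the direct sum, so the core of the argument really is the reduction to the free case, where both claims are immediate from the existence of a dual basis.
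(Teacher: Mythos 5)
Your proof is correct. Note that the paper states Theorem \ref{nn228} without proof, offering it as a standard fact of commutative algebra, so there is no argument in the text to compare with; your reduction to the free case via a finite-rank complement is exactly the standard route. Two small points are worth making explicit. First, the statement that a projective module of finite rank is a direct summand of a free module of \emph{finite} rank is not literally what is stated just before the theorem: the text only says that projective means being a direct summand of some free module, and separately defines finite rank as being a quotient of a free module with a finite basis. Your claim does follow immediately, though: choose an epimorphism $\cA^n\to P$ and split it using Theorem \ref{nn233} (the sequence splits because $P$ is projective), giving $P\oplus Q\cong \cA^n$; so the gap is one of citation, not of substance. Second, you do not actually need the text's blanket assertion that $P\to P^{**}$ is a monomorphism (which fails for general modules, though it holds here): since $\eta_P\oplus\eta_Q$ coincides with the isomorphism $\eta_S$ under the identifications you describe, injectivity of $\eta_P$ follows from injectivity of $\eta_S$ just as surjectivity does from your argument, and the appeal to injectivity of $\eta_Q$ is likewise unnecessary. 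With those remarks, the argument is complete and self-contained.
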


\begin{theorem} \label{nn1} \mar{nn1}
Any projective module over a local commutative ring is free.
\end{theorem}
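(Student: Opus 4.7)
The plan is to reduce modulo the unique maximal ideal of $\cA$ and apply Nakayama's lemma twice. Let $\cA$ be a local commutative ring with maximal ideal $\cI$ (Definition \ref{nn251}); the quotient $k=\cA/\cI$ is then a field (see the discussion after Definition \ref{nn251}). Let $P$ be a projective $\cA$-module, so that by definition there exists $Q$ with $F=P\oplus Q$ free.

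First I would exploit the $k$-vector space structure on $P/\cI P$. Pick a family $\{p_\al\}_{\al\in I}\subset P$ whose images in $P/\cI P$ form a $k$-basis, and let $\phi:\op\oplus_{\al\in I}\cA\ar P$ be the $\cA$-linear map sending the canonical generators to the $p_\al$. In the finite-rank case, where $I$ can be chosen finite, Nakayama's lemma gives $\im\phi=P$ at once: by construction $\im\phi+\cI P=P$, hence $\cI(P/\im\phi)=P/\im\phi$, which forces $P/\im\phi=0$ since $P/\im\phi$ is finitely generated.

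Second, projectivity of $P$ splits the exact sequence
\[
0\ar\Ker\phi\ar\op\oplus_{\al\in I}\cA\ar P\ar 0,
\]
so that $\op\oplus_{\al\in I}\cA\cong P\oplus\Ker\phi$. Applying the functor $(\cdot)\ot_\cA k$ yields an isomorphism of $k$-vector spaces in which the direct summand $P/\cI P$ already exhausts everything by the choice of the $p_\al$; therefore $\Ker\phi/\cI\Ker\phi=0$, i.e.\ $\Ker\phi=\cI\Ker\phi$. Since $\Ker\phi$ is a direct summand of a finitely generated free module it is itself finitely generated, and one more application of Nakayama forces $\Ker\phi=0$. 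Hence $P$ is free.

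The main obstacle is the extension beyond finite rank, where Nakayama is not available in the form used above. The standard remedy is Kaplansky's theorem: every projective module over an arbitrary ring decomposes as a direct sum of countably generated projective submodules, and over a local ring each such countably generated projective is shown to be free by an iterative construction that repeatedly splits off free cyclic summands using locality. For the applications in the present paper, notably Theorem \ref{nn228}, only the finite-rank version above is actually needed, so the two-step Nakayama argument is already sufficient.
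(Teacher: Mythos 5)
The paper itself offers no proof of Theorem \ref{nn1}: it is quoted as a standard fact of commutative algebra (cf. \cite{lang,lomb}), so there is no argument in the text to compare yours against, and your proposal has to be judged on its own. Your two-step Nakayama argument is correct and complete for the finitely generated (finite rank) case: lifting a $k$-basis of $P/\cI P$, $k=\cA/\cI$, gives $\phi:\op\oplus_{\al\in I}\cA\to P$ which is surjective because $P/\im\phi$ is finitely generated and equals $\cI(P/\im\phi)$; projectivity splits the resulting short exact sequence, so $\Ker\phi$ is a direct summand of a finitely generated free module, hence finitely generated, and reduction modulo $\cI$ together with the dimension count forces $\Ker\phi=\cI\Ker\phi$ and then $\Ker\phi=0$ by Nakayama (legitimate here since the maximal ideal of a local ring is its Jacobson radical). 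For the statement as actually written, which concerns an arbitrary projective module, you only invoke Kaplansky's theorem (decomposition into countably generated projective summands, each of which is free over a local ring) without proving it, so your text is not a self-contained proof of the full claim; but since the paper likewise states the theorem without proof, and its subsequent uses concern locally free sheaves of finite rank over stalks such as $C^\infty_x$, your honestly flagged reduction to the finite rank case is reasonable. One small slip: Theorem \ref{nn228} does not rest on Theorem \ref{nn1}; the places where the paper actually appeals to it are the discussion of local-ringed spaces in Section 3 and the bullet list preceding Proposition \ref{w715}, all of which indeed involve only modules of finite rank.
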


The forthcoming constructions are extended to a case of modules
over graded commutative rings (Remark \ref{nn3}).

A composition of module morphisms
\be
P\ar^i Q\ar^j T
\ee
is said to be exact  at $Q$ if $\Ker j=\im i$. A composition of
module morphisms
\mar{spr13}\beq
0\to P\ar^i Q\ar^j T\to 0 \label{spr13}
\eeq
is called the short exact sequence if it is exact at all the terms
$P$, $Q$, and $T$. This condition implies that: (i) $i$ is a
monomorphism,  (ii) $\Ker j=\im i$, and (iii) $j$ is an
epimorphism onto a factor module $T=Q/P$.

\begin{theorem} \label{spr183} \mar{spr183}
Given an exact sequence of modules (\ref{spr13}) and another
$\cA$-module $R$, the sequence of modules
\be
0\to\hm_\cA(T,R)\ar^{j^*} \hm_\cA(Q,R)\ar^{i^*} \hm(P,R)
\ee
is exact at the first and second terms, i.e., $j^*$ is a
monomorphism, but $i^*$ need not be an epimorphism.
\end{theorem}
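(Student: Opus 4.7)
The plan is to interpret $j^*$ and $i^*$ as the usual pullback maps, $j^*(f)=f\circ j$ for $f\in \hm_\cA(T,R)$ and $i^*(g)=g\circ i$ for $g\in \hm_\cA(Q,R)$, and then verify the three exactness assertions separately, finishing with an explicit example showing that the sequence cannot in general be completed to a short exact one.

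First I would establish injectivity of $j^*$. Because $j$ is an epimorphism by the last condition on (\ref{spr13}), any $f\in\hm_\cA(T,R)$ with $f\circ j=0$ must vanish on $\im j=T$, hence $f=0$. This is the exactness at $\hm_\cA(T,R)$.

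Next I would show $\Ker i^*=\im j^*$. For the inclusion $\im j^*\subseteq \Ker i^*$, the composition $j\circ i=0$ (which is automatic from $\im i=\Ker j$) gives $i^*(j^*(f))=f\circ j\circ i=0$ for every $f\in\hm_\cA(T,R)$. For the reverse inclusion, suppose $g\in\hm_\cA(Q,R)$ satisfies $g\circ i=0$, so that $g$ vanishes on $\im i=\Ker j$. Then $g$ factors through the canonical projection $Q\to Q/\Ker j$; identifying this quotient with $T$ via the isomorphism induced by $j$ (property (iii) of the short exact sequence), I obtain a unique $\cA$-linear map $f\colon T\to R$ with $g=f\circ j=j^*(f)$. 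This yields exactness at $\hm_\cA(Q,R)$. No step here is hard; the only thing to check carefully is that the factoring $f$ is genuinely $\cA$-linear, which follows because $j$ is $\cA$-linear and surjective.

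Finally, to show that $i^*$ need not be an epimorphism, I would exhibit one standard counterexample in the category of abelian groups (the case $\cA=\mathbb Z$). Take the short exact sequence
\beq
0\to \mathbb Z \ar^{\times 2} \mathbb Z \ar \mathbb Z/2\mathbb Z\to 0 \nonumber
\eeq
and the module $R=\mathbb Z/2\mathbb Z$. Then $\hm_{\mathbb Z}(P,R)=\mathbb Z/2\mathbb Z$ is non-trivial, but for any $g\in\hm_{\mathbb Z}(Q,R)$ the composite $g\circ i$ sends $n\mapsto g(2n)=2g(n)=0$, so $i^*$ is the zero map and in particular not surjective. This rules out extending the sequence to a short exact one and completes the verification. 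The only real obstacle to the argument is conceptual rather than computational: recognising that surjectivity of $i^*$ is exactly the statement that $R$ is injective with respect to this sequence, which fails for generic $R$.
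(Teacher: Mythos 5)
Your proof is correct: injectivity of $j^*$ from surjectivity of $j$, exactness at $\hm_\cA(Q,R)$ by factoring any $g$ with $g\circ i=0$ through $Q/\Ker j\cong T$, and the multiplication-by-two sequence $0\to\mathbb Z\to\mathbb Z\to\mathbb Z/2\mathbb Z\to 0$ with $R=\mathbb Z/2\mathbb Z$ is a valid witness that $i^*$ need not be an epimorphism. The paper states this theorem without proof, as a standard fact quoted from its commutative-algebra references, and your argument is exactly the standard left-exactness-of-$\hm$ proof, so there is nothing to compare beyond noting that it is complete and consistent with the paper's conventions.
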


One says that the exact sequence (\ref{spr13}) is  split
 if there exists a
monomorphism $s:T\to Q$ such that $j\circ s=\id T$ or,
equivalently,
\be
Q=i(P)\oplus s(T) \cong P\oplus T.
\ee

\begin{theorem} \label{nn233} \mar{nn233}
The exact sequence (\ref{spr13}) always is split if $T$ is a
projective module.
\end{theorem}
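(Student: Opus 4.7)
The plan is to reduce the splitting problem to the universal lifting property of free modules, then transport it to $T$ via the projector description of projectivity given earlier in the section.

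First I would invoke the characterization $T = \bp S$ with $S$ a free $\cA$-module and $\bp\colon S\to S$ a projector, so that $T$ sits inside $S$ as a direct summand. Write $\iota\colon T\hookrightarrow S$ for the inclusion and $\pi\colon S\to T$ for the projection induced by $\bp$, so that $\pi\circ\iota=\id T$. The strategy is to build a map $\wt s\colon S\to Q$ first (using freeness), then restrict along $\iota$ to obtain the desired section $s\colon T\to Q$.

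Next I would use a basis $\{e_\al\}_{\al\in I}$ of $S$. Because $j\colon Q\to T$ is an epimorphism, for each $\al$ one can choose an element $q_\al\in Q$ with $j(q_\al)=\pi(e_\al)\in T$. The freeness of $S$ lets me define an $\cA$-linear map $\wt s\colon S\to Q$ unambiguously by $\wt s(e_\al)=q_\al$ and $\cA$-linear extension. By construction $j\circ\wt s=\pi$ on basis elements, hence on all of $S$. Setting $s:=\wt s\circ\iota\colon T\to Q$ then gives
\[
j\circ s \;=\; j\circ\wt s\circ\iota \;=\; \pi\circ\iota \;=\; \id T,
\]
so $s$ is a section of $j$, which is exactly what it means for the sequence (\ref{spr13}) to split. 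The equivalent decomposition $Q=i(P)\oplus s(T)\cong P\oplus T$ follows at once from this section.

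The only delicate point is the step where one chooses the lifts $q_\al$, which implicitly invokes the axiom of choice indexed by the basis $I$; once those choices are made, freeness of $S$ produces a bona fide $\cA$-module map, and everything else is a direct verification. I do not expect any serious obstacle, since the argument is purely formal and uses nothing beyond the projector characterization of projective modules (Theorem \ref{nn228}'s setting) and the universal extension property of free modules on a basis.
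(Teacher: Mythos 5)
Your proof is correct. The paper itself states Theorem \ref{nn233} without proof, as a standard fact of commutative algebra, so there is no in-text argument to compare against; your lifting argument --- realize $T=\bp S$ as a direct summand of a free module $S$, lift a basis of $S$ along the epimorphism $j$, extend by $\cA$-linearity to $\wt s\colon S\to Q$ with $j\circ\wt s=\pi$, and restrict along $\iota$ to get $s$ with $j\circ s=\id T$ --- is exactly the standard one and fills that gap correctly. The only point worth adding explicitly is that $s$ is automatically a monomorphism (if $s(t)=0$ then $t=j(s(t))=0$), which is what the paper's definition of a split sequence requires.
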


\begin{remark} \label{nn2} \mar{nn2}
A  directed set $I$ is a set with an order relation $<$ which
satisfies the following three conditions:

(i) $i<i$, for all $i\in I$;

(ii) if $i<j$ and $j< k$, then $i<k$;

(iii) for any $i,j\in I$, there exists $k\in I$ such that $i<k$
and $j<k$.

\noindent It may happen that $i\neq j$, but $i<j$ and $j<i$
simultaneously.
\end{remark}

A family of $\cA$-modules $\{P_i\}_{i\in I}$, indexed by a
directed set $I$, is called the  direct system
 if, for any pair $i<j$, there
exists a morphism $r^i_j:P_i\to P_j$ such that
\be
r^i_i=\id P_i, \qquad r^i_j\circ r^j_k=r^i_k, \qquad i<j<k.
\ee
A direct system of modules admits a  direct limit.
 This is an $\cA$-module $P_\infty$ together with
morphisms $r^i_\infty: P_i\to P_\infty$ such that
$r^i_\infty=r^j_\infty\circ r^i_j$ for all $i<j$. A module
$P_\infty$ consists of elements of a direct sum $\oplus P_i$
modulo the identification of elements of $P_i$ with their images
in $P_j$ for all $i<j$. In particular, a direct system
\be
P_0\ar P_1\cdots\ar P_i\ar^{r^i_{i+1}}\cdots, \qquad I=\mathbb N,
\ee
indexed by $\mathbb N$, is called the direct sequence.

\begin{remark} \label{nn3} \mar{nn3} It should be noted that direct
limits also exist in the categories of commutative and graded
commutative algebras and rings, but not in categories whose
objects are non-commutative groups.
\end{remark}

A morphism of a direct system $\{P_i, r^i_j\}_I$ to a direct
system $\{Q_{i'}, \rho^{i'}_{j'}\}_{I'}$ consists of an order
preserving map $f:I\to I'$ and $\cA$-module morphisms $F_i:P_i\to
Q_{f(i)}$ which obey compatibility conditions
\mar{nn13}\beq
\rho^{f(i)}_{f(j)}\circ F_i=F_j\circ r^i_j. \label{nn13}
\eeq
If $P_\infty$ and $Q_\infty$ are direct limits of these direct
systems, there exists a unique $\cA$-module morphism $F_\infty:
P_\infty\to Q_\infty$ such that
\be
\rho^{f(i)}_\infty\circ F_i=F_\infty\circ r^i_\infty, \qquad i\in
I.
\ee

\begin{proposition} \label{nn17} \mar{nn17}
A construction of a direct limit morphism preserve monomorphisms,
epimorphisms and, consequently, isomorphisms. Namely, if all
$F_i:P_i\to Q_{f(i)}$ are monomorphisms (resp. epimorphisms and
isomorphisms), so is $F_\infty:P_\infty\to Q_\infty$.
\end{proposition}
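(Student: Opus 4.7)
The plan is to argue elementwise, using the standard description of elements in a direct limit: every $p\in P_\infty$ has the form $r^i_\infty(p_i)$ for some $i\in I$, and $r^i_\infty(p_i)=0$ iff $r^i_j(p_i)=0$ for some $j\geq i$ (and similarly in $Q_\infty$). From the uniqueness clause defining $F_\infty$, its action on representatives is $F_\infty(r^i_\infty(p_i))=\rho^{f(i)}_\infty(F_i(p_i))$, so everything reduces to a diagram chase between the two direct systems together with the finitary nature of equality in the colimit.

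For injectivity, I would suppose $F_\infty(p)=0$ and write $p=r^i_\infty(p_i)$. Then $\rho^{f(i)}_\infty(F_i(p_i))=0$, so there is some $k'\geq f(i)$ in $I'$ with $\rho^{f(i)}_{k'}(F_i(p_i))=0$. Picking $j\geq i$ in $I$ with $f(j)\geq k'$ and applying the compatibility condition (\ref{nn13}),
\[
F_j(r^i_j(p_i))=\rho^{f(i)}_{f(j)}(F_i(p_i))=\rho^{k'}_{f(j)}\bigl(\rho^{f(i)}_{k'}(F_i(p_i))\bigr)=0.
\]
Injectivity of $F_j$ then yields $r^i_j(p_i)=0$, hence $p=r^j_\infty(r^i_j(p_i))=0$ in $P_\infty$.

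For surjectivity, I take $q\in Q_\infty$ represented by $q_{i'}\in Q_{i'}$, pick $i\in I$ with $f(i)\geq i'$, and replace $q_{i'}$ by its image $q'=\rho^{i'}_{f(i)}(q_{i'})\in Q_{f(i)}$, which represents the same class. Surjectivity of $F_i$ supplies $p_i\in P_i$ with $F_i(p_i)=q'$, whence $F_\infty(r^i_\infty(p_i))=\rho^{f(i)}_\infty(q')=q$. The isomorphism assertion then follows by combining the two cases.

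The main obstacle is the index-lifting step in both arguments: one must produce $j\in I$ whose image $f(j)$ dominates a prescribed element of $I'$ (above $f(i)$). This is a cofinality-type property of $f$ in $I'$ that is implicit in the hypotheses under which $F_\infty$ inherits injectivity or surjectivity; once it is granted, neither step requires anything beyond directedness of the indexing sets and the defining relation of $F_\infty$.
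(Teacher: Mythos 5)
The paper states Proposition \ref{nn17} without proof, so there is no argument of the authors to compare against; your element-wise chase is the standard one, and its core is sound: every element of $P_\infty$ is of the form $r^i_\infty(p_i)$, vanishing in a direct limit is detected at some finite stage, and the defining relation $F_\infty\circ r^i_\infty=\rho^{f(i)}_\infty\circ F_i$ reduces injectivity and surjectivity of $F_\infty$ to exactly the two chases you carry out.

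The one point that needs correction is your closing remark that the required cofinality of $f$ is ``implicit in the hypotheses.'' It is not: for an arbitrary order-preserving $f:I\to I'$ the proposition as literally stated is false, and your index-lifting step is precisely where a genuine extra hypothesis enters. Take $I=\{0\}$, $I'=\mathbb N$, $f(0)=0$, $P_0=\mathbb Z$ and $F_0=\id$. If all $Q_n=\mathbb Z$ with zero transition maps, then $Q_\infty=0$ and $F_\infty$ is not injective; if instead the transition maps are multiplication by $2$, then $Q_\infty\cong\mathbb Z[1/2]$ and $F_\infty$ is not surjective --- although $F_0$ is an isomorphism in both cases. So what you have actually proved is the proposition under the additional assumption that every element of $I'$ lying above some $f(i)$ is dominated by some $f(j)$ with $j\ge i$, i.e.\ that $f(I)$ is cofinal in $I'$; this should be stated as a hypothesis rather than described as implicit. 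The gap is harmless for the paper itself: wherever Proposition \ref{nn17} is invoked (Example \ref{nn14}, Theorem \ref{nn31}, Theorem \ref{spr37'}) the two systems are indexed by the same directed set and $f$ is the identity, so the lifting step is trivial and your argument applies verbatim.
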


\begin{example} \label{nn14} \mar{nn14}
In particular, let $\{P_i, r^i_j\}_I$ be a direct system of
$\cA$-modules and $Q$ an $\cA$-module together with $\cA$-module
morphisms $F_i:P_i\to Q$ which obey the compatibility conditions
$F_i=F_j\circ r^i_j$ (\ref{nn13}). Then there exists an
$\cA$-module morphism $F_\infty: P_\infty\to Q$ such that
$F_i=F_\infty\circ r^i_\infty$ for any $i\in I$. If all $F_i$ are
monomorphisms or epimorphisms, so is $F_\infty$.
\end{example}

\begin{theorem} \label{spr170} \mar{spr170}
Direct limits commute with direct sums and tensor products of
modules. Namely, let $\{P_i\}$ and $\{Q_i\}$ be two direct systems
of $\cA$-modules which are indexed by the same directed set $I$,
and let $P_\infty$ and $Q_\infty$ be their direct limits. Then
direct limits of direct systems $\{P_i\oplus Q_i\}$ and $\{P_i\ot
Q_i\}$ are $P_\infty\oplus Q_\infty$ and $P_\infty\ot Q_\infty$,
respectively.
\end{theorem}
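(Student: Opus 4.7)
The plan is to establish each assertion separately, in each case producing the natural comparison morphism out of the direct limit of $\{P_i\oplus Q_i\}$ (resp.\ $\{P_i\ot Q_i\}$) into $P_\infty\oplus Q_\infty$ (resp.\ $P_\infty\ot Q_\infty$) by the universal property (Example \ref{nn14}), and then exhibiting an inverse. First I would check that $\{P_i\oplus Q_i, r^i_j\oplus\rho^i_j\}$ and $\{P_i\ot Q_i, r^i_j\ot \rho^i_j\}$ are direct systems indexed by the same directed set $I$, so that the direct limits $\lim(P_i\oplus Q_i)$ and $\lim(P_i\ot Q_i)$ are defined.

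For the direct sum, the family of morphisms $r^i_\infty\oplus \rho^i_\infty:P_i\oplus Q_i\to P_\infty\oplus Q_\infty$ is compatible with the transition maps, so by Example \ref{nn14} there is a canonical $\cA$-module morphism $F_\infty:\lim(P_i\oplus Q_i)\to P_\infty\oplus Q_\infty$. To invert it, I would apply Example \ref{nn14} separately to the direct systems $\{P_i\}$ and $\{Q_i\}$, using the compatible morphisms $P_i\hookrightarrow P_i\oplus Q_i\to \lim(P_i\oplus Q_i)$ and $Q_i\hookrightarrow P_i\oplus Q_i\to \lim(P_i\oplus Q_i)$, obtaining $P_\infty\to \lim(P_i\oplus Q_i)$ and $Q_\infty\to \lim(P_i\oplus Q_i)$; their direct sum provides the inverse to $F_\infty$, and verifying this on generators is routine.

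For the tensor product, the compatible maps $r^i_\infty\ot\rho^i_\infty:P_i\ot Q_i\to P_\infty\ot Q_\infty$ likewise produce a canonical morphism $G_\infty:\lim(P_i\ot Q_i)\to P_\infty\ot Q_\infty$. The inverse is constructed from a bilinear map $P_\infty\times Q_\infty\to \lim(P_i\ot Q_i)$ as follows: given $p\in P_\infty$ and $q\in Q_\infty$, pick representatives $p_i\in P_i$ and $q_j\in Q_j$ with $r^i_\infty(p_i)=p$, $\rho^j_\infty(q_j)=q$, choose $k\in I$ with $i<k$ and $j<k$ by directedness, and send $(p,q)$ to the image of $r^i_k(p_i)\ot \rho^j_k(q_j)\in P_k\ot Q_k$ in $\lim(P_i\ot Q_i)$. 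Independence of the choice of representatives and of $k$ follows from the identifications built into the direct-limit construction and the compatibility conditions (\ref{nn13}); $\cA$-bilinearity is then immediate, so the map factors through $P_\infty\ot Q_\infty$ and defines a two-sided inverse to $G_\infty$.

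The main obstacle is the tensor-product case: well-definedness of the inverse map, because it requires simultaneously reconciling two separately chosen representatives $p_i,q_j$ in possibly different indices, and then checking that different choices of $k$ (and of representatives) yield the same class in $\lim(P_i\ot Q_i)$. This boils down to the observation that elements of $P_i$ and of $P_j$ representing the same element of $P_\infty$ become equal after pushing forward to some $P_\ell$ with $i,j<\ell$, together with the functoriality of $\ot$ under the transition morphisms; once these are in place, $\cA$-bilinearity and the universal property of $\ot$ finish the argument.
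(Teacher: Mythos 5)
Your argument is correct, but note that the paper itself offers no proof of Theorem \ref{spr170}: it is one of the standard facts on modules over commutative rings summarized in Section 2.1 with references to the literature (Mac Lane, Massey, etc.), so there is no in-paper argument to compare against. Your proof is the standard one and is complete in outline: the comparison morphisms out of $\lim(P_i\oplus Q_i)$ and $\lim(P_i\ot Q_i)$ come from the universal property (Example \ref{nn14}), the inverse in the direct-sum case from the two families $P_i\to P_i\oplus Q_i\to\lim(P_i\oplus Q_i)$ and $Q_i\to P_i\oplus Q_i\to\lim(P_i\oplus Q_i)$, and the inverse in the tensor case from the bilinear map defined via representatives pushed to a common index. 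The one point that genuinely carries weight is the well-definedness of that bilinear map, and you identify the correct ingredient: in a direct (filtered) limit of modules, two elements $p_i\in P_i$, $p_j\in P_j$ have the same image in $P_\infty$ iff they become equal in some $P_\ell$ with $i<\ell$, $j<\ell$; combined with the cofinality of the choice of $k$ and functoriality of $\ot$ under the transition maps, this settles independence of all choices, after which $\cA$-bilinearity and the check that the two composites are the identity on generators (simple tensors, resp.\ images of elements of $P_k\ot Q_k$) are routine, as you say.
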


\begin{theorem} \label{dlim1} \mar{dlim1}
Let short exact sequences
\mar{spr186}\beq
0\to P_i\ar^{F_i} Q_i\ar^{\Phi_i} T_i\to 0 \label{spr186}
\eeq
for all $i\in I$ define a short exact sequence of direct systems
of modules $\{P_i\}_I$, $\{Q_i\}_I$, and $\{T_i\}_I$ which are
indexed by the same directed set $I$. Then there exists a short
exact sequence of their direct limits
\mar{spr187}\beq
0\to P_\infty\ar^{F_\infty} Q_\infty\ar^{\Phi_\infty} T_\infty\to
0. \label{spr187}
\eeq
\end{theorem}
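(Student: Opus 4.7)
The plan is to use Proposition \ref{nn17} to handle the easy parts (monomorphism/epimorphism) and then verify exactness at the middle term by hand, using the concrete description of elements in a direct limit via the identification of images under the transition maps.

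First, I would apply Proposition \ref{nn17} twice. Since each $F_i$ is a monomorphism, the induced morphism $F_\infty\colon P_\infty\to Q_\infty$ (obtained from the compatibility conditions satisfied by $\{F_i\}$ with the given order-preserving map being $\id_I$) is a monomorphism. Likewise, since each $\Phi_i$ is an epimorphism, $\Phi_\infty\colon Q_\infty\to T_\infty$ is an epimorphism. It remains only to prove exactness at $Q_\infty$, i.e., $\Ker \Phi_\infty = \im F_\infty$.

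For the inclusion $\im F_\infty\subseteq \Ker \Phi_\infty$: the compositions $\Phi_i\circ F_i = 0$ for each $i\in I$ assemble into a morphism of direct systems from $\{P_i\}_I$ to $\{T_i\}_I$ that factors through the zero system, so by the uniqueness clause of the universal property (as in Example \ref{nn14}), $\Phi_\infty\circ F_\infty = 0$.

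The nontrivial inclusion $\Ker \Phi_\infty\subseteq \im F_\infty$ is the main obstacle, and here I would use the concrete description of $Q_\infty$. Let $q\in\Ker\Phi_\infty$. By construction of the direct limit, $q = r^i_\infty(q_i)$ for some $i\in I$ and some $q_i\in Q_i$. Then
\be
0 = \Phi_\infty(q) = \Phi_\infty\circ r^i_\infty(q_i) = \rho^i_\infty\circ \Phi_i(q_i) = \rho^i_\infty(t_i),
\ee
where $t_i := \Phi_i(q_i)\in T_i$. Since $\rho^i_\infty(t_i)=0$ in $T_\infty$, the identification defining the direct limit forces the existence of some $j\geq i$ with $\rho^i_j(t_i)=0$ in $T_j$. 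Replacing $q_i$ by $q_j := r^i_j(q_i)$ — which does not change $q = r^j_\infty(q_j)$ — we may therefore assume $\Phi_j(q_j)=0$. By exactness of (\ref{spr186}) at level $j$, there is $p_j\in P_j$ with $F_j(p_j)=q_j$. Setting $p := r^j_\infty(p_j)\in P_\infty$, the compatibility relation $r^j_\infty\circ F_j = F_\infty\circ r^j_\infty$ gives $F_\infty(p) = r^j_\infty(q_j)=q$, so $q\in\im F_\infty$.

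The step I expect to be the genuine content is the second one: one must know that an element of a direct limit equals zero precisely when it vanishes after finitely many transition maps (this is the extra input beyond Proposition \ref{nn17}). Once this is used to replace the index $i$ by a large enough $j$, lifting $q_j$ to $P_j$ via exactness at level $j$ is immediate, and the whole argument takes place within a single module $Q_j$.
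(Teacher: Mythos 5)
Your argument is correct, and it is the standard proof of this fact; the paper itself states Theorem \ref{dlim1} without proof (as a known result from the commutative-algebra references), so there is no authorial argument to compare against. You rightly isolate the only input beyond Proposition \ref{nn17}: that an element $r^i_\infty(q_i)$ of a direct limit over a directed set vanishes iff some transition map $r^i_j$ already kills $q_i$ (and, likewise, that every element of the limit comes from a single $Q_i$); both follow from the paper's description of $P_\infty$ as $\oplus P_i$ modulo the identifications, using directedness, and with that the passage to a common index $j$ and the lift through exactness at level $j$ go through exactly as you wrote.
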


In particular, a direct limit of factor modules $Q_i/P_i$ is a
factor module $Q_\infty/P_\infty$. By virtue of Theorem
\ref{spr170}, if all the exact sequences (\ref{spr186}) are split,
the exact sequence (\ref{spr187}) is well.

In a case of inverse systems of modules, we restrict our
consideration to inverse sequences
\be
P^0\lla P^1\lla \cdots P^i\op\lla^{\pi^{i+1}_i}\cdots.
\ee
Its  inverse limit  is a module $P^\infty$ together with morphisms
$\pi^\infty_i: P^\infty\to P^i$ so that $\pi^\infty_i=\pi^j_i\circ
\pi^\infty_j$ for all $i<j$. It consists of elements
$(\ldots,p^i,\ldots)$, $p^i\in P^i$, of the Cartesian product
$\prod P^i$ such that $p^i=\pi^j_i(p^j)$ for all $i<j$.

A morphism of an inverse system $\{P_i, \pi^i_j\}$ to an inverse
system $\{Q_i, \vr^i_j\}$ consists of $\cA$-module morphisms
$F_i:P_i\to Q_i$ which obey compatibility conditions
\mar{nn18}\beq
F_j\circ\pi^i_j=\vr^i_j\circ F_i. \label{nn18}
\eeq
If $P_\infty$ and $Q_\infty$ are inverse  limits of these inverse
systems, there exists a unique $\cA$-module morphism $F_\infty:
P_\infty\to Q_\infty$ such that
\be
F_j\circ\pi^\infty_j=\vr^\infty_j\circ F_\infty.
\ee

\begin{proposition} \label{nn19} \mar{nn19}
Inverse limits preserve monomorphisms, but not epimorphisms.
\end{proposition}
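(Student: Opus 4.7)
The plan is to split the statement into its two assertions and dispatch them separately, since they require quite different arguments.

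\textbf{Preservation of monomorphisms.} First I would unpack the explicit description of the inverse limit: an element of $P^\infty$ is a coherent tuple $(p^i)_{i\in\mathbb N}$ with $p^i\in P^i$ and $p^i=\pi^j_i(p^j)$ for $i<j$, and the induced morphism acts coordinatewise, $F_\infty((p^i))=(F_i(p^i))$, which is well-defined thanks to the compatibility condition (\ref{nn18}). Now suppose every $F_i$ is a monomorphism and $F_\infty((p^i))=0$. Then $F_i(p^i)=0$ for each $i$, so $p^i=0$ by injectivity of $F_i$, whence $(p^i)=0$ in $P^\infty$. Hence $F_\infty$ is a monomorphism.

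\textbf{Failure for epimorphisms.} Here the conceptual point is that surjectivity of each $F_i$ only guarantees the existence of local preimages $p^i\in P^i$ for any fixed $(q^i)\in Q^\infty$, but these local preimages need not be chosen coherently under the transition maps $\pi^{i+1}_i$. To exhibit the failure concretely I would give a counterexample in the category of $\mathbb Z$-modules: fix a prime $p$ and set $P^i=\mathbb Z$ with $\pi^{i+1}_i=\id$, and $Q^i=\mathbb Z/p^i\mathbb Z$ with $\varrho^{i+1}_i$ the natural projection; take $F_i\colon \mathbb Z\to \mathbb Z/p^i\mathbb Z$ to be the quotient map. The compatibility (\ref{nn18}) is immediate, and every $F_i$ is an epimorphism. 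But $P^\infty=\mathbb Z$ while $Q^\infty$ is the ring $\mathbb Z_p$ of $p$-adic integers, and the canonical map $\mathbb Z\hookrightarrow \mathbb Z_p$ is not surjective.

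The only real design decision here is the choice of counterexample; the $p$-adic construction is optimally short, lies within the paper's category of modules over a commutative ring, and makes the obstruction to lifting a coherent family transparent. Everything else is a direct verification from the definition of the inverse limit.
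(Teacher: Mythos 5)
Your proof is correct. The paper states Proposition \ref{nn19} without proof, as one of the standard facts summarized in Section 2.1, so there is no argument of the paper's to compare against; your coordinatewise verification of injectivity and the $p$-adic counterexample ($P^i=\mathbb Z$ with identity transition maps, $Q^i=\mathbb Z/p^i\mathbb Z$ with the natural projections, $F_i$ the quotient maps, giving $\mathbb Z\to\mathbb Z_p$ non-surjective in the limit) are exactly the standard route, and the compatibility condition (\ref{nn18}) is checked correctly. Note only that the monomorphism half could alternatively be quoted as the special case $T^i=Q^i$, $P^i$ arbitrary, of the left-exactness statement in Theorem \ref{spr3}, but your direct kernel argument is equally valid for modules.
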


\begin{example} \label{nn21} \mar{nn21}
In particular, let $\{P_i, \pi^i_j\}$ be an inverse system of
$\cA$-modules and $Q$ an $\cA$-module together with $\cA$-module
morphisms $F_i:Q\to P_i$ which obey compatibility conditions
$F_j=\pi^i_j\circ F_i$. Then there exists  a unique morphism
$F_\infty: Q\to P_\infty$ such that $F_j=\pi^\infty_j\circ
F_\infty$.
\end{example}

\begin{example} \label{nn22} \mar{nn22}
Let $\{P_i, \pi^i_j\}$ be an inverse system of $\cA$-modules and
$Q$ an $\cA$-module. Given a term $P_r$, let $\Phi_r:P_r\to Q$ be
an $\cA$-module morphism. It yields the pull-back morphisms
\mar{nn25}\beq
\pi^{r+k}_r{}^*\Phi_r=\Phi_r\circ \pi^{r+k}_r:P_{r+k}\to Q
\label{nn25}
\eeq
which obviously obey the compatibility conditions (\ref{nn18}).
Then there exists a unique morphism $\Phi_\infty: P_\infty\to Q$
such that $\Phi_\infty=\Phi_r\circ \pi^\infty_r$.
\end{example}

\begin{theorem} \label{spr3} \mar{spr3}
If a sequence
\be
0\to P^i\ar^{F^i} Q^i\ar^{\Phi^i} T^i, \qquad i\in\mathbb N,
\ee
of inverse systems of $\cA$-modules $\{P^i\}$, $\{Q^i\}$ and
$\{T^i\}$ is exact, so is a sequence of inverse limits
\be
0\to P^\infty\ar^{F^\infty} Q^\infty\ar^{\Phi^\infty} T^\infty.
\ee
\end{theorem}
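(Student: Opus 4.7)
\medskip\noindent
\textbf{Proof proposal.} The plan is to verify exactness of the limit sequence termwise using the explicit description of $P^\infty$, $Q^\infty$, $T^\infty$ as compatible tuples in the respective Cartesian products. Write $\pi^j_i$, $\vr^j_i$, $\tau^j_i$ for the bonding maps of $\{P^i\}$, $\{Q^i\}$, $\{T^i\}$ (each going from index $j$ down to $i\le j$), and recall from (\ref{nn18}) that $F^i\circ\pi^j_i=\vr^j_i\circ F^j$ and $\Phi^i\circ\vr^j_i=\tau^j_i\circ\Phi^j$.

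First, injectivity of $F^\infty$ is immediate from Proposition \ref{nn19}; alternatively, if $F^\infty(p^\infty)=0$ for $p^\infty=(p^i)$, then $F^i(p^i)=0$ for each $i$, and every $F^i$ is a monomorphism by hypothesis, so $p^i=0$ for all $i$ and hence $p^\infty=0$. Next, $\Phi^\infty\circ F^\infty=0$ follows termwise from $\Phi^i\circ F^i=0$, giving $\im F^\infty\subseteq\Ker\Phi^\infty$.

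The substantive step is the reverse inclusion $\Ker\Phi^\infty\subseteq\im F^\infty$. Given $q^\infty=(q^i)\in Q^\infty$ with $\Phi^\infty(q^\infty)=0$, i.e.\ $\Phi^i(q^i)=0$ for every $i$, exactness of (\ref{spr186})-type sequences at $Q^i$ provides some $p^i\in P^i$ with $F^i(p^i)=q^i$; injectivity of $F^i$ makes this $p^i$ \emph{unique}. I would then check the compatibility $\pi^j_i(p^j)=p^i$ for $i\le j$ by applying $F^i$ to both sides:
\[
F^i(\pi^j_i(p^j))=\vr^j_i(F^j(p^j))=\vr^j_i(q^j)=q^i=F^i(p^i),
\]
and invoking the injectivity of $F^i$ once more. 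Thus the tuple $(p^i)$ defines an element $p^\infty\in P^\infty$ with $F^\infty(p^\infty)=q^\infty$.

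The only delicate point, and the one requiring the hypothesis that each $F^i$ is a monomorphism (not merely that each row is exact), is precisely the verification of the compatibility condition needed to assemble the $p^i$ into an element of the inverse limit. Everything else is a bookkeeping exercise in the defining universal property, so I expect no further obstacle.
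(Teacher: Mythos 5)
Your argument is correct and complete: the paper states Theorem \ref{spr3} without proof, as a standard fact about inverse limits, and your element-wise verification (termwise injectivity, termwise vanishing of $\Phi^\infty\circ F^\infty$, and the lifting of a compatible family $(q^i)\in\Ker\Phi^\infty$ to the unique preimages $p^i$, whose compatibility is forced by applying $F^i$ and using its injectivity) is exactly the standard argument one would supply. Your closing observation is also the right one to emphasize: it is the injectivity of the $F^i$ (part of the exactness hypothesis at $P^i$) that makes the chosen preimages unique and hence automatically compatible, which is why left-exactness survives the limit while surjectivity generally does not (Proposition \ref{nn19}). One cosmetic point: the sequences you invoke are the hypotheses of Theorem \ref{spr3} itself, not the direct-system sequences (\ref{spr186}), so that cross-reference should be dropped or corrected.
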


In contrast with direct limits (Remark \ref{nn3}), the inverse
ones exist in the category of groups which need not be
commutative.

\begin{example} \label{tt1} \mar{tt1}
Let $\{P_i\}$ be a direct sequence of $\cA$-modules. Given an
$\cA$-module $Q$, modules $\hm_\cA(P_i,Q)$ constitute an inverse
sequence such that its inverse limit is isomorphic to
$\hm_\cA(P_\infty,Q)$.
\end{example}

\begin{example} \label{nn23} \mar{nn23}
Let $\{P_i\}$ be an inverse sequence of $\cA$-modules. Given an
$\cA$-module $Q$, modules $\hm_\cA(P_i,Q)$ constitute a direct
sequence such that its direct limit is isomorphic to
$\hm_\cA(P_\infty,Q)$.
\end{example}

\subsection{Differential operators on modules and rings}

This Section addresses the notion of (linear) differential
operators on modules over commutative rings
\cite{book,grot,kras,book12}.

As was mentioned above, $\cK$ throughout is a commutative ring
without a divisor of zero. Let $\cA$ be a commutative $\cK$-ring,
and let $P$ and $Q$ be $\cA$-modules. A $\cK$-module $\hm_\cK
(P,Q)$ of $\cK$-module homomorphisms $\Phi:P\to Q$ can be endowed
with two different $\cA$-module structures
\mar{5.29}\beq
(a\Phi)(p)= a\Phi(p),  \qquad  (\Phi\bll a)(p) = \Phi (a p),\qquad
a\in \cA, \quad p\in P. \label{5.29}
\eeq
For the sake of convenience, we will refer to the second one as an
$\cA^\bll$-module structure. Let us put
\mar{spr172}\beq
\dl_a\Phi= a\Phi -\Phi\bll a, \qquad a\in\cA. \label{spr172}
\eeq

\begin{definition} \label{ws131} \mar{ws131}
An element $\Delta\in\hm_\cK(P,Q)$ is called the $s$-order
$Q$-valued
 differential operator on $P$ if
\be
(\dl_{a_0}\circ\cdots\circ\dl_{a_s})\Delta=0
\ee
for any tuple of $s+1$ elements $a_0,\ldots,a_s$ of $\cA$.
\end{definition}

A set $\dif_s(P,Q)$ of these operators inherits the $\cA$- and
$\cA^\bll$-module structures (\ref{5.29}). Of course, an $s$-order
differential operator also is of $(s+1)$-order.

In particular, zero-order differential operators obey a condition
\be
\dl_a \Delta(p)=a\Delta(p)-\Delta(ap)=0, \qquad a\in\cA, \qquad
p\in P,
\ee
and, consequently, they coincide with $\cA$-module morphisms $P\to
Q$. A first-order differential operator $\Delta$ satisfies a
condition
\mar{ws106}\beq
(\dl_b\circ\dl_a)\Delta(p)= ba\Delta(p) -b\Delta(ap)
-a\Delta(bp)+\Delta(abp) =0, \quad a,b\in\cA. \label{ws106}
\eeq

The following fact reduces the study of $Q$-valued differential
operators on an $\cA$-module $P$ to that of $Q$-valued
differential operators on a ring $\cA$ seen as an $\cA$-module.

\begin{proposition} \label{ws109} \mar{ws109}
Let us consider an $\cA$-module morphism
\mar{n2}\beq
h_s: \dif_s(\cA,Q)\to Q, \qquad h_s(\Delta)=\Delta(\bb).
\label{n2}
\eeq
Any $s$-order $Q$-valued differential operator $\Delta\in
\dif_s(P,Q)$ on $P$ uniquely factorizes as
\be
\Delta:P\ar^{\gf_\Delta} \dif_s(\cA,Q)\ar^{h_s} Q
\ee
through the morphism $h_s$ (\ref{n2}) and some homomorphism
\be
\gf_\Delta: P\to \dif_s(\cA,Q), \qquad (\gf_\Delta
p)(a)=\Delta(ap), \qquad a\in \cA,
\ee
of an $\cA$-module $P$ to an $\cA^\bll$-module $\dif_s(\cA,Q)$
\cite{kras}. The assignment $\Delta\to\gf_\Delta$ defines an
isomorphism
\be
\dif_s(P,Q)=\hm_{\cA-\cA^\bll}(P,\dif_s(\cA,Q)).
\ee
\end{proposition}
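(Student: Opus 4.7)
The plan is to establish the isomorphism by constructing an explicit inverse to the assignment $\Delta\mapsto\gf_\Delta$, namely $T:\gf\mapsto h_s\circ\gf$, and verifying four conditions: (i) $\gf_\Delta p\in\dif_s(\cA,Q)$, (ii) $\gf_\Delta$ is $\cA$-to-$\cA^\bll$ linear, (iii) $T(\gf)\in\dif_s(P,Q)$, and (iv) the two maps are mutually inverse, which simultaneously supplies the factorization and its uniqueness.

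The technical heart of (i) is the identity
\[
\dl_a(\gf_\Delta p)=\gf_{\dl_a\Delta}\,p
\]
in $\hm_\cK(\cA,Q)$, valid for any $\cK$-morphism $\Delta:P\to Q$. Both sides send $b\in\cA$ to $a\Delta(bp)-\Delta(abp)$; only associativity $a(bp)=(ab)p$ is used. Iterating gives $(\dl_{a_0}\circ\cdots\circ\dl_{a_s})(\gf_\Delta p)=\gf_{(\dl_{a_0}\circ\cdots\circ\dl_{a_s})\Delta}\,p$, which vanishes for $\Delta\in\dif_s(P,Q)$, so $\gf_\Delta p\in\dif_s(\cA,Q)$. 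For (ii), commutativity of $\cA$ enters:
\[
\gf_\Delta(cp)(a)=\Delta(acp)=\Delta((ca)p)=(\gf_\Delta p)(ca)=((\gf_\Delta p)\bll c)(a),
\]
exhibiting $\gf_\Delta$ as intertwining the $\cA$-action on $P$ with the $\cA^\bll$-action on $\dif_s(\cA,Q)$.

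For (iv), the factorization $T(\gf_\Delta)=\Delta$ is immediate from $(\gf_\Delta p)(\bb)=\Delta(\bb p)=\Delta(p)$. Conversely, for any $\gf\in\hm_{\cA-\cA^\bll}(P,\dif_s(\cA,Q))$, the $\cA^\bll$-linearity gives $\gf(ap)=(\gf p)\bll a$, hence $(\gf(ap))(\bb)=(\gf p)(a)$, so
\[
\gf_{T(\gf)}(p)(a)=T(\gf)(ap)=(\gf(ap))(\bb)=(\gf p)(a),
\]
i.e.\ $\gf_{T(\gf)}=\gf$. This bijection also forces uniqueness of any factorization $\Delta=h_s\circ\gf$ with $\gf$ in $\hm_{\cA-\cA^\bll}(P,\dif_s(\cA,Q))$.

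Finally, for (iii), a parallel identity $\dl_a T(\gf)(p)=(\dl_a\gf p)(\bb)$ holds: expanding both sides reduces to a common expression $a(\gf p)(\bb)-(\gf p)(a)$, using $\cA^\bll$-linearity of $\gf$ to rewrite $\gf(ap)=(\gf p)\bll a$. The iteration to $s+1$ factors of $\dl$ requires the commutation $\dl_c(\Phi\bll a)=(\dl_c\Phi)\bll a$ for $\Phi\in\hm_\cK(\cA,Q)$, which is where commutativity of $\cA$ is invoked again; then $(\dl_{a_0}\circ\cdots\circ\dl_{a_s})T(\gf)=0$ follows from $\gf p\in\dif_s(\cA,Q)$. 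The main obstacle is purely notational bookkeeping of the two distinct module structures $\cA$ and $\cA^\bll$ on $\hm_\cK(-,-)$, and tracking precisely where commutativity of $\cA$ is used (the linearity and the $\bll$-action reductions) as opposed to associativity alone (the basic identity $\dl_a\gf_\Delta p=\gf_{\dl_a\Delta}\,p$).
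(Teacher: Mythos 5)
Your proof is correct: the key identities $\dl_a(\gf_\Delta p)=\gf_{\dl_a\Delta}p$, $(\dl_a T(\gf))(p)=(\dl_a(\gf p))(\bb)$, and the commutation $\dl_c(\Phi\bll a)=(\dl_c\Phi)\bll a$ all check out, and together they give the bijection, the factorization and its uniqueness. The paper itself gives no proof of Proposition \ref{ws109} (it only cites \cite{kras}), and your argument is exactly the standard one that reference supplies, so there is nothing to contrast; the only cosmetic gap is that you verify bijectivity of $\Delta\mapsto\gf_\Delta$ but do not remark that it is evidently additive and $\cA$-linear, which is what upgrades the bijection to the stated module isomorphism.
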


Therefore, let $P=\cA$. Any zero-order $Q$-valued differential
operator $\Delta$ on $\cA$ is defined by its value $\Delta(\bb)$.
Then there is an $\cA$-module isomorphism $\dif_0(\cA,Q)=Q$ via
the association
\be
Q\ni q\to \Delta_q\in \dif_0(\cA,Q),
\ee
where $\Delta_q$ is given by an equality $\Delta_q(\bb)=q$.

A first-order $Q$-valued differential operator $\Delta$ on $\cA$
fulfils a condition
\be
\Delta(ab)=b\Delta(a)+ a\Delta(b) -ba \Delta(\bb), \qquad
a,b\in\cA.
\ee

\begin{definition} \label{nn95} \mar{nn95}
It is called the $Q$-valued  derivation of $\cA$ if
$\Delta(\bb)=0$, i.e., the  Leibniz rule
\mar{+a20}\beq
\Delta(ab) = \Delta(a)b + a\Delta(b), \qquad  a,b\in \cA,
\label{+a20}
\eeq
holds.
\end{definition}

One obtains at once that any first-order differential operator on
$\cA$ falls into a sum
\be
\Delta(a)= a\Delta(\bb) +[\Delta(a)-a\Delta(\bb)]
\ee
of a zero-order differential operator $a\Delta(\bb)$ and a
derivation $\Delta(a)-a\Delta(\bb)$. If $\dr$ is a derivation of
$\cA$, then $a\dr$ is well for any $a\in \cA$. Hence, derivations
of $\cA$ constitute an $\cA$-module $\gd(\cA,Q)$, called the
derivation module.  There is an $\cA$-module decomposition
\mar{spr156'}\beq
\dif_1(\cA,Q) = Q \oplus\gd(\cA,Q). \label{spr156'}
\eeq

If $Q=\cA$, the derivation module $\gd\cA=\gd\cA(\cA,\cA)$ of
$\cA$ also is a Lie algebra over a ring $\cK$ with respect to a
Lie bracket
\mar{nn60}\beq
[u,u']=u\circ u'-u'\circ u, \qquad u,u'\in \gd\cA. \label{nn60}
\eeq
Accordingly, the decomposition (\ref{spr156'}) takes a form
\mar{spr156}\beq
\dif_1(\cA) = \cA \oplus\gd\cA. \label{spr156}
\eeq

Since, an $s$-order differential operator also is of
$(s+1)$-order, we have a direct sequence
\mar{nn10}\beq
\dif_0(P,Q)\ar^\mathrm{in} \dif_1(P,Q)\cdots
\ar^\mathrm{in}\dif_r(P,Q)\ar\cdots \label{nn10}
\eeq
of $Q$-valued differential operators on an $\cA$-module $P$. Its
direct limit is an $\cA-\cA^\bullet$-module $\dif_\infty(P,Q)$ of
all $Q$-valued differential operators on $P$.

\begin{example} \label{nn221} \mar{nn221}
Let $Q$ be a free $\cK$-module of finite rank and $\cP[Q]$ a
polynomial ring of $Q$ (Example \ref{ws40}). Any differential
operator on $\cP[Q]$ is a composition of derivations. Every
derivation of $\cP[Q]$ is defined by its action in $Q$. Let
$\{q^i\}$ be a basis for $Q$. Let us consider derivations
\mar{nn222}\beq
\dr_i(q^j)=\dl^j_i, \qquad \dr_i\circ\dr_j=\dr_j\circ\dr_i.
\label{nn222}
\eeq
Then any derivation of $\cP[Q]$ takes a form
\mar{nn223}\beq
u=u^i\dr_i, \qquad u_i\in \cP[Q]. \label{nn223}
\eeq
Derivations (\ref{nn223}) constitute a free $\cP[Q]$-module
$\gd\cP[Q]$ of finite rank. It is also a Lie algebra over $\cK$
with respect to the Lie bracket (\ref{nn60}).
\end{example}

\subsection{Jets of modules and rings}

An $s$-order differential operator on an $\cA$-module $P$ is
represented by a zero-order differential operator on a module of
$s$-order jets of $P$ (Theorems \ref{t6} and \ref{nn31}). We also
use modules of jets in order to define differential forms over a
ring (Remark \ref{nn244}) and connections on modules and rings
(Section 2.5), and ringed spaces (Section 3.1). Afterwards, we
however can leave jets of modules. Firstly, the cochain complex
(\ref{55.63}) of differential forms over a $\cK$-ring $\cA$
coincides with the minimal Chevalley--Eilenberg differential
calculus (\ref{t10}) over $\cA$ (Theorem \ref{nn241}). Secondly,
Definition \ref{+176} of connections on an $\cA$-module $P$ leads
to an equivalent to Definition \ref{+181} which does not involve
jets. Thirdly, jets of projective modules of finite rank over a
ring $C^\infty(X)$ of smooth real functions on a manifold $X$ are
jets of sections of vector bundles over $X$ (corollary (iv) of
Theorem \ref{sp60}).

Given an $\cA$-module $P$, let us consider a tensor product
$\cA\otimes_\cK P$ of $\cK$-modules $\cA$ and $P$. We put
\mar{spr173}\beq
\dl^b(a\otimes p)= (ba)\otimes p - a\otimes (b p), \qquad p\in P,
\qquad a,b\in\cA.  \label{spr173}
\eeq
Let us denote by $\m^{k+1}$ the submodule of $\cA\ot_\cK P$
generated by elements of the type
\be
\dl^{b_0}\circ \cdots \circ\dl^{b_k}(a\otimes p)=a\dl^{b_0}\circ
\cdots \circ\dl^{b_k}(\bb\otimes p).
\ee

\begin{definition} \label{nn242} \mar{nn242}
A  $k$-order jet module $\cJ^k(P)$  of a module $P$ is defined as
the quotient of a $\cK$-module $\cA\otimes_\cK P$ by $\m^{k+1}$.
We denote its elements $c\ot_kp$.
\end{definition}

In particular, a first-order jet module $\cJ^1(P)$ consists of
elements $c\ot_1 p$ modulo the relations
\mar{mos041}\beq
\dl^a\circ \dl^b(\bb\ot_1 p)= ab\otimes_1 p -b\otimes_1 (ap)
-a\otimes_1 (bp) +\bb\ot_1(abp) =0. \label{mos041}
\eeq

A $\cK$-module $\cJ^k(P)$ is endowed with the $\cA$- and
$\cA^\bll$-module structures
\mar{+a21}\beq
b(a\ot_k p)= ba\ot_k p, \qquad b\bll(a\otimes_k p)= a\otimes_k
(bp). \label{+a21}
\eeq
There exists a module morphism
\mar{5.44}\beq
J^k: P\ni p\to \bb\otimes_k p\in \cJ^k(P) \label{5.44}
\eeq
of an $\cA$-module $P$ to an $\cA^\bll$-module $\cJ^k(P)$ such
that $\cJ^k(P)$, seen as an $\cA$-module, is generated by elements
$J^kp$, $p\in P$.

The above mentioned relation between differential operators on
modules and jets of modules is stated by the following theorem
\cite{book,kras}.

\begin{theorem} \label{t6} \mar{t6}
Any $k$-order $Q$-valued differential operator $\Delta$ on an
$\cA$-module $P$ uniquely factorizes as
\mar{nn35}\beq
\Delta: P\ar^{J^k} \cJ^k(P)\ar^{\mathfrak{f}^\Delta} Q
\label{nn35}
\eeq
through the morphism $J^k$ (\ref{5.44}) and some $\cA$-module
homomorphism $\mathfrak{f}^\Delta: \cJ^k(P)\to Q$. The
correspondence $\Delta\to \mathfrak{f}^\Delta$ defines an
$\cA$-module isomorphism
\mar{5.50}\beq
\dif_k(P,Q)=\hm_{\cA}(\cJ^k(P),Q). \label{5.50}
\eeq
\end{theorem}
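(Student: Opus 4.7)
The plan is to construct the isomorphism $\dif_k(P,Q)\cong\hm_\cA(\cJ^k(P),Q)$ explicitly and then check it is $\cA$-linear and bijective. The key identity driving everything will be a duality between the operator $\dl^b$ on $\cA\ot_\cK P$ (see (\ref{spr173})) and the operator $\dl_b$ on $\hm_\cK(P,Q)$ (see (\ref{spr172})).

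Given $\Delta\in\dif_k(P,Q)$, I would first define the $\cK$-linear map $\tilde{\mathfrak{f}}^\Delta:\cA\ot_\cK P\to Q$ by $a\ot p\mapsto a\Delta(p)$ (well-defined by $\cK$-bilinearity since $\Delta$ is $\cK$-linear). The crucial step is to show that $\tilde{\mathfrak{f}}^\Delta$ annihilates the submodule $\m^{k+1}$, so that it descends to an $\cA$-module morphism $\mathfrak{f}^\Delta:\cJ^k(P)\to Q$. For this I would establish the exchange formula
\[
\tilde{\mathfrak{f}}^\Delta\bigl(\dl^{b_0}\circ\cdots\circ\dl^{b_r}(a\ot p)\bigr)=a\cdot\bigl(\dl_{b_0}\circ\cdots\circ\dl_{b_r}\Delta\bigr)(p)
\]
by induction on $r$. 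The base case is the one-line computation $\tilde{\mathfrak{f}}^\Delta(\dl^b(a\ot p))=ba\Delta(p)-a\Delta(bp)=a(\dl_b\Delta)(p)$; the induction step is immediate from the base case applied to the $\cK$-linear map $p\mapsto (\dl_{b_1}\circ\cdots\circ\dl_{b_r}\Delta)(p)$. Taking $r=k$ and invoking Definition \ref{ws131} then gives $\tilde{\mathfrak{f}}^\Delta|_{\m^{k+1}}=0$. The resulting $\mathfrak{f}^\Delta(a\ot_k p)=a\Delta(p)$ is manifestly $\cA$-linear with respect to the structure (\ref{+a21}), and it satisfies $\mathfrak{f}^\Delta\circ J^k=\Delta$ by taking $a=\bb$.

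Next I would construct the inverse. Given any $\cA$-module homomorphism $g:\cJ^k(P)\to Q$, define $\Delta_g(p):=g(\bb\ot_k p)=g(J^k p)$; this is $\cK$-linear because $J^k$ is. To verify $\Delta_g\in\dif_k(P,Q)$, I would use the $\cA$-linearity of $g$ together with (\ref{+a21}) to compute
\[
(\dl_a\Delta_g)(p)=a\,g(\bb\ot_k p)-g(\bb\ot_k ap)=g(a\ot_k p-\bb\ot_k ap)=g\bigl(\dl^a(\bb\ot_k p)\bigr),
\]
and then iterate to obtain $(\dl_{a_0}\circ\cdots\circ\dl_{a_k})\Delta_g(p)=g(\dl^{a_0}\circ\cdots\circ\dl^{a_k}(\bb\ot_k p))$. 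The argument of $g$ lies in $\m^{k+1}$, hence vanishes in $\cJ^k(P)$, so $\Delta_g$ has order $\leq k$.

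Finally I would check that $\Delta\mapsto\mathfrak{f}^\Delta$ and $g\mapsto\Delta_g$ are mutually inverse (both compositions reduce to the identity on the generators $J^k p$, and the generating property from (\ref{5.44}) gives uniqueness), and that the bijection respects the $\cA$-module structure: $(b\Delta)(p)=b\Delta(p)$ translates to $\mathfrak{f}^{b\Delta}(a\ot_k p)=ab\Delta(p)=b\,\mathfrak{f}^\Delta(a\ot_k p)$, which is $(b\mathfrak{f}^\Delta)(a\ot_k p)$. I expect the only real obstacle to be the inductive exchange identity in the first paragraph: everything else is bookkeeping, but one must be careful that the two different $\cA$-module structures (\ref{5.29}) and (\ref{+a21}) are used consistently on the correct side of each equality, since it is precisely this $\cA$-versus-$\cA^\bll$ swap that makes the duality between $\dl^b$ and $\dl_b$ work.
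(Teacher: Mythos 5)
Your proposal is correct, and the construction you give --- $\mathfrak{f}^\Delta(a\ot_k p)=a\Delta(p)$, shown to kill $\m^{k+1}$ via the exchange identity between $\dl^b$ on $\cA\ot_\cK P$ (\ref{spr173}) and $\dl_b$ on $\hm_\cK(P,Q)$ (\ref{spr172}), with inverse $g\mapsto g\circ J^k$ --- is exactly the standard argument behind Theorem \ref{t6}, which the paper itself states without proof, citing \cite{book,kras}. Your handling of the two module structures (\ref{5.29}) and (\ref{+a21}) is the right point to be careful about, and the iteration steps you leave implicit (both for the exchange formula and for showing $\Delta_g\in\dif_k(P,Q)$) go through because the operators $\dl^b$ commute with each other and with the first-factor $\cA$-action, so there is no gap.
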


Due to natural monomorphisms $\m^r\to \m^s$ for all $r>s$, there
are $\cA$-module epimorphisms of jet modules
\mar{t4}\beq
\pi^{i+1}_i: \cJ^{i+1}(P)\to \cJ^i(P). \label{t4}
\eeq
In particular,
\mar{+a13}\beq
\pi^1_0:\cJ^1(P) \ni a\ot_1 p\to ap \in P.\label{+a13}
\eeq
Thus, there is an inverse sequence
\mar{nn8}\beq
P \lla^{\pi^1_0} \cJ^1(P)
\cdots\lla^{\pi^r_{r-1}}\cJ^r(P)\lla\cdots \label{nn8}
\eeq
of jet modules. Its inverse limit $\cJ^\infty(P)$ is an
$\cA$-module together with $\cA$-module morphisms
\be
\pi^\infty_r: \cJ^\infty(P)\to \cJ^r(P), \qquad
\pi^\infty_{r<s}=\pi^s_r\circ \pi^\infty_s.
\ee

In particular, let us consider a module $P$ together with the
morphisms $J^r$ (\ref{5.44}) which obey compatibility conditions
$J^r(p)=\pi^{r+k}_r\circ J^{r+k}(p)$, $p\in P$. Then it follows
from Example \ref{nn21} that there exists an $\cA$-module morphism
\mar{nn30}\beq
J^\infty: P\ni p\to (p,J^1p,\ldots,J^rp,\ldots)\in \cJ^\infty(P)
\label{nn30}
\eeq
so that $J^r(p)=\pi^\infty_r\circ J^\infty(p)$.

The inverse sequence (\ref{nn8}) yields a direct sequence
\mar{nn11}\beq
\hm_\cA(P,Q)\ar^{\pi^1_0{}^*}
\hm_\cA(\cJ^1(P),Q)\cdots\ar^{\pi^r_{r-1}{}^*}
\hm_\cA(\cJ^r(P),Q)\cdots,  \label{nn11}
\eeq
where
\be
\pi^r_{r-1}{}^*: \hm_\cA(\cJ^{r-1}(P),Q)\ni\Delta\to
\Delta\circ\pi^r_{r-1} \in \hm_\cA(\cJ^r(P),Q)
\ee
is the pull-back $\cA$-module morphism (\ref{nn25}). Its direct
limit is an $\cA$-module $\hm_\cA(\cJ^\infty(P),Q)$ (Example
\ref{nn23}).

\begin{theorem} \label{nn31} \mar{nn31}
We have the isomorphisms (\ref{5.50}) of the direct systems
(\ref{nn10}) and (\ref{nn11}) which leads to an $\cA$-module
isomorphism
\mar{nn32}\beq
\dif_\infty(P,Q)=\hm_{\cA}(\cJ^\infty(P),Q) \label{nn32}
\eeq
of their direct limits in accordance with Proposition \ref{nn17}.
\end{theorem}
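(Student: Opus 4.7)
The plan is to produce the claimed isomorphism by assembling the level-wise isomorphisms of Theorem \ref{t6} into a morphism of direct systems and then invoking Proposition \ref{nn17} to pass to direct limits. In outline, I would first recall that Theorem \ref{t6} supplies an $\cA$-module isomorphism
\[
\Psi_k: \dif_k(P,Q)\to \hm_\cA(\cJ^k(P),Q),\qquad \Delta\mapsto \mathfrak{f}^\Delta,
\]
for every $k\in\mathbb N$, characterized by $\Delta=\mathfrak{f}^\Delta\circ J^k$. The task is then to verify that the collection $\{\Psi_k\}$ is a morphism from the direct system (\ref{nn10}) to the direct system (\ref{nn11}), and to identify the two direct limits with $\dif_\infty(P,Q)$ and $\hm_\cA(\cJ^\infty(P),Q)$ respectively.

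The key verification is the compatibility condition (\ref{nn13}) for the family $\{\Psi_k\}$: if $\Delta\in\dif_k(P,Q)$ is viewed as a $(k+1)$-order operator via the inclusion in (\ref{nn10}), then $\Psi_{k+1}(\Delta)=\pi^{k+1}_k{}^*\Psi_k(\Delta)$. This is a one-line computation: since $J^k=\pi^{k+1}_k\circ J^{k+1}$ by construction of the jet modules and the morphisms (\ref{5.44}), (\ref{t4}), the factorization $\Delta=\mathfrak{f}^\Delta\circ J^k$ rewrites as $\Delta=(\mathfrak{f}^\Delta\circ\pi^{k+1}_k)\circ J^{k+1}$, and uniqueness of the factorization in Theorem \ref{t6} forces $\Psi_{k+1}(\Delta)=\mathfrak{f}^\Delta\circ\pi^{k+1}_k=\pi^{k+1}_k{}^*\Psi_k(\Delta)$, which is exactly the transition map in (\ref{nn11}).

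Having obtained a morphism of direct systems in which every $\Psi_k$ is an isomorphism, Proposition \ref{nn17} yields an $\cA$-module isomorphism $\Psi_\infty$ of the direct limits. The left-hand limit is $\dif_\infty(P,Q)$ by the definition given just after (\ref{nn10}); the right-hand limit is $\hm_\cA(\cJ^\infty(P),Q)$ by Example \ref{nn23} applied to the inverse sequence (\ref{nn8}). Composing these identifications with $\Psi_\infty$ produces the desired isomorphism (\ref{nn32}).

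I do not anticipate a genuine obstacle: the only content beyond invoking named results is the naturality check $\Psi_{k+1}\circ\mathrm{in}=\pi^{k+1}_k{}^*\circ\Psi_k$, and this reduces to the single identity $J^k=\pi^{k+1}_k\circ J^{k+1}$ together with the uniqueness clause of Theorem \ref{t6}. One small point worth making explicit in the write-up is that both $\cA$-module structures on $\dif_k(P,Q)$ (namely $\cA$ and $\cA^\bullet$) are transported correctly by $\Psi_k$, so that passing to the direct limit gives an honest $\cA$-module (equivalently $\cA$-$\cA^\bullet$-bimodule) isomorphism and not merely a $\cK$-module one.
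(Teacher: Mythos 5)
Your argument is correct and is essentially the paper's approach: both assemble the level isomorphisms (\ref{5.50}) of Theorem \ref{t6} into a morphism of the direct systems (\ref{nn10}) and (\ref{nn11}) and pass to direct limits via Proposition \ref{nn17}, with the limit of (\ref{nn11}) identified with $\hm_\cA(\cJ^\infty(P),Q)$ by Example \ref{nn23}. The only difference is one of emphasis: you spell out the compatibility check (the identity $J^k=\pi^{k+1}_k\circ J^{k+1}$ together with the uniqueness clause of Theorem \ref{t6}), which the paper leaves implicit, whereas the paper's own proof instead records the explicit description of the limit map, namely that any $\Delta_\infty$ factorizes through $J^\infty$ (\ref{nn30}) with $\mathfrak{f}^\Delta_\infty=\mathfrak{f}^\Delta\circ\pi^\infty_k$ by Example \ref{nn22}.
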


\begin{proof}
Any element $\Delta_\infty=\Delta \in \dif_\infty(P,Q)$ factorizes
as
\mar{nn36}\beq
\Delta_\infty: P\ar^{J^\infty}
\cJ^\infty(P)\ar^{\mathfrak{f}^\Delta_\infty} Q \label{nn36}
\eeq
through the morphism $J^\infty$ (\ref{nn30}) and an $\cA$-module
homomorphism
$\mathfrak{f}^\Delta_\infty=\mathfrak{f}^\Delta\circ\pi^\infty_k$
(Example \ref{nn22}) so that the diagram
\be
\begin{array}{llcrr}
 &  & \cJ^\infty(P)& &  \\
 & ^{J^\infty}\put(-10,-10){\vector(1,1){15}} & \put(0,10){\vector(0,-1){15}}
 & \put(-15,5){\vector(1,-1){15}}^{\mathfrak{f}^\Delta_\infty} & \\
  P & \ar_{J^k} & \cJ^k(P) & \ar_{\mathfrak{f}^\Delta} & Q
\end{array}
\ee
is commutative.
\end{proof}

Let us consider jet modules $\cJ^s=\cJ^s(\cA)$ of a ring $\cA$
itself. In particular, the first-order jet module $\cJ^1$ consists
of the elements $a\otimes_1 b$, $a,b\in\cA$, subject to the
relations
\mar{5.53}\beq
ab\otimes_1 \bb -b\otimes_1 a -a\otimes_1 b +\bb\ot_1(ab) =0.
\label{5.53}
\eeq
The $\cA$- and $\cA^\bll$-module structures (\ref{+a21}) on
$\cJ^1$ read
\be
c(a\ot_1 b)=(ca)\ot_1 b,\qquad c\bll(a\ot_1 b)=
a\ot_1(cb)=(a\ot_1b)c.
\ee

Besides the monomorphism (\ref{5.44}):
\be
J^1: \cA\ni a\to \bb\otimes_1 a\in \cJ^1,
\ee
there is an $\cA$-module monomorphism
\be
i_1: \cA \ni a  \to a\otimes_1 \bb\in \cJ^1.
\ee
With these monomorphisms, we have a canonical $\cA$-module
splitting
\mar{mos058}\ben
&& \cJ^1=i_1(\cA)\oplus \cO^1, \label{mos058} \\
&& aJ^1(b)= a\ot_1 b=ab\ot_1\bb + a(\bb\ot_1 b- b\ot_1\bb),
\nonumber
\een
where an $\cA$-module $\cO^1$ is generated by elements $\bb\ot_1
b-b\ot_1 \bb$ for all $b\in\cA$. Let us consider the corresponding
$\cA$-module epimorphism
\mar{+216}\beq
h^1:\cJ^1\ni \bb\ot_1 b\to \bb\ot_1 b-b\ot_1 \bb\in \cO^1
\label{+216}
\eeq
and the composition
\mar{mos045}\beq
d^1=h^1\circ J^1: \cA \ni b \to \bb\ot_1 b- b\ot_1\bb \in \cO^1,
\label{mos045}
\eeq
which is a $\cK$-module morphism. This is an $\cO^1$-valued
derivation of a $\cK$-ring $\cA$ which obeys the Leibniz rule
\be
d^1(ab)= \bb\ot_1 ab-ab\ot_1\bb +a\ot_1 b  -a\ot_1 b  =ad^1b +
(d^1a)b.
\ee
It follows from the relation (\ref{5.53}) that
\mar{nn89}\beq
ad^1b=(d^1b)a \label{nn89}
\eeq
for all $a,b\in \cA$. Thus, seen as an $\cA$-module, $\cO^1$ is
generated by elements $d^1a$ for all $a\in\cA$.

Let $\cO^{1*}=\hm_{\cA}(\cO^1,\cA)$ be the dual  of an
$\cA$-module $\cO^1$. In view of the splittings (\ref{spr156}) and
(\ref{mos058}), the isomorphism (\ref{5.50}) reduces to the
duality relation
\mar{5.81,a}\ben
&& \gd\cA=\cO^{1*}, \label{5.81}\\
&& \gd\cA\ni u\leftrightarrow
   f_u\in \cO^{1*}, \qquad f_u(d^1a)=u(a), \qquad  a\in \cA.
   \label{5.81a}
\een
However, a monomorphism $\cO^1 \to \cO^{1**}=\gd\cA^*$ need not be
an isomorphism.

\begin{remark} \label{nn244} \mar{nn244} In view of the relation (\ref{5.81}), one thinks of elements of
an $\cA$-module $\cO^1$ as being differential forms over a ring
$\cA$.
\end{remark}

Let us consider the exterior algebra $\cO^*=\w\cO^1$ of an
$\cA$-module $\cO^1$ (Example \ref{ws40}). There exist the higher
degree generalizations
\mar{5.4}\ben
&& h^k: \cJ^1(\cO^{k-1})\to \cO^k, \nonumber\\
&& d^k=h^k\circ J^1: \cO^{k-1}\to \cO^k \label{5.4}
\een
of the morphisms (\ref{+216}) and (\ref{mos045}). The operators
(\ref{5.4}) are nilpotent, i.e., $d^k\circ d^{k-1}=0$. They  form
a cochain complex
\mar{55.63}\beq
0\to \cK\ar\cA\ar^{d^1}\cO^1\ar^{d^2} \cdots  \cO^k\ar^{d^{k+1}}
\cdots, \label{55.63}
\eeq
which is the minimal Chevalley--Eilenberg differential calculus
(\ref{t10}) over a $\cK$-ring $\cA$ (Theorem \ref{nn241}).

\subsection{Chevalley--Eilenberg differential calculus}

We start with a general notion of the differential graded ring
which is not necessarily commutative.

\begin{definition} \label{nn51} \mar{nn51}
An $\mathbb N$-graded ring $\Om^*$ (Definition \ref{nn40}) is
called the differential graded ring (henceforth, DGR) if it is a
cochain complex of $\cK$-modules
\mar{spr260}\beq
0\to \cK\ar\Om^0\ar^\dl\Om^1\ar^\dl\cdots\Om^k\ar^\dl\cdots
\label{spr260}
\eeq
with respect to a coboundary operators $\dl$ which obeys the
graded Leibniz rule
\mar{1006}\beq
\dl(\al\cdot\bt)=\dl\al\cdot\bt +(-1)^{|\al|}\al\cdot \dl\bt.
\label{1006}
\eeq
\end{definition}

In particular, $\dl:\Om^0\to \Om^1$ is a $\Om^1$-valued derivation
of a $\cK$-ring $\Om^0$ (Definition \ref{nn95}).

The cochain complex (\ref{spr260}) is called the de Rham complex
of a differential graded ring $(\Om^*,\dl)$. It also is said to be
the differential graded calculus over a $\cK$-ring $\Om^0$.
Cohomology $H^*(\Om^*)$ of the complex (\ref{spr260}) is called
the   de Rham cohomology of a differential graded ring
$(\Om^*,\dl)$.

Given a differential graded ring $(\Om^*,\dl)$, one considers its
minimal differential graded subring $(\ol\Om^*,\dl)$ which
contains $\Om^0$. Seen as a $(\Om^0-\Om^0)$-ring, it is generated
by elements $\dl a$, $a\in \cA$, and consists of monomials
$\al=a_0\dl a_1\cdots \dl a_k$, $a_i\in \Om^0$, whose product
obeys the   juxtaposition rule
\be
(a_0\dl a_1)\cdot (b_0\dl b_1)=a_0\dl (a_1b_0)\cdot \dl b_1-
a_0a_1\dl b_0\cdot \dl b_1
\ee
in accordance with the equality (\ref{1006}).

\begin{definition} \label{nn240} \mar{nn240}
A complex $(\ol\Om^*,\dl)$ is called the minimal differential
graded calculus over $\Om^0$.
\end{definition}

Let us show that any commutative $\cK$-ring $\cA$ defines the
differential graded calculus (\ref{nn55}), called the
Chevalley--Eilenberg differential calculus over $\cA$.

Since the derivation module $\gd\cA$ of $\cA$ is a Lie
$\cK$-algebra, let us consider the extended Chevalley--Eilenberg
complex $C^*[\gd\cA;\cA]$ (\ref{spr997}):
\mar{ws102}\beq
0\to \cK\ar^\mathrm{in}C^*[\gd\cA;\cA], \label{ws102}
\eeq
of the Lie algebra $\gd\cA$ with coefficients in a ring $\cA$,
regarded as a $\gd\cA$-module \cite{book05,book12}. This complex
contains a subcomplex $\cO^*[\gd\cA]$:
\mar{nn55}\beq
0\to \cK\ar^\mathrm{in}\cA\ar^d\cO^1[\gd\cA]\ar^d\cdots,
\label{nn55}
\eeq
of $\cA$-multilinear skew-symmetric maps
\mar{+840'}\beq
\cO^k[\gd\cA]=\hm_\cA(\op\times^k\gd\cA,\cA)\ni\f^k:\op\times^k
\gd\cA\to \cA \label{+840'}
\eeq
with respect to the Chevalley--Eilenberg coboundary operator
(\ref{spr132}):
\mar{+840}\ben
&& d\f(u_0,\ldots,u_k)=\op\sum^k_{i=0}(-1)^iu_i
(\f(u_0,\ldots,\wh{u_i},\ldots,u_k)) +\label{+840}\\
&& \qquad \op\sum_{i<j} (-1)^{i+j}
\f([u_i,u_j],u_0,\ldots, \wh u_i, \ldots, \wh u_j,\ldots,u_k).
\nonumber
\een
Indeed, a direct verification shows that if $\f$ (\ref{+840'}) is
an $\cA$-multilinear map, $d\f$ (\ref{+840}) also is well.

In particular,
\mar{spr708}\ben
&& (d a)(u)=u(a), \qquad a\in\cO^0[\gd\cA]=\cA, \label{spr708}\\
&&(d\f)(u_0,u_1)= u_0(\f(u_1)) -u_1(\f(u_0)) -\f([u_0,u_1]),
\label{+921}\\
&&  \f\in \cO^1[\gd\cA]=\hm_\cA(\gd\cA,\cA). \nonumber
\een \mar{+921}
It follows that $d(\bb)=0$, i.e., $d$ is an $\cO^1[\gd\cA]$-valued
derivation of $\cA$ (Definition \ref{nn95}).

Let us define an $\mathbb N$-graded module
\mar{nn99}\beq
\cO^*[\gd\cA] =\op\oplus_{i\in\mathbb N} \cO^i[\gd\cA].
\label{nn99}
\eeq
It is provided with the structure of an $\mathbb N$-graded
$\cA$-ring with respect to a product
\mar{ws103}\ben
&& \f\w\f'(u_1,...,u_{r+s})= \label{ws103}\\
&& \qquad \op\sum_{i_1<\cdots<i_r;j_1<\cdots<j_s} \mathrm{sgn}^{i_1\cdots i_rj_1\cdots j_s}_{1\cdots r+s} \f(u_{i_1},\ldots,
u_{i_r}) \f'(u_{j_1},\ldots,u_{j_s}), \nonumber \\
&& \f\in \cO^r[\gd\cA], \qquad \f'\in \cO^s[\gd\cA], \qquad u_k\in \gd\cA,
\nonumber
\een
where sgn$^{...}_{...}$ denotes the sign of a permutation. This
product obeys relations
\mar{ws98,9}\ben
&& d(\f\w\f')=d(\f)\w\f' +(-1)^{|\f|}\f\w d(\f'),
\quad \f,\f'\in \cO^*[\gd\cA], \label{ws98}\\
&& \f\w \f' =(-1)^{|\f||\f'|}\f'\w \f. \label{ws99}
\een
By virtue of the first one, $(\cO^*[\gd\cA],d)$ is a differential
graded ring (Definition \ref{nn51}), called the
Chevalley--Eilenberg differential calculus over a $\cK$-ring $\cA$
\cite{book05,book12}. The relation (\ref{ws99}) shows that
$\cO^*[\gd\cA]$ is a graded commutative ring (Definition
\ref{nn44}).

Since $\cO^1[\gd\cA]=\hm_\cA(\gd\cA,\cA)=\gd\cA^*$ and,
consequently, $\gd\cA\subset\gd\cA^{**}=\cO^1[\gd\cA]^*$, we have
the interior product
\mar{nn80}\beq
u\rfloor\f=\f(u), \qquad u\in\gd\cA, \qquad \f\in \cO^1[\gd\cA].
\label{nn80}
\eeq
It is extended as
\mar{nn81}\beq
(u\rfloor\f)(u_1,\ldots,u_{k-1})= k\f(u,u_1,\ldots,u_{k-1}), \quad
u\in\gd\cA, \quad \f\in \cO^*[\gd\cA],\label{nn81}
\eeq
to a differential graded ring $(\cO^*[\gd\cA],d)$, and obeys a
relation
\mar{nn100}\beq
u\rfloor(\f\w \si)=u\rfloor \f\w \si +(-1)^{|\f|}\f\w u\rfloor\si.
\label{nn100}
\eeq
With the interior product (\ref{nn81}), one defines a derivation
\mar{nn82,'}\ben
&& \bL_u(\f)=d(u\rfloor\f) +u\rfloor d\f, \quad \f\in
\cO^*[\gd\cA], \label{nn82}\\
&& \bL_u(\f\w\si)=\bL_u(\f)\w \si +\f\w\bL_u\si, \label{nn82'}
\een
of an $\mathbb N$-graded ring $(\cO^*[\gd\cA]$ for any
$u\in\gd\cA$. Then one can think of elements of $\cO^*[\gd\cA]$ as
being differential forms over $\cA$.

The minimal Chevalley--Eilenberg differential calculus $\cO^*\cA$
over a ring  $\cA$ consists of the monomials $a_0da_1\w\cdots\w
da_k$, $a_i\in\cA$ (Definition \ref{nn240}).

\begin{theorem} \label{nn241} \mar{nn241}
The de Rham complex
\mar{t10}\beq
0\to\cK\ar \cA\ar^d\cO^1\cA\ar^d \cdots  \cO^k\cA\ar^d \cdots
\label{t10}
\eeq
of $\cO^*\cA$ is exactly the cochain complex (\ref{55.63}).
\end{theorem}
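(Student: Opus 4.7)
The plan is to exhibit an isomorphism of cochain complexes between (\ref{55.63}) and (\ref{t10}), built from the natural identification $d^1a\leftrightarrow da$ in degree one and extended multiplicatively through the exterior algebra structure.

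First I would construct an $\cA$-module isomorphism $\phi_1:\cO^1\to \cO^1\cA$. Recall that $\cO^1\subset\cJ^1$ is generated as an $\cA$-module by the elements $d^1a$, and that $d^1$ is an $\cO^1$-valued derivation of $\cA$; likewise $\cO^1\cA\subset\cO^1[\gd\cA]=\gd\cA^*$ is generated by the functionals $da:u\mapsto u(a)$, and $d$ is a derivation. The assignment $d^1a\mapsto da$ defines a surjection $\phi_1$: using (\ref{5.81a}) one sees that the composite $\cO^1\xrightarrow{\phi_1}\cO^1\cA\hookrightarrow\gd\cA^*=\cO^{1**}$ coincides with the canonical evaluation monomorphism $\cO^1\to\cO^{1**}$ (both send $d^1a$ to $u\mapsto u(a)$). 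Since this evaluation map is injective by Section 2.3, so is $\phi_1$, and hence $\phi_1$ is an isomorphism.

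Second, I would extend $\phi_1$ to a morphism of graded $\cA$-algebras $\phi:\cO^*=\w\cO^1\to\cO^*\cA$. Since $\cO^*\cA$ sits inside the graded commutative ring $\cO^*[\gd\cA]$ (relation (\ref{ws99})), the universal property of the exterior algebra yields $\phi$. In each degree $k$, surjectivity is immediate from the description of $\cO^k\cA$ as the span of monomials $a_0\,da_1\w\cdots\w da_k$, and injectivity follows from the bijectivity of $\phi_1$ together with the freeness of the wedge construction over $\cO^1$. It then remains to check that $\phi$ intertwines the differentials. In degree zero, $\phi_1\circ d^1=d$ is immediate from $(d^1a)(u)=u(a)=(da)(u)$. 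For higher degrees, the Chevalley--Eilenberg differential $d$ on $\cO^*\cA$ is the unique extension of its degree-zero part obeying the graded Leibniz rule (\ref{ws98}); similarly, the jet-defined operators $d^k$ in (\ref{5.4}) satisfy the graded Leibniz rule and agree with $d^1$ in degree zero. Uniqueness of such an extension on the minimal subring then gives $\phi\circ d^k=d\circ\phi$ throughout.

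The main obstacle I anticipate is the verification that the jet-defined operators $d^k=h^k\circ J^1$ in (\ref{5.4}) actually satisfy the graded Leibniz rule with respect to the wedge product on $\cO^*=\w\cO^1$. This is the only step that is not essentially formal; it requires unpacking the splitting (\ref{mos058}) applied to $\cJ^1(\cO^{k-1})$ in order to show that $h^k$ is compatible with wedge multiplication in the sense needed to reproduce (\ref{ws98}). Once that compatibility is established, the identification of the two complexes reduces to the universal property argument above, completing the proof.
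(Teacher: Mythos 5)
Your degree-one step is precisely the paper's entire proof: the paper simply compares (\ref{5.81a}) with (\ref{spr708}) to conclude that $\cO^1\cA=\cO^1\subseteq\cO^1[\gd\cA]=\gd\cA^*$ with $d^1=d$ there, and treats everything in higher degree as immediate, since both sides are the exterior algebra generated over $\cA$ by this common degree-one piece. So your route is the same one, only spelled out further; the assessment therefore turns on the two extra steps you add.

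Those two steps are where the gaps sit. First, injectivity of $\phi$ in degrees $k\geq 2$ does not follow from ``bijectivity of $\phi_1$ together with the freeness of the wedge construction'': the induced map $\w^k\cO^1\to\cO^k[\gd\cA]$ goes from the abstract exterior power to alternating $\cA$-multilinear maps on $\gd\cA=\cO^{1*}$, and for a non-projective module $\cO^1$ such a map can kill nonzero (e.g.\ torsion) elements of $\w^k\cO^1$ even when $\cO^1\to\cO^{1**}$ is injective; no universal-property or ``freeness'' argument repairs this, and one really needs hypotheses such as $\cO^1$ projective of finite rank (cf.\ Theorem \ref{nn228}). The paper glosses over exactly this point by writing $\cO^k=\cO^k\cA$ without comment, so you are not missing an idea the paper supplies, but your stated justification is not an argument. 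Second, the graded Leibniz property of the jet-defined operators $d^k$ (\ref{5.4}), which you correctly isolate as the main obstacle, is left unestablished (``once that compatibility is established\dots''); Section 2.3 only asserts the existence and nilpotency of the $d^k$, and the paper's proof of Theorem \ref{nn241} never returns to it. So as written your proof, like the paper's, is complete only modulo these two verifications; granting them, it collapses to the paper's one-line comparison of (\ref{5.81a}) and (\ref{spr708}).
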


\begin{proof}
Comparing the equalities (\ref{5.81a}) and (\ref{spr708}) shows
that $d^1=d$ on an $\cA$-module
\be
\cO^1\cA=\cO^1\subseteq \cO^1[\gd\cA]=\gd\cA^*,
\ee
generated by elements $d^1a$, $a\in\cA$. The complex (\ref{t10})
is called the de Rham complex of a $\cK$-ring $\cA$, and its
cohomology $H^*(\cA)$ is said to be the de Rham cohomology of
$\cA$.
\end{proof}

\begin{example} \label{nn224} \mar{nn224}
Let $\cP[Q]$ be a polynomial ring of a free $\cK$-module $Q$ of
finite rank in Example \ref{nn221}. Since $\gd \cP[Q]$ is a free
$\cP[Q]$-module of finite rank, its $\cP[Q]$-dual $\cO^1[\gd
\cP[Q]]$ is a free $\cP[Q]$-module of finite rank possessing the
dual basis $\{dq^i\}$ such that the relation (\ref{nn80}) takes a
form $\dr_i\rfloor da^j=\dl^j_i$. It follows that the
Chevalley--Eilenberg differential calculus (\ref{nn99}) of
$\cP[Q]$ consists of monomials
\mar{nn227}\beq
\phi=\phi_{i_1\ldots i_r}dq^{i_1}\w\cdots\w dq^{i_r}, \qquad
\phi_{i_1\ldots i_r}\in \cP[Q]. \label{nn227}
\eeq
Thus, it is the minimal Chevalley--Eilenberg differential calculus
(\ref{t10}):
\mar{nn225}\beq
0\to\cK\ar \cP[Q] \ar^d\cO^1\ar^d \cdots  \cO^r\ar^d \cdots,
\label{nn225}
\eeq
where the Chevalley--Eilenberg coboundary operator reads
\mar{nn226}\beq
d\phi =\dr_k(\phi_{i_1\ldots i_r})dq^k\w dq^{i_1}\w\cdots\w
dq^{i_r}. \label{nn226}
\eeq
\end{example}

\subsection{Connections on modules and rings}

We employ  the jets of modules in Section 2.3 in order to
introduce connections on modules over commutative rings
\cite{book00,book12}.

It is readily observed that a first-order jet module $\cJ^1(P)$ of
an $\cA$-module $P$ is isomorphic to a tensor product
\be
\cJ^1(P)=\cJ^1\ot P, \qquad (a\ot_1 bp) \leftrightarrow (a\ot_1
b)\ot p.
\ee
of an $\cA^\bll$-module $\cJ^1$ and an $\cA$-module $P$. Then the
isomorphism (\ref{mos058}) leads to the splitting
\mar{mos071}\ben
&& \cJ^1(P)= (\cA\oplus \cO^1)\ot P=
(\cA \ot P)\oplus (\cO^1\ot P), \label{mos071}\\
&& a\ot_1 bp\leftrightarrow  (ab +ad^1(b))\ot p. \nonumber
\een
Applying the epimorphism $\pi^1_0$ (\ref{+a13}) to this splitting,
one obtains a short exact sequence of $\cA$- and
$\cA^\bll$-modules
\mar{+175}\ben
&& 0\ar \cO^1\ot P\to \cJ^1(P)\ar^{\pi^1_0} P\ar 0, \label{+175}\\
&&  (a\ot_1 b -ab\ot_1 \bb)\ot p\to
(c\ot_1 \bb+ a\ot_1 b -ab\ot_1 \bb)\ot p \to cp. \nonumber
\een
This exact sequence canonically is split by an $\cA^\bll$-module
morphism
\be
P\ni ap \to \bb\ot ap= a\ot p + d^1(a)\ot p\in\cJ^1(P).
\ee
However, it need not be split by an $\cA$-module morphism, unless
$P$ is a projective $\cA$-module.

\begin{definition} \label{+176} \mar{+176}
A connection on an $\cA$-module $P$ is defined as an $\cA$-module
morphism
\mar{+179'}\beq
\G:P\to \cJ^1(P), \qquad \G (ap)=a\G(p), \label{+179'}
\eeq
which splits the exact sequence (\ref{+175}) or, equivalently, the
exact sequence
\mar{+183}\beq
0\to \cO^1\ot P\to (\cA\oplus \cO^1)\ot P\to P\to 0. \label{+183}
\eeq
\end{definition}

If the splitting $\G$ (\ref{+179'}) exists, it reads
\be
J^1p=\G(p) + \nabla(p),
\ee
where $\nabla$ is the complementary morphism
\mar{+179}\beq
\nabla: P\to \cO^1\ot P, \qquad \nabla(p)= \bb\ot_1 p-
\G(p).\label{+179}
\eeq
Though this complementary morphism in fact is a   covariant
differential on a module $P$, it is traditionally called the
connection on a module. It satisfies the   Leibniz rule
\mar{+180}\beq
\nabla(ap)= d^1a \ot p +a\nabla(p), \label{+180}
\eeq
i.e., $\nabla$ is  an $(\cO^1\ot P)$-valued first-order
differential operator on $P$. Thus, we come to the following
equivalent definition of a connection \cite{kosz60}.

\begin{definition} \label{+181} \mar{+181}
A   connection on an $\cA$-module $P$ is the $\cK$-module morphism
$\nabla$ (\ref{+179}) which obeys the Leibniz rule (\ref{+180}).
Sometimes, it is called the   Koszul connection.
\end{definition}

The morphism $\nabla$ (\ref{+179}) naturally can be extended to a
morphism
\be
\nabla: \cO^1\ot P\to \cO^2\ot P.
\ee
Then we have a morphism
\be
R=\nabla^2: P\to \cO^2 \ot P,
\ee
called the   curvature of a connection $\nabla$ on a module $P$.

In view of the isomorphism (\ref{5.81}), any connection in
Definition \ref{+181} determines a connection in the following
sense.

\begin{definition} \label{1016} \mar{1016}
A   connection on an $\cA$-module $P$ is an $\cA$-module morphism
\mar{1017}\beq
\gd\cA\ni u\to \nabla_u\in \dif_1(P,P) \label{1017}
\eeq
such that first-order differential operators $\nabla_u$ obey the
  Leibniz rule
\mar{1018}\beq
\nabla_u (ap)= u(a)p+ a\nabla_u(p), \quad a\in \cA, \quad p\in P.
\label{1018}
\eeq
\end{definition}

Definitions \ref{+181} and \ref{1016} are equivalent if
$\cO^1=\gd\cA^*=\cO^{1**}$. For instance, this is the case of a
projective module $\cO^1$ of finite rank (Theorem \ref{nn228}).

A   curvature of the connection (\ref{1017}) is defined as a
zero-order differential operator
\mar{+100}\beq
R(u,u')=[\nabla_u,\nabla_{u'}] -\nabla_{[u,u']} \label{+100}
\eeq
on a module $P$ for all $u,u'\in \gd\cA$.

Let $P$ be a commutative $\cA$-ring and $\gd P$ the derivation
module of $P$ as a $\cK$-ring. Definition \ref{1016} is modified
as follows.

\begin{definition} \label{mos088} \mar{mos088}
A   connection on an $\cA$-ring $P$ is an $\cA$-module morphism
\mar{mos090}\beq
\gd\cA\ni u\to \nabla_u\in \gd P, \label{mos090}
\eeq
which is a connection on $P$ as an $\cA$-module, i.e., obeys the
Leinbniz rule (\ref{1018}).
\end{definition}

Two such connections $\nabla_u$ and $\nabla'_u$ differ from each
other in a derivation of an $\cA$-ring $P$, i.e., which vanishes
on $\cA\subset P$. A curvature of the connection (\ref{mos090}) is
given by the formula (\ref{+100}).

\section{Local-ringed spaces}

Local-ringed spaces are sheaves of local rings. For instance,
smooth manifolds, represented by sheaves of real smooth functions,
constitute a subcategory of the category of local-ringed spaces
(Section 4).

A sheaf $\gR$ on a topological space $X$ is said to be the ringed
space if its stalk $\gR_x$ at each point $x\in X$ is a commutative
ring \cite{book05,book12,tenn}. A ringed space often is denoted by
a pair $(X,\gR)$ of a topological space $X$ and a sheaf $\gR$ of
rings on $X$ which are called the   body and the   structure sheaf
of a ringed space, respectively.

In comparison with morphisms of sheaves on the same topological
space in Section 8.3, morphisms of ringed spaces are defined to be
particular morphisms of sheaves on different topological spaces as
follows.

\begin{example} \label{nn310} \mar{nn310}
Let $\vf:X\to X'$ be a continuous map. Given a sheaf $S$ on $X$,
its   direct image $\vf_*S$ on $X'$ is generated by the presheaf
of assignments
\be
X'\supset U'\to S(\vf^{-1}(U'))
\ee
for any open subset $U'\subset X'$. Conversely, given a sheaf $S'$
on $X'$, its   inverse image $\vf^*S'$ on $X$ is defined as the
pull-back onto $X$ of a topological fibre bundle $S'$ over $X'$,
i.e., $\vf^*S'_x=S_{\vf(x)}$. This sheaf is generated by the
presheaf which associates to any open $V\subset X$ the direct
limit of modules $S'(U)$ over all open subsets $U\subset X'$ such
that $V\subset f^{-1}(U)$.
\end{example}

\begin{example} \label{spr201} \mar{spr201}
Let $i:X\to X'$ be a closed subspace of $X'$. Then $i_*S$ is a
unique sheaf on $X'$ such that
\be
i_*S|_X=S, \qquad i_*S|_{X'\setminus X}=0.
\ee
Indeed, if $x'\in X\subset X'$, then $i_*S(U')= S(U'\cap X)$ for
any open neighborhood $U$ of this point. If $x'\not\in X$, there
exists its neighborhood $U'$ such that $U'\cap X$ is empty, i.e.,
$i_*S(U')=0$. A sheaf $i_*S$ is called the   trivial extension
 of a sheaf $S$.
\end{example}

By a   morphism of ringed spaces $(X,\gR)\to (X',\gR')$ is meant a
pair $(\vf,\Phi)$ of a continuous map $\vf:X\to X'$ and a sheaf
morphism $\Phi:\gR'\to \vf_*\gR$ or, equivalently, a sheaf
morphisms $\vf^*\gR'\to \gR$ \cite{tenn}. Restricted to each
stalk, a sheaf morphism $\Phi$ is assumed to be a ring
homomorphism. A morphism of ringed spaces is said to be:

$\bullet$ a monomorphism if $\vf$ is an injection and $\Phi$ is an
epimorphism,

$\bullet$ an epimorphism if $\vf$ is a surjection, while $\Phi$ is
a monomorphism.

\begin{definition} \label{nn250} \mar{nn250}
A ringed space is said to be the   local-ringed space (the
geometric space in the terminology of \cite{tenn}) if it is a
sheaf of local rings.
\end{definition}

A key point of the study of local-ringed space is that any
projective module over a local ring is free (Theorem \ref{nn1}).

\begin{example} \label{nn230} \mar{nn230}
A sheaf $C^0_X$ of germs of continuous real functions on a
topological space $X$ (Example \ref{spr7}) is a local-ringed
space. Its stalk $C^0_x$, $x\in X$, contains a unique maximal
ideal of germs of functions vanishing at $x$.
\end{example}

\subsection{Differential calculus over local-ringed spaces}

Let $(X,\gR)$ be a local-ringed space. By a   sheaf $\gd \gR$ of
derivations of the sheaf $\gR$ is meant a subsheaf of
endomorphisms of $\gR$ such that any section $u$ of $\gd \gR$ over
an open subset $U\subset X$ is a derivation of a ring $\gR(U)$. It
should be emphasized that, since the monomorphism (\ref{+212}) is
not necessarily an isomorphism, a derivation of a ring $\gR(U)$
need not be a section of a sheaf $\gd \gR|_U$. Namely, it may
happen that, given open sets $U'\subset U$, there is no
restriction morphism
\be
\gd (\gR(U)) \to\gd (\gR(U')).
\ee

Given a local-ringed space $(X,\gR)$, a sheaf $P$ on $X$ is called
the   sheaf of $\gR$-modules if every stalk $P_x$, $x\in X$, is an
$\gR_x$-module or, equivalently, if $P(U)$ is an $\gR(U)$-module
for any open subset $U\subset X$. A sheaf of $\gR$-modules $P$ is
said to be   locally free if there exists an open neighborhood $U$
of every point $x\in X$ such that $P(U)$ is a free
$\gR(U)$-module. If all these free modules are of finite rank
(resp. of the same finite rank), one says that $P$ is of finite
type (resp. of constant rank). The structure module of a locally
free sheaf is called the  locally free module.

Let $(X,\gR)$ be a local-ringed space and $\gP$ a sheaf of
$\gR$-modules on $X$.  For any open subset $U\subset X$, let us
consider a jet module $\cJ^1(\gP(U))$ of a module $\gP(U)$. It
consists of elements of $\gR(U)\ot \gP(U)$ modulo the pointwise
relations (\ref{mos041}). Hence, there is the restriction morphism
\be
\cJ^1(\gP(U))\to \cJ^1(\gP(V))
\ee
for any open subsets $V\subset U$, and the jet modules
$\cJ^1(\gP(U))$ constitute a presheaf. This presheaf defines the
  sheaf $\gj^1\gP$ of jets of $\gP$ (or simply the   jet
sheaf).  The jet sheaf $\gj^1\gR$ of a sheaf $\gR$ of local rings
is introduced in a similar way. Since the relations (\ref{mos041})
and (\ref{5.53}) on a ring $\gR(U)$ and modules $\gP(U)$,
$\cJ^1(\gP(U))$, $\cJ^1(\gR(U))$  are pointwise relations for any
open subset $U\subset X$, they commute with the restriction
morphisms. Therefore, the direct limits of the quotients modulo
these relations exist \cite{massey}. Then we have the sheaf
$\cO^1\gR$ of one-forms over a sheaf $\gR$, the sheaf isomorphism
\be
\gj^1(\gP)=(\gR\oplus \cO^1\gR)\ot \gP,
\ee
and the exact sequences of sheaves
\mar{+213}\ben
&&0\to \cO^1\gR\ot \gP\to \gj^1(\gP)\to \gP\to 0, \label{+213}\\
&& 0\to \cO^1\gR\ot \gP\to (\gR\oplus \cO^1\gR)\ot \gP\to \gP\to 0. \label{+214}
\een
They reflect the quotient (\ref{+216}), the isomorphism
(\ref{mos071}) and the exact sequences of modules (\ref{+175}),
(\ref{+183}), respectively.

\begin{remark}
It should be emphasized that, because of the inequality
(\ref{+212}), the duality relation (\ref{5.81}) is not extended to
the sheaves $\gd \gR$ and $\cO^1\gR$ in general, unless $\gd \gR$
and $\cO^1\gR$ are locally free sheaves of finite rank. If $\gP$
is a locally free sheaf of finite rank, so is $\gj^1\gP$.
\end{remark}

Following Definitions \ref{+176} and \ref{+181} of a connection on
modules, we come to the following notion of a connection on
sheaves.

\begin{definition} \label{+217} \mar{+217}
Given a local-ringed space $(X,\gR)$ and a sheaf $\gP$ of
$\gR$-modules on $X$, a   connection on a sheaf $\gP$ is defined
as a splitting of the exact sequence (\ref{+213}) or,
equivalently, the exact sequence (\ref{+214}).
\end{definition}

Theorem \ref{spr30} leads to the following compatibility of the
notion of a connection on sheaves with that of a connection on
modules.

\begin{proposition} \label{+219} \mar{+219}
If there exists a connection on a sheaf $\gP$ in Definition
\ref{+217}, then there exists a connection on a module $\gP(U)$
for any open subset $U\subset X$. Conversely, if for any open
subsets $V\subset U\subset X$ there are connections on modules
$\gP(U)$ and $\gP(V)$ related by the restriction morphism, then
the sheaf $\gP$ admits a connection.
\end{proposition}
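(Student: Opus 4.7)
The plan is to move between Definitions \ref{+217} and \ref{+176} by exploiting that the sheaf $\gj^1\gP$ is, by construction, associated to the presheaf $U\mapsto\cJ^1(\gP(U))$, and that the sheaf exact sequence (\ref{+213}) is the section-wise sheafification of the family of module-level exact sequences (\ref{+175}) taken with $P=\gP(U)$, one for each open $U\subset X$. Definition \ref{+217} is a splitting of the former, Definition \ref{+176} a splitting of the latter; the proposition is essentially a dictionary between them.

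For the forward implication, I would start with a sheaf morphism $\G:\gP\to\gj^1\gP$ splitting (\ref{+213}) and, for a fixed open $U\subset X$, take sections over $U$ to obtain an $\gR(U)$-linear map $\G(U):\gP(U)\to\gj^1\gP(U)$. Because the relations defining $\cJ^1$ are pointwise in $X$ (so they commute with restriction morphisms, as noted in the excerpt) and by Theorem \ref{spr30}, this $\G(U)$ descends to an $\gR(U)$-module morphism $\gP(U)\to\cJ^1(\gP(U))$ which splits the module sequence (\ref{+175}) with $P=\gP(U)$. By Definition \ref{+176} this is a connection on the $\gR(U)$-module $\gP(U)$, and the construction works for every open $U$.

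For the converse, I would assume that for each pair $V\subset U\subset X$ we are given connections $\G_U:\gP(U)\to\cJ^1(\gP(U))$ and $\G_V:\gP(V)\to\cJ^1(\gP(V))$ intertwined by the restriction morphisms on both sides. The assignment $U\mapsto\G_U$ is then a morphism of presheaves from $\gP$ to the presheaf $U\mapsto\cJ^1(\gP(U))$. Since $\gP$ is already a sheaf and $\gj^1\gP$ is the sheaf associated to the target presheaf, the universal property of sheafification produces a sheaf morphism $\G:\gP\to\gj^1\gP$. Because each $\G_U$ splits (\ref{+175}) and by Theorem \ref{spr30} splittings pass to the associated sheaves, $\G$ splits (\ref{+213}), giving a connection on the sheaf $\gP$ in the sense of Definition \ref{+217}.

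The main obstacle is the forward direction: a priori one only knows that $\gj^1\gP$ is generated by the presheaf $U\mapsto\cJ^1(\gP(U))$, so $\gj^1\gP(U)$ need not coincide with $\cJ^1(\gP(U))$, and one must ensure that sheaf-level splitting data can be faithfully extracted at the module level. This is the precise point where Theorem \ref{spr30} is invoked; the pointwise character of the jet relations (\ref{mos041}) and (\ref{5.53}) makes its hypotheses applicable, after which the remainder of the argument is a bookkeeping translation between sheaf and module notation.
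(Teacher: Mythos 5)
Your overall route is the one the paper intends: it offers no explicit proof of Proposition \ref{+219} beyond the sentence invoking Theorem \ref{spr30}, and your converse direction is exactly that argument. Two remarks on the converse: the compatible family $\{\G_U\}$ is a splitting of the \emph{presheaf} sequence $0\to \cO^1(\gR(U))\ot\gP(U)\to\cJ^1(\gP(U))\to\gP(U)\to 0$, and the statement that such a splitting passes to the generated sheaves is Theorem \ref{spr29}, not Theorem \ref{spr30} (which goes in the opposite direction, from split sheaf sequences to their canonical presheaves); with that citation corrected, this half is sound, and the pointwise character of the relations (\ref{mos041}), (\ref{5.53}) is indeed what makes the $\cJ^1(\gP(U))$ into a presheaf in the first place.

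The forward direction, however, contains an unjustified step. From a splitting $\G$ of (\ref{+213}), Theorem \ref{spr30} gives you a splitting of the canonical presheaf sequence, i.e.\ of $0\to(\cO^1\gR\ot\gP)(U)\to\gj^1\gP(U)\to\gP(U)\to 0$; it says nothing about the module-level sequence (\ref{+175}) for $P=\gP(U)$, whose middle term is $\cJ^1(\gP(U))$. Your claim that $\G(U):\gP(U)\to\gj^1\gP(U)$ ``descends to'' a morphism $\gP(U)\to\cJ^1(\gP(U))$ presupposes an arrow $\gj^1\gP(U)\to\cJ^1(\gP(U))$, but the canonical morphism runs the other way, from the generating presheaf to the sections of the associated sheaf, and it is in general neither injective nor surjective (cf.\ the discussion around (\ref{+212})). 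So Theorem \ref{spr30} does not resolve the obstacle you yourself flag. Either one reads ``connection on the module $\gP(U)$'' as a splitting of the structure-module sequence of the sheaves --- equivalently a Koszul-type operator $\nabla_U:\gP(U)\to(\cO^1\gR\ot\gP)(U)$ obtained by taking sections of the morphism (\ref{+3}), which is evidently what the paper has in mind and is all that Theorem \ref{spr30} delivers --- or, if one insists on Definition \ref{+176} literally, one needs an extra argument identifying $\cJ^1(\gP(U))$ with $\gj^1\gP(U)$ (for instance, hypotheses under which the presheaf $U\mapsto\cJ^1(\gP(U))$ already satisfies the sheaf axioms, as happens in the locally free, finite-rank situation of Section 4). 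As written, your forward implication asserts more than your cited tools prove.
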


As an immediate consequence of Proposition \ref{+219}, we find
that the exact sequence of sheaves (\ref{+214}) is split iff there
exists a sheaf morphism
\mar{+3}\beq
\nabla: \gP\to \cO^1\gR\ot \gP, \label{+3}
\eeq
satisfying the Leibniz rule
\be
\nabla (fs)=df\ot s + f\nabla(s), \qquad f\in \cA(U), \qquad s\in
\gP(U),
\ee
for any open subset $U\in X$. It leads to the following equivalent
definition of a connection on sheaves in the spirit of Definition
\ref{+181}.

\begin{definition} \label{+4} \mar{+4}
The sheaf morphism (\ref{+3}) is a   connection on a sheaf $\gP$.
\end{definition}

Similarly to the case of connections on modules, a   curvature of
the connection (\ref{+3}) on a sheaf $\gP$ is given by the
expression
\mar{+105}\beq
R=\nabla^2:\gP\to \cO^2_X\ot \gP. \label{+105}
\eeq

The exact sequence (\ref{+214}) need not be split. One can obtain
the following criteria of the existence of a connection on a
sheaf.

Let $\gP$ be a locally free sheaf of $\gR$-modules. Then we have
the exact sequence of sheaves
\be
0\to \hm(\gP,\cO^1\gR\ot \gP)\to \hm(\gP,(\gR\oplus\cO^1\gR)\ot
\gP) \to \hm(\gP,\gP)\to 0
\ee
and the corresponding exact sequence (\ref{spr227}) of the
cohomology groups
\be
&& 0\to H^0(X;\hm(\gP,\cO^1\gR\ot \gP)) \to H^0(X;
\hm(\gP,(\gR\oplus\cO^1\gR)\ot \gP))\to \\
&& \qquad H^0(X;\hm(\gP,\gP))\to H^1(X;\hm(\gP,\cO^1\gR\ot \gP))\to \cdots.
\ee
The identity morphism $\id :\gP\to \gP$ belongs to
$H^0(X;\hm(\gP,\gP))$. Its image in $H^1(X;\hm(\gP,\cO^1\gR\ot
\gP))$ is called the   Atiyah class.  If this class vanishes,
there exists an element of $\hm(\gP,(\gR\oplus\cO^1\gR)\ot \gP))$
whose image is $\id \gP$, i.e., a splitting of the exact sequence
(\ref{+214}).

\subsection{Affine schemes}

One can associate to any commutative ring $\cA$ a local-ringed
space as follows.

Let $\cI$ be a prime ideal of $\cA$. Then $\cA\setminus\cI$ is a
multiplicative subset of $\cA$ (Remark \ref{ws90}) and
$(\cA\setminus\cI)^{-1}\cA$ is a local ring.

Let Spec$\,\cA$ be a set of prime ideals of a ring $\cA$. It is
called the spectrum of $\cA$. Let us assign to each ideal $\cI$ of
$\cA$ a set
\mar{t20}\beq
V(\cI)=\{\, x\in \mathrm{Spec}\,\cA\,:\, \cI\subseteq x\,\}.
\label{t20}
\eeq
These sets possess the properties
\be
&& V(\{0\})=\mathrm{Spec}\,\cA, \qquad V(\cA)=\emptyset, \\
&& \op\cap_i V(\cI_i)=V(\op\sum_i \cI_i), \qquad V(\cI)\cup
V(\cI')=V(\cI\cI').
\ee
In view of these properties, one can regard the sets (\ref{t20})
as closed sets of some topology on a spectrum Spec$\,\cA$. It is
called the Zariski topology.  A base for this topology consists of
sets
\mar{t21}\beq
U(a)=\{\, x\in \mathrm{Spec}\,\cA\, :\, a\not\in x\,\} =
\mathrm{Spec}\,\cA \setminus V(\cA a) \label{t21}
\eeq
as $a$ runs through $\cA$. In particular, a set of closed points
is nothing but the set Specm$\,\cA$ of all maximal ideals of
$\cA$. Endowed with the relative Zariski topology, it is called
the maximal spectrum of $\cA$. A ring morphism $\zeta:\cA\to\cA'$
yields a continuous map
\mar{t30}\beq
\zeta^\natural:\mathrm{Spec}\,\cA'\ni x'\mapsto \zeta^{-1}(x')\in
\mathrm{Spec}\,\cA. \label{t30}
\eeq
In particular, let $\pi:\cA\to\cA/\cI$ be the quotient morphism
with respect to an ideal $\cI$ of $\cA$. Then $\zeta^\natural$
(\ref{t30}) is a homeomorphism of Spec$(\cA/\cI)$ onto a closed
subspace $V(\cI)\subset \mathrm{Spec}\,\cA$.

Given a commutative ring $\cA$ and its spectrum Spec$\,\cA$, one
can define a sheaf $\gA$ on Spec$\,\cA$ whose stalk at a point
$x\in \mathrm{Spec}\,\cA$ is a local ring $\gA_x= (\cA\setminus
x)^{-1}\cA$. A structure ring $\gA(\mathrm{Spec}\,\cA)$ of global
sections of a sheaf $\gA$ is exactly a  ring $\cA$ itself. The
local-ringed space $(\mathrm{Spec}\,\cA,\gA)$ is called an affine
scheme \cite{lomb,shaf,ueno}. A local-ringed space $(X,\gR)$ which
is locally isomorphic to an affine scheme is called a scheme.  Let
us recall the following standard notions.

(i) If a scheme $(X,\gR)$ has an affine open cover
$\{U_i=\mathrm{Spec}\,\cA_i\}$ such that every $\cA_i$ is a
Noetherian ring (i.e., any ideal of $\cA_i$ is finitely
generated), $(X,\gR)$ is said to be locally Noetherian. A locally
Noetherian scheme is called Noetherian if its body $X$ is
quasi-compact.

(ii) A scheme $(X,\gR)$ is called reduced if the stalk of $\gR$ at
each point of $X$ has no nilpotent elements.

(iii) A scheme $(X,\gR)$ is said to be irreducible
 if its body $X$ is not a union of two
proper closed subsets.

(iv) A scheme is called integral if it is reduced and irreducible.

A morphism of schemes, by definition, is a morphism between them
as local-ringed spaces. Given a morphism of schemes
\mar{ws92}\beq
\vf:(X,\gR)\to (X',\gR'), \label{ws92}
\eeq
  $(X,\gR)$ is said to be a scheme over $(X',\gR')$.
 The morphism $\vf$ (\ref{ws92}) is called
separated if the range of the diagonal morphism
\be
X\to X\op\times_{X'} X
\ee
is closed. In this case, one says that $(X,\gR)$ is separated over
$X'$. A scheme $(X,\gR)$ is called separated if it is a separated
over Spec$\,\mathbb Z$. All affine schemes are separated.

A morphism of schemes
\be
(X,\gR)\to (X'=\mathrm{Spec}\,\cA,\gA)
\ee
is said to be locally of finite type (resp. of finite type) if
$(X,\gR)$ has an open affine cover (resp. a finite open affine
cover) $\{U_i=\mathrm{Spec}\,\cA_i\}$ such that each $\cA_i$ is a
finitely generated $\cA$-algebra. A general morphism of schemes
$\vf$ (\ref{ws92}) is said to be locally of finite type (resp. of
finite type) if there is an open affine cover $\{V_i\}$ of $X'$
such that every restriction of $\vf$ to $\vf^{-1}(V_i)$ is locally
of finite type (resp. of finite type). In this case, one says that
$(X,\gR)$ is locally of finite type (resp. of finite type) over
$X'$.

\begin{example} \label{ws96} \mar{ws96}
Let $\cK$ be a field. Its Spec$\,\cK$ consists only of one point
with $\cK$ as the stalk of the structure sheaf. A scheme of finite
type over Spec$\,\cK$ is called an algebraic scheme
 over $\cK$.
\end{example}

Let $(X,\gR)$ be a ringed space. A sheaf of $\gR$-modules $S$ is
said to be quasi-coherent  if for each point $x$ of $X$ there
exists a neighborhood $U$ of $x$ and an exact sequence
\be
M\ar N\ar S|_U\to 0,
\ee
where $M$ and $N$ are free $\gR|_U$-modules. Let $S$ be a locally
free sheaf of $\gR$-modules of finite type. It is said to be of
finite presentation if, locally, there exists an exact sequence
\be
\gR^m\ar \gR^n\ar S\to 0,
\ee
where $m$ and $n$ are positive integers (which need not be
globally constant). A locally free sheaf of finite type is called
 coherent if the kernel of any homomorphism $\gR|_U^n\to S|_U$,
where $n$ is an arbitrary positive integer and $U$ is an open set,
is of finite type. Obviously, if $S$ is coherent, then it is of
finite presentation, and is quasi-coherent. If $\gR$ itself is
coherent as a sheaf of $\gR$-modules, then it is called the
coherent sheaf of rings.  In this case, every sheaf of
$\gR$-modules of finite presentation is coherent. For instance,
the structure sheaf of a locally Noetherian scheme is a coherent
sheaf of rings.

Let $(X=\mathrm{Spec}\,\cA,\gR=\gA)$ be an affine scheme. Then
every quasi-coherent sheaf $S$ of $\gA$-modules on $X$ is
generated by its global sections. The correspondence $S\to S(X)$
defines an equivalence between the category of quasi-coherent
sheaves on $X$ and the category of $\cA$-modules. If $\cA$ is
Noetherian, then the coherent sheaves and the finite $\cA$-modules
correspond to each other under this equivalence.

Let $(X,\gR)$ be a separated scheme and $\gU=\{U_i\}$ an affine
open cover of $X$. For each quasi-coherent sheaf $S$ of
$\gR$-modules, the cohomology $H^*(X;S)$ of $X$ with coefficients
in $S$ is canonically isomorphic to the cohomology
$H^*(\gU;S(U))$. One defines the  cohomological dimension cd$(X)$
of a scheme $(X,\gR)$ as the largest integer $r$ such that
$H^m(X;S)\neq 0$ for a quasi-coherent sheaf of $\gR$-modules on
$X$. For instance, if $X$ is an affine scheme, then cd$(X)=0$. The
converse is true under the assumption that $X$ is Noetherian.

\subsection{Affine varieties}

Let $\cK$ throughout this Section be an algebraically closed
field, i.e., any polynomial of non-zero degree with coefficients
in $\cK$ has a root in $\cK$. The reason is that, dealing with
non-linear algebraic equations, one can not expect a simple
clear-cut theory, without assuming that a field is algebraically
closed. If $\cK$ fails to be algebraically closed, one can extend
it in an appropriate way.

Let a commutative $\cK$-ring $\cA$ be finitely generated, and let
it possess no nilpotent elements. Then it is isomorphic to the
quotient of some polynomial $\cK$-ring which is the coordinate
ring (\ref{nn260}) of a certain affine variety (Theorem
\ref{ws93}).

A subset of an $n$-dimensional affine space $\cK^n$ is called an
affine variety  if it is a set of zeros (common roots) of some set
of polynomials of $n$ variables with coefficients in $\cK$
\cite{lomb,shaf}. Unless otherwise stated, the dimension $n$ holds
fixed. Let $\cK[x]$ be a ring of polynomials of $n$ variables with
coefficients in a field $\cK$ (Example \ref{ws40}). Given an
affine variety $\cV$, a set $I(\cV)$ of polynomials in $\cK[x]$
which vanish at every point of $\cV$ is an ideal of $\cK[x]$
called the characteristic ideal of $\cV$. Herewith, $\cV=\cV'$ if
and only if $I(\cV)=I(\cV')$. Therefore, an affine variety $\cV$
can be given by the generating set of its characteristic ideal
$I(\cV)$, i.e., by a finite system $f_i=0$ of polynomials
$f_i\in\cK[x]$.

An affine variety which is a subset of another affine variety is
called a subvariety.  An affine variety is said to be irreducible
if it is not the union of two proper subvarities. A maximal
irreducible subvariety of an affine variety is called its
irreducible component. Note that any affine variety can be written
uniquely as the union of a finite number of irreducible
components. An affine variety $\cV$ is irreducible iff $I(\cV)$ is
a prime ideal. Moreover, let $\cI$ be a prime ideal of $\cK[x]$
and $\cV$ be an affine variety in $\cK^n$ of zeros of elements of
$\cI$. Then $I(\cV)=\cI$. This fact states one-to-one
correspondence between the prime ideals of $\cK[x]$ and the
irreducible affine varieties in $\cK^n$. In particular, maximal
ideals correspond to points of $\cK^n$.

The intersection and the union of subvarieties of an affine
variety $\cV$ are also subvarieties. Thus, subvarieties can be
taken as a system of closed sets of a topology on $\cV$ which is
called the Zariski topology on the affine variety $\cV$. Unless
otherwise stated, an affine variety is provided with this
topology.

Given an affine variety $\cV$, the factor ring
\mar{nn260}\beq
\cK_\cV=\cK[x]/I(\cV) \label{nn260}
\eeq
is called the coordinate ring of $\cV$. If an affine variety $\cV$
is irreducible, the ring $\cK_\cV$ (\ref{nn260}) has no divisor of
zero.

\begin{theorem} \label{ws93} \mar{ws93}
Let a $\cK$-ring $\cA$ be finitely generated, and let it possess
no nilpotent elements. Given a set $(a_1,\ldots,a_n)$ of
generating elements of $\cA$, let us consider the epimorphism
$\phi: \cK[x]\to\cA$ defined by the equalities $\phi(x_i)=a_i$.
Zeros of polynomials in Ker$\,\phi$ make up an affine variety whose
coordinate ring $\cK_\cV$ (\ref{nn260}) is exactly $\cA$.
\end{theorem}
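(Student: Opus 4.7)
The plan is to identify the candidate affine variety as the zero set of the kernel ideal, and then invoke Hilbert's Nullstellensatz (which applies since $\cK$ is assumed algebraically closed throughout this Section) to identify its characteristic ideal with that kernel.

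First I would set up the algebraic picture. The map $\phi:\cK[x_1,\ldots,x_n]\to\cA$ is an epimorphism by hypothesis, with some kernel $\cI=\Ker\,\phi$, so that $\cA\cong\cK[x]/\cI$ as $\cK$-rings. Since $\cK[x]$ is Noetherian (Hilbert basis theorem) and since $\cA$ has no nilpotent elements, $\cI$ is a \emph{radical} ideal: if $f^k\in\cI$ for some $k$, then $\phi(f)^k=0$ in $\cA$, forcing $\phi(f)=0$ and hence $f\in\cI$.

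Next I would produce the affine variety. Let
\[
\cV=\{p\in\cK^n:f(p)=0\text{ for all }f\in\cI\}.
\]
Because $\cI$ is finitely generated, $\cV$ is cut out by a finite system of polynomial equations, so it is an affine variety in the sense defined above. By construction every $f\in\cI$ vanishes on $\cV$, so $\cI\subseteq I(\cV)$. It remains to prove the reverse inclusion $I(\cV)\subseteq\cI$, and this is the one non-formal step.

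The reverse inclusion is the main obstacle, and it is exactly the content of Hilbert's Nullstellensatz: for any ideal $\cJ\subseteq\cK[x]$ over an algebraically closed field $\cK$, one has $I(V(\cJ))=\sqrt{\cJ}$. Applied to $\cJ=\cI$, this gives $I(\cV)=\sqrt{\cI}=\cI$, where the last equality uses the radicality of $\cI$ established in the first step. Thus $I(\cV)=\cI$, and therefore
\[
\cK_\cV=\cK[x]/I(\cV)=\cK[x]/\cI\cong\cA,
\]
which is the desired conclusion. The only genuinely non-elementary ingredient is the Nullstellensatz, which is precisely why the hypothesis that $\cK$ be algebraically closed was imposed at the start of this Section; the rest of the argument is the routine translation between finitely generated reduced $\cK$-algebras and their geometric incarnations as affine varieties.
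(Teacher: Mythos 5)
Your proof is correct. The paper states Theorem \ref{ws93} without giving a proof (it is quoted as a standard fact, with $\cK$ algebraically closed assumed throughout that Section), and your argument --- observing that $\Ker\,\phi$ is a radical ideal because $\cA$ has no nilpotents, and then applying Hilbert's Nullstellensatz to get $I(\cV)=\sqrt{\Ker\,\phi}=\Ker\,\phi$, whence $\cK_\cV=\cK[x]/I(\cV)\cong\cA$ --- is precisely the standard route the statement presupposes, so there is nothing to object to.
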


Let $R_\cV$ be the fraction field of a coordinate ring $\cK_\cV$
(Remark \ref{ws90}). It is called the function field of $\cV$.
There is the monomorphism $\cK_\cV\to R_\cV$ (\ref{ws91}). The
function field $R_\cV$ is finitely generated over $\cK$, and its
transcendence degree is called the dimension of the irreducible
affine variety $\cV$.

\begin{example} \label{nn262} \mar{nn262}
Let $\cA$ in Theorem \ref{ws93} be a polynomial $\cK$-ring
$\cK[x]$. Then Ker$\,\phi=0$, and it yields an affine variety
$\cV=\{0\}$ whose coordinate ring (\ref{nn260}) is
$\cK_\cV=\cK[x]$.
\end{example}

Let $\cW$ be an irreducible subvariety of $\cV$ and $I(\cW)$ a
subset of $\cK_\cV$ consisting of elements which vanish on $\cW$.
Then $I(\cW)$ is a prime ideal of $\cK_\cV$. Let us consider a
multiplicative subset $\cK_\cV\setminus I(\cW)$ and a subring
\be
R_\cW=(\cK_\cV\setminus I(\cW))^{-1} \cK_\cV
\ee
of $R_\cV$ (Remark \ref{ws90}). It is said to be a local ring of a
subvariety $\cW$. Functions $f\in R_\cW\subset R_\cV$ are called
regular at $\cW$. For a given function $f\in R_\cV$, a set of
points of $\cV$ where $f$ is regular is Zariski open. Given an
open subset $U\subset\cV$, let us denote $R_U$ a ring of regular
functions on $U$. Assigning $R_U$ to each open set $U$, one can
define a sheaf of rings $\gR_\cV$ of germs of regular functions on
$\cV$. Its stalk at a point $x\in\cV$ coincides with a local ring
$R_x$. A sheaf $\gR_\cV$ is called the structure sheaf of an
affine variety $\cV$. The pair $(\cV,\gR_\cV)$ is a local-ringed
space.

Let us consider a pair $(X,\gR)$ of a topological space $X$ and
some sheaf $\gR$ of germs of $\cK$-valued functions on $X$. This
pair is called a prealgebraic variety if $X$ admits a finite open
cover $\{U_i\}$ such that each $U_i$ is homeomorphic to some
affine variety $\cV_i$ and $\gR|_{U_i}$ is isomorphic to the
structure sheaf $\gR_{\cV_i}$ of $\cV_i$. Let us note that the
Cartesian product $\cV\times\cV'$ of affine varieties
$\cV\in\cK^n$ and $\cV'\in\cK^m$ is an affine variety in
$\cK^{n+m}$ though the Zariski topology on $\cV\times\cV'$ is
finer than the product topology. Since the Cartesian product
$X\times X'$ of prealgebraic varieties is locally a product of
affine varieties, this product is a prealgebraic variety. A
prealgebraic variety $(X,\gR)$ is said to be an algebraic variety
if the diagonal map $X\to X\times X$ is closed in the Zariski
topology of the product variety. This condition corresponds to
Hausdorff's separation axiom. If $\cW$ is a locally closed subset
(i.e., the intersection of open and closed sets) of an algebraic
variety, it becomes an algebraic variety in a natural manner since
the germs of regular functions at $x\in\cW$ are taken to be the
germs of functions on $\cW$ induced by functions in the stalk
$\gR_x$. The definitions of irreducibility and local rings of
subvarieties for algebraic varieties are given in the same manner
as before. From now on, by a variety  is meant an algebraic
variety. Any variety $(X,\gR)$, by definition, is a local-ringed
space.

\begin{remark} \label{nn261} \mar{nn261}
There is the following correspondence between the affine varieties
and the affine schemes. A $\cK$-ring $\cA$ in Theorem \ref{ws93}
is a case an algebraic scheme of finite type over a field $\cK$ in
Example \ref{ws96}, and it is associated to an affine variety in
accordance with this theorem. Conversely, every algebraic affine
variety $\cV$ yields the affine scheme Spec$\cK_\cV$ such that
there is one-to-one correspondence between the points of
Spec$\cK_\cV$ and the irreducible subvarieties of $\cV$.
\end{remark}

Given a variety, one says that its point $x$ is simple and that
$\cV$ is non-singular  or smooth  at $x$ if the local ring $R_x$
of $x$ is regular. Since a problem is local, one can assume that
$\cV$ is an affine variety in $\cK^n$. Then the simplicity of $x$
implies that $x$ is contained in only one irreducible component of
$\cV$ and, if this component is $r$-dimensional, there exists
$n-r$ polynomials $f_i(x)$ in the characteristic ideal $I(\cV)$ of
$\cV$ such that the rank of a matrix $(\dr f_i/\dr x_j)$ at $x$
equals $n-r$. A point of $\cV$ which is not simple is called a
singular point.  The set of singular points, called the  singular
locus of $\cV$,  is a proper closed subset of $\cV$. A variety
possessing no singular point is said to be smooth.
 A point $x$ of a variety $\cV$ is called
 normal if the local ring $R(x)$ is normal. A simple point is
normal. Normal points make up a non-empty open subset of $\cV$. An
irreducible variety whose points are all normal is called a normal
variety.

\begin{example} \label{ws97} \mar{ws97}
When $\cK=\mathbb C$, an algebraic variety $\cV$, called a complex
algebraic variety,  has the structure of a complex analytic
manifold so that the stalk $\gR_{\cV,x}=R_x$ at $x\in\cV$ contains
in the stalk $\mathbb C^h_{\cV,x}$ of germs of holomorphic
functions at $x$ and their completion coincide. If $x$ is a simple
point, $\mathbb C^h_{\cV,x}$ is the ring of converged power series
and its completion is the ring of formal power series.
\end{example}

A regular morphism $(\cV,\gR)\to (\cV',\gR')$ of varieties over
the same field $\cK$ is defined as a morphism of local-ringed
spaces
\be
\vf:\cV\to\cV', \qquad \Phi:\vf^*\gR'\to\gR,
\ee
where $\vf^*$ is the pull-back onto $\cV$ of $\cK$-valued
functions on $\cV'$. An isomorphism of varieties also is called a
 biregular morphism.

Let $\cV$ and $\cW$ be irreducible varieties. Let a closed subset
$T\subset \cV\times\cW$ be an irreducible variety such that a
closure of the range of the projection $T\to\cV$ coincides with
$\cV$. Then the function field $R_\cV$ of $\cV$ can be identified
with a subfield of $R_T$. If $R_\cV=R_T$, then $T$ is called a
rational morphism  of $\cV$ to $\cW$. One can show that, if
$T:\cV\to\cW$ is a rational morphism and $x\in\cV$ is a normal
point of $\cV$ such that $T(x)$ contains an isolated point, then
$T$ is regular at $x$.

Let $\cV$ be an $m$-dimensional irreducible affine variety. One
can associate to $\cV$ the two algebras $\dif_*(R_\cV)$ and
$\dif_*(\cK_\cV)$ of (linear) differential operators on the
function field $R_\cV$ and the coordinate ring $\cK_\cV$ of $\cV$,
respectively \cite{mcc}.

In this case of the function field $R_\cV$, one can choose  a
separating transcendence $\{x^1,\ldots,x^m\}$ basis for $R_\cV$
over $\cK$. Let us consider the derivation module $\gd R_\cV$ of
the $\cK$-field $R_\cV$. It is finitely generated by the
derivations $\dr_i$ of $R_\cV$ such that $\dr_i(x^j)=\dl_i^j$.
Moreover, any differential operator $\Delta\in\dif_r(R_\cV)$ is
uniquely expressed as a polynomial of $\dr_i$ with coefficients in
$R_\cV$. Let $\gd R_\cV^*$ be the $R_\cV$-dual of the derivation
module $\gd R_\cV$. Given the above mentioned transcendence basis
for $R_\cV$, it is finitely generated by the elements $dx_i$ which
are the duals of $\dr_i$. As a consequence, the
Chevalley--Eilenberg calculus over $R_\cV$ coincides with the
universal differential calculus $\cO^*(R_\cV)$ over $R_\cV$
(Remark \ref{nn102}).

The case of the coordinate ring $\cK_\cV$ is more subtle. In this
case, $\dif_*(\cK_\cV)$ is called the ring of differential
operators on an affine variety $\cV$. In general, there are no
global coordinates on $\cV$, but if $\cK$ is of characteristic
zero and $\cV$ is smooth, the structure of $\dif_*(\cK_\cV)$ is
still well understood. Namely, $\dif_*(\cK_\cV)$ is a simple (left
and right) Noetherian ring without divisors of zero, and it is
generated by finitely many elements of $\dif_1(\cK_\cV)$.

If $\cV$ is singular, the construction of $\dif_*(\cK_\cV)$ is
less clear. One can show that any differential operator on
$\cK_\cV\subset R_\cV$ admits a unique extension to a differential
operator of the same order on $R_\cV$. Thus, one can regard
$\dif_*(\cK_\cV)$ as a subalgebra of $\dif_*(R_\cV)$. Furthermore,
a differential operator $\Delta$ on $R_\cV$ which preserves
$\cK_\cV$ is a differential operator on $\cK_\cV$. In particular,
it follows that $\dif_*(\cK_\cV)$ has no zero divisors. In
contrast with the smooth case, $\dif_*(\cK_\cV)$ fails to be
generated by elements of $\dif_1(\cK_\cV)$ in general. One has
conjectured that this is true if and only if $\cV$ is smooth
\cite{nakai}. This conjecture has been proved for algebraic curves
\cite{mount} and, more generally, for varieties with smooth
normalization \cite{traves}.

Let us note that, if a field $\cK$ is of positive characteristic,
the ring $\dif_*(\cK_\cV)$ is not Noetherian, or finitely
generated, or without zero divisors \cite{smith}.

\section{Differential geometry of $C^\infty(X)$-modules}

Let $X$ be a smooth manifold (Remark \ref{ws1}). Similarly to a
sheaf $C^0_X$ of continuous functions in Example \ref{nn230}, a
sheaf $C^\infty_X$ of smooth real functions on $X$ (Example
\ref{spr7}) provides an important example of local-ringed spaces
that this Section is devoted to \cite{book05,book12}.

\begin{remark} \label{ws1} \mar{ws1}
Throughout the work,   smooth manifolds are finite-dimensional
real manifolds. A smooth real manifold is customarily assumed to
be Hausdorff and second-countable (i.e., it has a countable base
for topology). Consequently, it is a locally compact space which
is a union of a countable number of compact subsets, a separable
space (i.e., it has a countable dense subset), a paracompact and
completely regular space. Being paracompact, a smooth manifold
admits the partition of unity by smooth real functions. One also
can show that, given two disjoint closed subsets $N$ and $N'$ of a
smooth manifold $X$, there exists a smooth function $f$ on $X$
such that $f|_N=0$ and $f|_{N'}=1$. Unless otherwise stated,
manifolds are assumed to be connected and, consequently, arcwise
connected. We follow the notion of a manifold without boundary.
\end{remark}

Similarly to a sheaf $C^0_X$ of continuous functions, a stalk
$C^\infty_x$ of a sheaf $C^\infty_X$ at a point $x\in X$ has a
unique maximal ideal of germs of smooth functions vanishing at $x$
(Example \ref{nn230}). Therefore, $C^\infty_X$ is a local-ringed
space (Definition \ref{nn250}).

Though a sheaf $C^\infty_X$ is defined on a topological space $X$,
it fixes a unique smooth manifold structure on $X$ as follows.

\begin{theorem} \label{+26} \mar{+26}
Let $X$ be a paracompact topological space and $(X,\gR)$ a
local-ringed space. Let $X$ admit an open cover $\{U_i\}$ such
that a sheaf $\gR$ restricted to each $U_i$ is isomorphic to a
local-ringed space  $(\mathbb{R}^n, C^\infty_{R^n})$. Then $X$ is
an $n$-dimensional smooth manifold together with a natural
isomorphism of local-ringed spaces $(X,\gR)$ and $(X,C^\infty_X)$.
\end{theorem}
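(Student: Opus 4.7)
The plan is to extract a smooth atlas on $X$ from the given local isomorphisms, verify that transition functions are smooth via a short lemma about morphisms of local-ringed spaces, and then assemble the global isomorphism $\gR \cong C^\infty_X$ via pointwise evaluation.

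Each isomorphism of local-ringed spaces $(U_i, \gR|_{U_i}) \cong (\mathbb{R}^n, C^\infty_{\mathbb{R}^n})$ consists of a homeomorphism $\phi_i: U_i \to \mathbb{R}^n$ together with a compatible sheaf isomorphism $\phi_i^{\#}: C^\infty_{\mathbb{R}^n} \to (\phi_i)_{*}\gR|_{U_i}$. I declare each pair $(U_i,\phi_i)$ to be a coordinate chart on $X$; it remains to show that every transition $\tau_{ij} = \phi_j \circ \phi_i^{-1}$ is a diffeomorphism between open subsets of $\mathbb{R}^n$.

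The key lemma is that any isomorphism of local-ringed spaces $(\phi,\phi^{\#}): (U, C^\infty_U) \to (V, C^\infty_V)$ between open subsets of $\mathbb{R}^n$ is induced by a diffeomorphism $\phi$ with $\phi^{\#}$ equal to pullback $\phi^{*}$. To prove this, the stalk map $\phi^{\#}_p : C^\infty_{\phi(p)} \to C^\infty_p$ is a local ring homomorphism, so it descends to a ring morphism between residue fields $\mathbb{R} \to \mathbb{R}$. This map must be the identity, since any unital ring endomorphism of $\mathbb{R}$ is $\mathbb{Q}$-linear, preserves the order $x \geq 0 \Leftrightarrow x = y^{2}$, and is therefore continuous. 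Hence $\phi^{\#}(f)(p) = f(\phi(p))$ for every germ $f$, so $\phi^{\#} = \phi^{*}$. Taking $f$ to be the coordinate functions on $V$ forces each component of $\phi$ to be smooth, and applying the same reasoning to the inverse morphism completes the lemma. Applying it to the composite local-ringed isomorphism between $\phi_i(U_i \cap U_j)$ and $\phi_j(U_i \cap U_j)$ shows each $\tau_{ij}$ is smooth, so $\{(U_i,\phi_i)\}$ is a $C^\infty$-atlas and $X$ becomes an $n$-dimensional smooth manifold.

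For the natural isomorphism $\Phi: \gR \to C^\infty_X$, send a local section $s \in \gR(V)$ to the function $p \mapsto [s]_p \in \gR_p/\mathfrak{m}_p \cong \mathbb{R}$. Restricted to any chart $U_i$, this $\Phi$ coincides with the sheaf isomorphism obtained by composing $(\phi_i^{\#})^{-1}$ with the identification $C^\infty_{\mathbb{R}^n} \cong C^\infty_X|_{U_i}$ provided by $\phi_i$; by the key lemma it is a local sheaf isomorphism, and hence a global one, with naturality automatic since the construction uses only the intrinsic residue-field projection. The principal obstacle is the key lemma, specifically turning the abstract local-ring condition on stalks into literal pullback by $\phi$; paracompactness of $X$ feeds in only implicitly, guaranteeing the resulting smooth structure satisfies the conventions of Remark \ref{ws1}.
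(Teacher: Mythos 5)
Your argument is correct; note that the paper states Theorem \ref{+26} without proof, presenting it as the standard characterization of smooth manifolds among local-ringed spaces, so there is no in-text proof to compare against, and your route --- rigidity of $\mathbb{R}$ (no nontrivial unital ring endomorphisms) forcing the residue-field maps to be the identity, hence forcing any isomorphism of local-ringed spaces between open pieces of $(\mathbb{R}^n,C^\infty_{\mathbb{R}^n})$ to act by pullback along a diffeomorphism --- is exactly the classical one. Two points deserve to be made explicit. First, the locality of the stalk homomorphisms, which you need in order to pass to residue fields, is automatic in your setting because you only ever invoke the lemma for isomorphisms: a ring isomorphism of local rings carries non-invertible elements to non-invertible elements, whereas the paper's definition of a morphism of ringed spaces does not impose locality, so this remark closes a potential gap. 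Second, the rigidity of $\mathbb{R}$ is used a second time, implicitly, when you treat the identification $\gR_p/\mathfrak{m}_p\cong\mathbb{R}$ as canonical in the definition of $\Phi$: a priori the residue field is only isomorphic to $\mathbb{R}$ via a chosen chart, and it is precisely the absence of nontrivial automorphisms of $\mathbb{R}$ that makes the evaluation map chart-independent and hence natural; with that observed, your chart comparison showing that $\Phi|_{U_i}$ agrees with transport by $\phi_i$, and is therefore a sheaf isomorphism, goes through. Finally, as you say, paracompactness (which in the paper's convention of Remark \ref{spr200} includes Hausdorffness) enters only to guarantee that the locally Euclidean space you obtain satisfies the manifold conventions of Remark \ref{ws1}; the isomorphism $\gR\cong C^\infty_X$ itself does not use it.
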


One can think of this result as being an alternative definition of
smooth real manifolds in terms of local-ringed spaces. A smooth
manifold $X$ also is algebraically reproduced as a certain
subspace of the spectrum of a real ring $C^\infty(X)$ of smooth
real functions on $X$ as follows \cite{atiy,book05}.

Let $\cA$ be a commutative real ring and Specm$\, \cA$ its maximal
spectrum. The real spectrum of $\cA$ is a subspace Spec$_\mathbb R
\cA\subset \mathrm{Specm}\,\cA$ of the maximal ideals $\cI$ such
that the quotients $\cA/\cI$ are isomorphic to $\mathbb R$. It is
endowed with the relative Zariski topology. There is the bijection
between the set of real algebra morphisms of $\cA$ to a field
$\mathbb R$ and the real spectrum of $\cA$, namely,
\be
&& \hm_\mathbb R(\cA,\mathbb R)\ni \f\mapsto \Ker\f\in \mathrm{Spec}_\mathbb R\cA,\\
&& \mathrm{Spec}_\mathbb R\cA\ni x\mapsto \pi_x\in \hm_\mathbb R(\cA,\mathbb R), \qquad
\pi_x: \cA\to \cA/x\cong\mathbb R.
\ee
Any element $a\in\cA$ induces a real function
\be
f_a: \mathrm{Spec}_\mathbb R\cA \ni x\mapsto \pi_x(a)
\ee
on the real spectrum Spec$_\mathbb R \cA$. This function need not
be continuous with respect to the Zariski topology, but one can
provide Spec$_\mathbb R \cA$ with another topology, called the
Gel'fand one,  which is the coarsest topology which makes all such
functions continuous. If $\cA=C^\infty(X)$, the Zariski and
Gel'fand topologies coincide.

\begin{theorem} \label{t31} \mar{t31}
Given a ring $C^\infty(X)$ of smooth real functions on a manifold
$X$, let $\m_x$ denote the maximal ideal of functions vanishing at
a point $x\in X$. Then there is a homeomorphism
\mar{t32}\beq
\chi_X:X\ni x\mapsto \m_x\in \mathrm{Spec}_\mathbb RC^\infty(X).
\label{t32}
\eeq
\end{theorem}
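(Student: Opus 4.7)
The plan is to verify in order that $\chi_X$ is well-defined, injective, surjective, and bicontinuous. The surjectivity step will be the main obstacle and will require the global structure of $X$ (paracompactness, $\sigma$-compactness, partitions of unity), while the other three steps will follow from standard separation properties of smooth functions as recorded in Remark \ref{ws1}.

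For well-definedness, the evaluation morphism $\mathrm{ev}_x:C^\infty(X)\to \mathbb R$, $f\mapsto f(x)$, is a surjective $\mathbb R$-algebra homomorphism with kernel $\m_x$, so $C^\infty(X)/\m_x\cong \mathbb R$ and $\m_x\in \mathrm{Spec}_\mathbb R C^\infty(X)$. For injectivity, given $x\ne y$, Remark \ref{ws1} provides a smooth function $f$ with $f(x)=0$, $f(y)=1$, so $f\in\m_x\setminus \m_y$ and thus $\m_x\ne \m_y$.

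The heart of the proof is surjectivity. Given a maximal ideal $\cI$ with $C^\infty(X)/\cI\cong \mathbb R$, let $\pi_\cI:C^\infty(X)\to \mathbb R$ denote the quotient. Suppose for contradiction that $\cI\ne \m_x$ for every $x\in X$; then for each $x$ there exists $f_x\in C^\infty(X)$ with $\pi_\cI(f_x)\ne f_x(x)$, so that $g_x=f_x-\pi_\cI(f_x)\cdot\bb$ lies in $\cI$ but satisfies $g_x(x)\ne 0$. Hence $g_x^2>0$ on an open neighbourhood $U_x$ of $x$. Since $X$ is $\sigma$-compact, it admits a smooth proper function $h:X\to[0,\infty)$; set $c=\pi_\cI(h)$ so that $h-c\bb\in\cI$. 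The sublevel set $K=h^{-1}([0,|c|+1])$ is compact, so finitely many $U_{x_1},\ldots,U_{x_n}$ cover $K$, and
\[
F=\sum_{i=1}^n g_{x_i}^2+(h-c\bb)^2\in\cI.
\]
On $K$ the sum $\sum g_{x_i}^2$ is strictly positive, and off $K$ one has $|h(y)-c|>1$, so $F>0$ on $X$. Hence $1/F\in C^\infty(X)$ and $F$ is invertible, contradicting the fact that $\cI$ is a proper ideal. This forces $\cI=\m_x$ for some $x$, so $\chi_X$ is surjective. The delicate point, and what I expect to be the main obstacle, is handling the non-compact part of $X$: one cannot close up $\cI$ under infinite partition-of-unity sums, and the proper function $h$ is exactly the device that reduces the global problem to a compactness argument.

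For the homeomorphism claim, continuity of $\chi_X$ is immediate: the preimage of a basic closed set $V(\cI)\cap \mathrm{Spec}_\mathbb R C^\infty(X)$ is $\{x\in X:f(x)=0\ \forall f\in\cI\}=\bigcap_{f\in\cI}f^{-1}(0)$, which is closed in $X$. Conversely, given a closed subset $C\subset X$, let $\cI_C=\{f\in C^\infty(X):f|_C=0\}$; then $\chi_X(C)\subseteq V(\cI_C)\cap \mathrm{Spec}_\mathbb R C^\infty(X)$ trivially, while for $x\notin C$ the separation property of Remark \ref{ws1} yields $f\in C^\infty(X)$ with $f|_C=0$ and $f(x)=1$, so $\m_x\not\supset \cI_C$. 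Thus $\chi_X(C)=V(\cI_C)\cap\mathrm{Spec}_\mathbb R C^\infty(X)$ is closed, which shows that $\chi_X$ is a closed map. A continuous closed bijection is a homeomorphism, completing the proof.
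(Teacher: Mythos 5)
Your proof is correct. Note that the paper itself states Theorem \ref{t31} without proof, referring the reader to the literature (\cite{atiy,book05}), so there is no internal argument to compare against; your write-up supplies the standard self-contained proof and does so accurately. Well-definedness and injectivity follow from evaluation morphisms and the separation property recorded in Remark \ref{ws1}; the surjectivity argument is exactly the classical one: if a real maximal ideal $\cI$ differed from every $\m_x$, each point would admit $g_x\in\cI$ with $g_x(x)\neq 0$, and the smooth proper exhaustion $h$ (available by $\sigma$-compactness and partitions of unity) turns the global problem into a compact one, producing the nowhere-vanishing, hence invertible, element $F=\sum g_{x_i}^2+(h-c\bb)^2$ of $\cI$, a contradiction. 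Two small points worth making explicit: first, any ring isomorphism $C^\infty(X)/\cI\cong\mathbb R$ automatically restricts to the identity on the constants (the only unital ring endomorphism of $\mathbb R$ is the identity), which is what legitimizes writing $g_x=f_x-\pi_\cI(f_x)\bb\in\cI$ and $h-c\bb\in\cI$; this also matches the bijection with $\hm_\mathbb R(\cA,\mathbb R)$ stated in the paper just before the theorem. Second, your closed-map step $\chi_X(C)=V(\cI_C)\cap\mathrm{Spec}_\mathbb RC^\infty(X)$ genuinely depends on the surjectivity already established, and you order the argument correctly, so the conclusion that a continuous closed bijection is a homeomorphism goes through with the relative Zariski topology used in the paper.
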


Let $X$ and $X'$ be two smooth manifolds. Any smooth map $\g: X\to
X'$ induces a $\mathbb R$-ring morphism
\be
\g^*:C^\infty(X')\to C^\infty(X)
\ee
which associates the pull-back function $\g^*f= f\circ\g$ on $X$
to a function $f$ on $X'$. Conversely, each $\mathbb R$-ring
morphism
\be
\zeta:C^\infty(X')\to C^\infty(X)
\ee
yields the continuous map $\zeta^\natural$ (\ref{t30}) which sends
Spec$_\mathbb RC^\infty(X)\subset \mathrm{Spec}\,C^\infty(X)$ to
Spec$_\mathbb RC^\infty(X')\subset \mathrm{Spec}\,C^\infty(X')$ so
that the induced map
\be
\chi_{X'}^{-1}\zeta^\natural \circ \chi_X: X\to X'
\ee
is smooth. Thus, there is one-to-one correspondence between smooth
manifold morphisms $X\to X'$ and the $\mathbb R$-ring morphisms
$C^\infty(X')\to C^\infty(X)$.

\begin{remark} \label{ws2} \mar{ws2}
Let $X\times X'$ be a manifold product. A ring $C^\infty(X\times
X')$ is constructed from rings $C^\infty(X)$ and $C^\infty(X')$ as
follows. Whenever referring to a topology on a ring $C^\infty(X)$,
we will mean the topology of compact convergence for all
derivatives \cite{rob}. The $C^\infty(X)$ is a
 Fr\'echet ring  with respect to this topology, i.e., a complete
metrizable locally convex topological vector space. There is an
isomorphism of Fr\'echet rings
\mar{+55}\beq
C^\infty(X)\wh\ot C^\infty(X') \cong C^\infty(X\times X'),
\label{+55}
\eeq
where the left-hand side, called the topological tensor product,
is the completion of $C^\infty(X)\ot C^\infty(X')$ with respect to
Grothendieck's topology, defined as follows. If $E_1$ and $E_2$
are locally convex topological vector spaces,  Grothendieck's
topology  is the finest locally convex topology on $E_1\ot E_2$
such that the canonical mapping of $E_1\times E_2$ to $E_1\ot E_2$
is continuous \cite{rob}. It also is called the $\pi$-topology in
contrast with the coarser $\ve$-topology on $E_1\ot E_2$
\cite{piet,trev}. Furthermore, for any two open subsets $U\subset
X$ and $U'\subset X'$, let us consider the topological tensor
product of rings $C^\infty(U)\wh\ot C^\infty(U')$. These tensor
products define a local-ringed space $(X\times X',C^\infty_X\wh\ot
C^\infty_{X'})$. Due to the isomorphism (\ref{+55}) written for
all $U\subset X$ and $U'\subset X'$, we obtain a sheaf isomorphism
\be
C^\infty_X\wh\ot C^\infty_{X'}=C^\infty_{X\times X'}.
\ee
\end{remark}

Since a smooth manifold admits the partition of unity by smooth
functions, it follows from Proposition \ref{spr256} that any sheaf
of $C^\infty_X$-modules on $X$ is fine and, consequently, acyclic.

For instance, let $Y\to X$ be a smooth vector bundle. The germs of
its sections form a sheaf $Y_X$ of $C^\infty_X$-modules which,
thus, is fine.

In particular, all sheaves $\cO^k_X$, $k\in\mathbb N_+$, of germs
of exterior forms on $X$ are fine. These sheaves constitute the de
Rham complex
\mar{t67}\beq
0\to \mathbb{R}\ar C^\infty_X\ar^d \cO^1_X\ar^d\cdots \cO^k_X\ar^d
\cdots. \label{t67}
\eeq
The corresponding complex of structure modules of these sheaves is
the de Rham complex
\mar{t37}\beq
0\to \mathbb{R}\ar C^\infty(X)\ar^d \cO^1(X)\ar^d\cdots
\cO^k(X)\ar^d \cdots \label{t37}
\eeq
of exterior forms on a manifold $X$. Its cohomology is called the
  de Rham cohomology $H^*(X)$ of $X$. Due to the Poincar\'e
lemma, the complex (\ref{t67}) is exact and, thereby, is a fine
resolution of the constant sheaf $\mathbb R$ on a manifold. Then a
corollary of Theorem \ref{spr230} is the classical   de Rham
theorem.

\begin{theorem} \label{t60} \mar{t60} There is an isomorphism
\mar{t61}\beq
H^k(X)=H^k(X;\mathbb R) \label{t61}
\eeq
of the de Rham cohomology $H^*(X)$ of a manifold $X$ to the
cohomology of $X$ with coefficients in the constant sheaf $\mathbb
R$.
\end{theorem}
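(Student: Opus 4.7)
The plan is to apply the abstract de Rham--Weil theorem (Theorem \ref{spr230}) to the complex of sheaves (\ref{t67}), so the proof amounts to checking that (\ref{t67}) is a fine (hence acyclic) resolution of the constant sheaf $\mathbb R$ and then identifying the resulting cohomology with $H^*(X)$.

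First I would verify that (\ref{t67}) is exact as a sequence of sheaves. Exactness is a stalkwise property, so it suffices to show that at every point $x\in X$ a closed germ of a smooth $k$-form with $k\geq 1$ is exact. This is precisely the content of the Poincar\'e lemma on an open ball in $\mathbb R^n$, together with the fact that every $x$ has a chart neighborhood diffeomorphic to such a ball. At the left end of (\ref{t67}) one identifies the constant sheaf $\mathbb R$ with $\ker d\subset C^\infty_X$: a smooth function has vanishing differential on a connected open set iff it is locally constant there. Hence (\ref{t67}) is a resolution of $\mathbb R$.

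Next I would argue that each $\cO^k_X$ is acyclic. By the remark preceding the theorem, $X$ is paracompact and $\cO^k_X$ is a sheaf of $C^\infty_X$-modules; by Proposition \ref{spr256} such sheaves are fine, and fine sheaves on paracompact spaces are acyclic. Thus (\ref{t67}) is a fine resolution of $\mathbb R$, and Theorem \ref{spr230} gives an isomorphism
\[
H^k(X;\mathbb R)\cong H^k\bigl(\cO^*_X(X),d\bigr).
\]

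Finally, I would identify the right-hand side with the de Rham cohomology. The global sections $\cO^k_X(X)$ are by definition $\cO^k(X)$, and the induced coboundary is the exterior differential $d$, so the complex of global sections of (\ref{t67}) is exactly the de Rham complex (\ref{t37}). Its cohomology is $H^*(X)$, establishing (\ref{t61}). I do not anticipate a genuine obstacle: both key ingredients, the Poincar\'e lemma and the fineness of $C^\infty_X$-module sheaves on a paracompact manifold, have already been recorded in the preceding paragraphs, and the only remaining step is a purely formal invocation of the abstract de Rham--Weil theorem.
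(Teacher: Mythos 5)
Your proposal is correct and follows essentially the same route as the paper: the text preceding the theorem already records that the sheaves $\cO^k_X$ are fine as sheaves of $C^\infty_X$-modules (Proposition \ref{spr256}), that the Poincar\'e lemma makes (\ref{t67}) a fine resolution of the constant sheaf $\mathbb R$, and then Theorem \ref{t60} is obtained exactly as you do, by applying the abstract de Rham theorem (Theorem \ref{spr230}) and identifying the complex of global sections with the de Rham complex (\ref{t37}). Your write-up simply makes explicit the stalkwise exactness check and the identification of $\mathbb R$ with $\ker d$, which the paper leaves implicit.
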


\begin{remark}
Let us consider a short exact sequence of constant sheaves
\mar{1320}\beq
0\to \mathbb{Z}\ar \mathbb{R}\ar U(1)\to 0, \label{1320}
\eeq
where $U(1)=\mathbb{R}/\mathbb{Z}$ is a circle group of complex
numbers of unit module. This exact sequence yields a long exact
sequence of sheaf cohomology groups
\be
&& 0\to \mathbb{Z}\ar\mathbb{R} \ar U(1) \ar
H^1(X;\mathbb Z) \ar H^1(X;\mathbb R)\ar\cdots \\
&& \qquad  H^p(X;\mathbb Z)\ar H^p(X;\mathbb R)\ar H^p(X;U(1))\ar
H^{p+1}(X;\mathbb Z) \ar\cdots,
\ee
where
\be
H^0(X;\mathbb Z)=\mathbb Z, \qquad H^0(X;\mathbb R)=\mathbb R
\ee
and $H^0(X;U(1))=U(1)$. This exact sequence defines a homomorphism
\mar{spr752}\beq
H^*(X;\mathbb Z)\to H^*(X;\mathbb R) \label{spr752}
\eeq
of cohomology with coefficients in the constant sheaf $\mathbb Z$
to that with coefficients in $\mathbb R$. Combining the
isomorphism (\ref{t61}) and the homomorphism (\ref{spr752}) leads
to a cohomology homomorphism
\mar{t62}\beq
H^*(X;\mathbb Z)\to H^*(X). \label{t62}
\eeq
Its kernel contains all cyclic elements of cohomology groups
$H^k(X;\mathbb Z)$.
\end{remark}

Given a vector bundle $Y\to X$, the structure module of a sheaf
$Y_X$ of germs of its sections coincides with the   structure
module $Y(X)$ of global sections of $Y\to X$ (Example \ref{t1}).
The forthcoming Serre--Swan theorem (Theorem \ref{sp60}), shows
that these modules exhaust all projective $C^\infty(X)$-modules of
finite rank. This theorem originally has been proved in the case
of a compact manifold $X$, but it is generalized to an arbitrary
smooth manifold \cite{book05,ren}.

\begin{theorem} \label{sp60} \mar{sp60}
Let $X$ be a smooth manifold. A $C^\infty(X)$-module $P$ is
isomorphic to the structure module of a smooth vector bundle over
$X$ iff it is a projective module of finite rank.
\end{theorem}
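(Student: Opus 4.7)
The plan is to prove the two implications separately, using the projective-as-image-of-projector characterization recalled earlier (a module $P$ is projective iff $P=\bp S$ for some free module $S$ and some idempotent $\bp$).

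For the direction $Y\to X$ vector bundle $\Rightarrow \G(Y)$ is projective of finite rank, I would produce a complementary bundle. First I would pick a finite (in the compact case, immediately; in the general paracompact case, after replacing a trivializing cover by an $(n+1)$-element cover whose members are disjoint unions of coordinate domains, as permitted on an $n$-manifold) trivializing open cover $\{U_i\}_{i=1}^N$ of $X$ over which $Y|_{U_i}\cong U_i\times\mathbb R^k$. Using a subordinate smooth partition of unity $\{\psi_i\}$, I would assemble these local trivializations into a global smooth bundle monomorphism $Y\hookrightarrow X\times \mathbb R^{Nk}$. Equipping $X\times\mathbb R^{Nk}$ with the standard Euclidean metric yields a smooth complementary subbundle $Y'\subset X\times\mathbb R^{Nk}$ with $Y\oplus Y'\cong X\times\mathbb R^{Nk}$. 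Passing to sections and using $\G(X\times\mathbb R^{Nk})\cong C^\infty(X)^{Nk}$, this gives $\G(Y)\oplus\G(Y')\cong C^\infty(X)^{Nk}$, so $\G(Y)$ is a direct summand of a free module of finite rank.

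For the converse, suppose $P$ is a projective $C^\infty(X)$-module of finite rank. By the criterion recalled before Theorem \ref{nn228}, write $P=\bp S$ with $S=C^\infty(X)^n$ free and $\bp^2=\bp$ a $C^\infty(X)$-module endomorphism of $S$. Viewing $S$ as the module of global sections of the trivial bundle $\ve=X\times\mathbb R^n$, the Serre--Swan-type fact that $C^\infty(X)$-endomorphisms of $\G(\ve)$ come from smooth bundle endomorphisms (an instance of Theorem \ref{sp60} already known for trivial bundles, equivalent to $\mathrm{End}_{C^\infty(X)}(C^\infty(X)^n)\cong C^\infty(X,\mathrm{Mat}_n(\mathbb R))$) lets me identify $\bp$ with a smooth family of idempotent linear maps $\bp_x:\mathbb R^n\to\mathbb R^n$, $x\in X$. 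I would then define $Y=\bigsqcup_{x\in X}\im\bp_x\subset\ve$ and verify that $Y\to X$ is a smooth vector subbundle: the rank $\mathrm{rk}\,\bp_x=\mathrm{tr}\,\bp_x$ is a continuous integer-valued function on $X$, hence locally constant, and near any $x_0$ one obtains a local smooth frame of $Y$ by applying $\bp$ to any $n$-tuple in $\mathbb R^n$ projecting to a basis of $\im\bp_{x_0}$ (which remains a basis on a neighborhood by continuity of determinant). Finally, $\G(Y)=\bp(\G(\ve))=\bp S=P$, identifying $P$ with the section module of a smooth vector bundle.

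The main obstacle is the second direction: passing from the purely algebraic projector $\bp$ on $\G(\ve)$ to a smooth subbundle $Y\subset\ve$ requires both the representation of $\bp$ by a smooth matrix-valued function and the local-constancy-of-rank argument, which relies crucially on idempotency (for a general endomorphism the rank is only lower semicontinuous). In the first direction, the subtle point is handling non-compact $X$: the reduction to a finite trivializing cover uses that a smooth $n$-manifold, being paracompact and of finite covering dimension, admits a trivializing cover of size depending only on $n$ and the bundle rank, which is where the hypothesis that $X$ is second-countable (Remark \ref{ws1}) enters.
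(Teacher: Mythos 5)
Your argument is correct and is essentially the classical Serre--Swan proof used in the references the paper cites for Theorem \ref{sp60} (the paper itself states the theorem without proof): embed $Y$ into a trivial bundle via a finite trivializing cover and a subordinate partition of unity, split off an orthogonal complement to get $\G(Y)$ as a direct summand of a finitely generated free module; conversely realize an idempotent $\bp$ on $C^\infty(X)^n$ as a smooth $\mathrm{Mat}_n(\mathbb R)$-valued map and use $\mathrm{rk}\,\bp_x=\mathrm{tr}\,\bp_x$ to see that $\im\bp_x$ is a smooth subbundle with $\G(Y)=\bp S=P$. The only point to tighten is in the non-compact case: the $(n+1)$ sets of your cover must be disjoint unions of open sets over which $Y$ is actually trivial (e.g.\ coordinate balls refining the trivializing cover), since a general coordinate domain, being only diffeomorphic to an open subset of $\mathbb R^n$, need not trivialize $Y$; with that choice the reduction to a finite trivializing cover goes through as you describe.
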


This theorem states the categorial equivalence between the vector
bundles over a smooth manifold $X$ and projective modules of
finite rank over the ring $C^\infty(X)$ of smooth real functions
on $X$. The following are \emph{Corollaries}  of this equivalence.

(i) The structure module $Y^*(X)$ of the dual $Y^*\to X$ of a
vector bundle $Y\to X$ is the $C^\infty(X)$-dual $Y(X)^*$ of the
structure module $Y(X)$ of $Y\to X$.

(ii) Any exact sequence of vector bundles
\mar{t51}\beq
0\to Y \ar Y'\ar Y''\to 0 \label{t51}
\eeq
over the same base $X$ yields an exact sequence
\mar{t52}\beq
0\to Y(X) \ar Y'(X)\ar Y''(X)\to 0 \label{t52}
\eeq
of their structure modules, and  \textit{vice versa}. In
accordance with the well-known theorem \cite{book00}, the exact
sequence (\ref{t51}) is always split. Every its splitting defines
that of the exact sequence (\ref{t52}), and  \textit{vice versa}
(Theorem \ref{nn233}).

(iii) In particular, the derivation module of a real ring
$C^\infty(X)$ coincides with a $C^\infty(X)$-module $\cT_1(X)$ of
vector fields on $X$, i.e., with the structure module of sections
of the tangent bundle $TX$ of $X$. Hence, it is a projective
$C^\infty(X)$-module of finite rank. It is the $C^\infty(X)$-dual
$\cT_1(X)=\cO^1(X)^*$ of the structure module $\cO^1(X)$ of the
cotangent bundle $T^*X$ of $X$ which is a module of one-forms on
$X$ and, conversely, $\cO^1(X)=\cT_1(X)^*$. It follows that the
Chevalley--Eilenberg differential calculus over a real ring
$C^\infty(X)$ is exactly the differential graded algebra
$(\cO^*(X),d)$ of exterior forms on $X$, where the
Chevalley--Eilenberg coboundary operator $d$ (\ref{+840})
coincides with the exterior differential. Accordingly, the de Rham
complex (\ref{t10}) of a real ring $C^\infty(X)$ is the de Rham
complex (\ref{t37}) of exterior forms on $X$. Moreover, one can
show that $(\cO^*(X),d)$ is the minimal differential calculus,
i.e., a $C^\infty(X)$-module $\cO^1(X)$ is generated by elements
$df$, $f\in C^\infty(X)$.

(iv) Let $Y\to X$ be a vector bundle and $Y(X)$ its structure
module. An $r$-order jet manifold $J^rY$ of $Y\to X$ consists of
equivalence classes $j^r_xs$, $x\in X$, of sections $s$ of $Y\to
X$ which are identified by the $r+1$ terms of their Taylor series
at points $x\in X$. Since $Y\to X$ is a vector bundle, so is a jet
bundle $J^rY\to X$. Its structure module $J^rY(X)$ is exactly the
$r$-order jet module $\cJ^r(Y(X))$ of a $C^\infty(X)$-module
$Y(X)$ in Section 2.3 \cite{book,kras}. As a consequence, the
notion of a connection on the structure module $Y(X)$ is
equivalent to the standard geometric notion of a connection on a
vector bundle $Y\to X$ \cite{book00}. Indeed, a connection on a
fibre bundle $Y\to X$ is defined as a global section $\G$ of an
affine jet bundle $J^1Y\to Y$. If $Y\to X$ is a vector bundle,
there exists an exact sequence
\mar{t50}\beq
0\to T^*X\op\ot_XY\ar J^1Y\ar Y\to 0 \label{t50}
\eeq
over $X$ which is split by $\G$. Conversely, any slitting of this
exact sequence yields a connection $Y\to X$. The exact sequence of
vector bundles (\ref{t50}) induces the exact sequence of their
structure modules
\mar{t53}\beq
0\to \cO^1(X)\op\ot Y(X)\ar J^1Y(X)\ar Y(X)\to 0. \label{t53}
\eeq
Then any connection $\G$ on a vector bundle $Y\to X$ defines a
splitting of the exact sequence (\ref{t53}) which, by Definition
\ref{+176}, is a connection on a $C^\infty(X)$-module $Y(X)$, and
\textit{vice versa}.

Let now $P$ be an arbitrary $C^\infty(X)$-module. One can
reformulate Definitions \ref{+181} and \ref{1016} of a connection
on $P$ as follows.

\begin{definition} \label{t55} \mar{t55}
A connection on a $C^\infty(X)$-module $P$ is a
$C^\infty(X)$-module morphism
\be
\nabla: P\to \cO^1(X)\ot P,
\ee
which satisfies the Leibniz rule
\be
\nabla(fp)=df\ot p +f\nabla(p), \qquad f\in C^\infty(X), \qquad
p\in P.
\ee
\end{definition}

\begin{definition} \label{t57} \mar{t57}
A connection on a $C^\infty(X)$-module $P$ associates to any
vector field $\tau\in\cT_1(X)$ on $X$ a first-order differential
operator $\nabla_\tau$ on $P$ which obeys the Leibniz rule
\be
\nabla_\tau(fp)=(\tau\rfloor df)p +f\nabla_\tau p.
\ee
\end{definition}

Since $\cO^1(X)=\cT_1(X)^*$, Definitions \ref{t55} and \ref{t57}
are equivalent.

Let us note that a connection on an arbitrary $C^\infty(X)$-module
need not exist, unless it is a projective or locally free module
(Proposition \ref{w715}).

A curvature of a connection $\nabla$ in Definitions \ref{t55} and
\ref{t57} is defined as a zero-order differential operator
\mar{t59}\beq
R(\tau,\tau')=[\nabla_\tau,\nabla_{\tau'}]-\nabla_{[\tau,\tau']}
\label{t59}
\eeq
on a module $P$ for all vector fields $\tau,\tau'\in\cT_1(X)$ on
$X$.

In accordance with Proposition \ref{+219}, we come to the
following relation between connections on $C^\infty(X)$-modules
and sheaves of $C^\infty_X$-modules (Proposition \ref{w715}).

Let $X$ be a manifold and $C^\infty_X$ a sheaf of smooth real
functions on $X$. A sheaf $\gd C^\infty_X$ of its derivations is
isomorphic to a sheaf of vector fields on a manifold $X$
(corollary (iii) of Theorem \ref{sp60}). It follows that:

$\bullet$ there is the restriction morphism $\gd(C^\infty(U))\to
\gd(C^\infty(V))$ for any open sets $V\subset U$,

$\bullet$ $\gd C^\infty_X$ is a locally free sheaf of
$C^\infty_X$-modules of finite rank (Theorem \ref{nn1}),

$\bullet$ the sheaves $\gd C^\infty_X$ and $\cO^1_X$ are mutually
dual (Theorem \ref{nn228}).

Let $\gP$ be a locally free sheaf of $C^\infty_X$-modules. In this
case, $\hm(\gP,\cO^1_X\ot \gP)$ is a locally free sheaf of
$C^\infty_X$-modules. It is fine and acyclic. Its  cohomology
group
\be
H^1(X;\hm(\gP,\cO^1_X\ot \gP))
\ee
vanishes, and the exact sequence
\mar{+2}\beq
0\to \cO^1_X\ot \gP\to (C^\infty_X\oplus \cO^1_X)\ot \gP\to \gP\to
0 \label{+2}
\eeq
admits a splitting. This proves the following.

\begin{proposition} \label{w715} \mar{w715}
Any locally free sheaf of $C^\infty_X$-modules on a manifold $X$
admits a connection and, in accordance with Proposition
\ref{+219}, any locally free $C^\infty(X)$-module does well.
\end{proposition}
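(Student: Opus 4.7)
The plan is to follow the Atiyah-class strategy sketched at the end of Section~3.1 and then invoke Proposition~\ref{+219} to pass from sheaves to modules. The key ingredients are already assembled in the paragraph immediately preceding the statement: on a smooth manifold $X$ the sheaf $\gd C^\infty_X$ is locally free of finite rank and dual to $\cO^1_X$, and any sheaf of $C^\infty_X$-modules is fine by the partition of unity argument quoted earlier.

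First, I would apply the functor $\hm(\gP,-)$ to the short exact sequence of sheaves of $C^\infty_X$-modules
\begin{equation*}
0\to \cO^1_X\ot \gP\to (C^\infty_X\oplus \cO^1_X)\ot \gP\to \gP\to 0,
\end{equation*}
which is the smooth incarnation of (\ref{+214}). Since $\gP$ is locally free, the functor $\hm(\gP,-)$ preserves exactness locally, and the resulting sequence of sheaves is still short exact. Passing to the long exact cohomology sequence on $X$ yields a connecting map
\begin{equation*}
H^0(X;\hm(\gP,\gP))\to H^1(X;\hm(\gP,\cO^1_X\ot \gP)).
\end{equation*}
The image of the identity $\id_\gP$ under this map is, by definition, the Atiyah class, and its vanishing is equivalent to the existence of a splitting of the above sequence, i.e., to the existence of a connection on $\gP$ in the sense of Definition~\ref{+217}.

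Second, I would verify that the receiving group vanishes. Because $\gP$ and $\cO^1_X$ are locally free $C^\infty_X$-modules, so is their tensor product, and hence so is $\hm(\gP,\cO^1_X\ot\gP)$. A sheaf of $C^\infty_X$-modules is fine, hence acyclic, so $H^1(X;\hm(\gP,\cO^1_X\ot\gP))=0$. Therefore the Atiyah class is automatically zero, and the exact sequence (\ref{+2}) is split, producing a connection on the sheaf $\gP$.

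Finally, to transfer the conclusion from sheaves to modules, I would apply Proposition~\ref{+219}: a connection on the sheaf $\gP$ restricts to compatible connections on $\gP(U)$ for all open $U\subset X$, and in particular on the global sections $\gP(X)$, which is the locally free $C^\infty(X)$-module in question. The only subtle point, and the one I would check carefully, is the local freeness of $\hm(\gP,\cO^1_X\ot\gP)$ as a $C^\infty_X$-module (rather than merely as a presheaf of modules), but this follows from the fact that $\gP$ is locally free of finite rank together with the local duality $\hm(\gP,-)=\gP^*\ot(-)$; this is where finite rank is essential, and it is the main place where the argument could fail for more general sheaves.
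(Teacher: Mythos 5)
Your argument is correct and follows essentially the same route as the paper: the Atiyah-class obstruction in $H^1(X;\hm(\gP,\cO^1_X\ot\gP))$ vanishes because any sheaf of $C^\infty_X$-modules is fine (Proposition \ref{spr256}) and hence acyclic, so the sequence (\ref{+2}) splits, and Proposition \ref{+219} transfers the connection to the module of sections. The only cosmetic remark is that your final worry about local freeness of $\hm(\gP,\cO^1_X\ot\gP)$ is not actually needed for the vanishing step --- fineness requires only that it be a sheaf of $C^\infty_X$-modules, which is automatic.
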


\begin{example} \label{nn265} \mar{nn265}
Let $Y\to X$ be a vector bundle and $Y_X$ a sheaf of germs of
sections of $Y\to X$ (Example \ref{t1}). Every linear connection
$\G$ on $Y\to X$ defines a connection on the structure module
$Y(X)$ of sections of $Y\to X$ such that the restriction $\G|_U$
is a connection on a module $Y(U)$ for any open subset $U\subset
X$ (corollary (iv) of Theorem \ref{sp60}). Then we have a
connection on the structure sheaf $Y_X$ of germs of sections of
$Y\to X$. Conversely, a connection on the structure sheaf $Y_X$
defines a connection on a module $Y(X)$ and, consequently, a
connection on a vector bundle $Y\to X$.
\end{example}

In conclusion, let us consider a sheaf $S$ of commutative
$C^\infty_X$-rings on a manifold $X$. Basing on Definition
\ref{mos088}, we come to the following notion of a connection on a
sheaf $S$ of commutative $C^\infty_X$-rings.

\begin{definition} \label{+7} \mar{+7}
Any morphism
\be
\gd C^\infty_X \ni\tau \to \nabla_\tau\in \gd S,
\ee
which is a connection on $S$ as a sheaf of $C^\infty_X$-modules,
is called a connection on the sheaf $S$ of rings.
\end{definition}

Its curvature is given by the expression
\be
R(\tau,\tau')=[\nabla_\tau,\nabla_{\tau'}]-\nabla_{[\tau,\tau']},
\ee
similar to the expression (\ref{+100}) for a curvature of a
connection on modules.

\section{Differential calculus over $\mathbb Z_2$-graded \\ commutative rings}

This section addresses the differential calculus over
Grassmann-graded rings (Definition \ref{nn106}) as particular
$\mathbb N$-graded commutative rings. This also is a special case
of $\mathbb Z_2$-graded commutative rings which are Grassmann
algebras (Definition \ref{nn114}).

\subsection{$\mathbb Z_2$-Graded algebraic calculus}

Let us summarize the relevant notions of the $\mathbb Z_2$-graded
algebraic calculus \cite{bart,book05,sard09a,book12}.

Recall that the symbol $[.]$ stands for the $\mathbb Z_2$-degree.

\begin{definition} \label{nn110} \mar{nn110}
Let $\cK$ be a commutative ring without a divisor of zero. A
$\cK$-module $Q$ is called $\mathbb Z_2$-graded if it is endowed
with a grading automorphism $\g$, $\g^2=\id$. A $\mathbb
Z_2$-graded module falls into a direct sum of modules $Q=Q_0
\oplus Q_1$ such that
\be
\g(q)=(-1)^{[q]}q, \qquad q\in Q_{[q]}.
\ee
One calls $Q_0$ and $Q_1$ the even and odd parts of $Q$,
respectively.
\end{definition}

\begin{example} \label{nn270} \mar{nn270}
Any $\mathbb N$-graded $\cK$-module $P$ in Definition \ref{nn90})
is a $\mathbb Z_2$-graded $\cK$-module where
\be
P_0= \op\oplus_{i\in\mathbb N} P^{2i}, \qquad P_1=
\op\oplus_{i\in\mathbb N} P^{2i+1}.
\ee
\end{example}

A $\mathbb Z_2$-graded $\cK$-module is said to be free if it has a
basis composed by graded-homogeneous elements.

In particular, by a real $\mathbb Z_2$-graded vector space
$B=B_0\oplus B_1$ is meant a graded $\mathbb R$-module. A real
$\mathbb Z_2$-graded vector space is said to be
$(n,m)$-dimensional if $B_0=\mathbb R^n$ and $B_1=\mathbb R^m$.

\begin{definition} \label{nn111} \mar{nn111}
A $\cK$-ring $\cA$ is called $\mathbb Z_2$-graded if it is a
$\mathbb Z_2$-graded $\cK$-module such that
\be
[aa']=([a]+[a'])\mathrm{mod}\,2,
\ee
where $a$ and $a'$ are graded-homogeneous elements of $\cA$. In
particular, $[\bb]=0$.
\end{definition}

Its even part $\cA_0$ is a $\cK$-ring and the odd one $\cA_1$ is
an $\cA_0$-module.

\begin{example} \label{nn271} \mar{nn271}
Any $\mathbb N$-graded ring in Definition \ref{nn40}, regarded as
$\mathbb Z_2$-graded module (Example \ref{nn270}) is a $\mathbb
Z_2$-graded $\cK$-ring. The converse need not be true, unless a
$\mathbb Z_2$-graded $\cK$-ring is a Grassmann algebra (Definition
\ref{nn114}).
\end{example}

\begin{definition} \label{nn112} \mar{nn112}
A $\mathbb Z_2$-graded ring $\cA$ is called graded commutative if
\be
aa'=(-1)^{[a][a']}a'a, \qquad a,a'\in\cA.
\ee
\end{definition}

In particular, a commutative ring is the even $\mathbb Z_2$-graded
commutative one $\cA=\cA_0$.

Every $\mathbb N$-graded commutative $\cK$-ring $\Om^*$
(Definition \ref{nn44}) possesses the associated $\mathbb
Z_2$-graded commutative structure $\Om=\Om_0\oplus\Om_1$
(\ref{nn104}).

\begin{example} \label{nn272} \mar{nn272}
Clifford algebras exemplify $\mathbb Z_2$-graded rings which are
not $\mathbb Z_2$-graded commutative. A Clifford $\cK$-algebra is
defined to be a finitely generated $\mathbb Z_2$-graded $\cK$-ring
which admits a generating basis $\{e^A\}$ of odd elements so that
\mar{nn273}\beq
e^Ae^B+e^Be^A= \dl^{AB}\bb. \label{nn273}
\eeq
It also is an $\mathbb N$-graded $\cK$-module, but not an $\mathbb
N$-graded ring.
\end{example}

Given a $\mathbb Z_2$-graded ring $\cA$, a left $\mathbb
Z_2$-graded $\cA$-module $Q$ is defined as a left $\cA$-module
which is a $\mathbb Z_2$-graded $\cK$-module such that
\be
[aq]=([a]+[q])\mathrm{mod}\,2.
\ee
Similarly, right graded $\cA$-modules and graded
$(\cA-\cA)$-bimodules are defined. If $\cA$ is a $\mathbb
Z_2$-graded commutative ring, a $\mathbb Z_2$-graded left or right
$\cA$-module $Q$ can be provided with a $\mathbb Z_2$-graded
commutative $\cA$-bimodule structure by letting
\be
qa = (-1)^{[a][q]}aq, \qquad a\in\cA, \qquad q\in Q.
\ee
Therefore, unless otherwise stated (Section 5.2), any $\mathbb
Z_2$-graded $\cA$-module over a $\mathbb Z_2$-graded commutative
ring $\cA$ is a $\mathbb Z_2$-graded commutative $\cA$-bimodule
which is called the $\cA$-module if there is no danger of
confusion.

Given a $\mathbb Z_2$-graded commutative ring $\cA$, the following
are standard constructions of new $\mathbb Z_2$-graded modules
from the old ones.

$\bullet$ A direct sum of $\mathbb Z_2$-graded modules and a
$\mathbb Z_2$-graded factor module are defined just as those of
modules over a commutative ring.

$\bullet$ A tensor product $P\ot Q$ of $\mathbb Z_2$-graded
$\cA$-modules $P$ and $Q$ is their tensor product as $\cA$-modules
such that
\be
&& [p\ot q]=([p]+ [q])\mod 2, \qquad p\in P, \qquad q\in Q, \\
&&  ap\ot q=(-1)^{[p][a]}pa\ot q= (-1)^{[p][a]}p\ot aq,  \qquad a\in\cA.
\ee
In particular, the tensor algebra $\ot P$ of a $\mathbb
Z_2$-graded $\cA$-module $P$ is defined just as that
(\ref{spr620}) of a module over a commutative ring. Its quotient
$\w P$ with respect to the ideal generated by elements
\be
p\ot p' + (-1)^{[p][p']}p'\ot p, \qquad p,p'\in P,
\ee
is the bigraded exterior algebra of a $\mathbb Z_2$-graded module
$P$ with respect to the graded exterior product
\mar{nn202}\beq
p\w p' =- (-1)^{[p][p']}p'\w p. \label{nn202}
\eeq

$\bullet$ A morphism $\Phi:P\to Q$ of $\mathbb Z_2$-graded
$\cA$-modules seen as $\cK$-modules is said to be even morphism
(resp. odd morphism) if $\Phi$ preserves (resp. change) the
$\mathbb Z_2$-parity of all homogeneous elements of $P$ and the
relations
\mar{nn212}\beq
\Phi(ap)=(-1)^{[\Phi][a]}a\Phi(p), \qquad p\in P, \qquad a\in\cA,
\label{nn212}
\eeq
hold. A morphism $\Phi:P\to Q$ of $\mathbb Z_2$-graded
$\cA$-modules as the $\cK$-ones is called a graded $\cA$-module
morphism if it is represented by a sum of even and odd morphisms.
Therefore, a set $\hm_\cA(P,Q)$ of graded morphisms of a $\mathbb
Z_2$-graded $\cA$-module $P$ to a $\mathbb Z_2$-graded
$\cA$-module $Q$ is a $\mathbb Z_2$-graded $\cA$-module. A
$\mathbb Z_2$-graded $\cA$-module $P^*=\hm_\cA(P,\cA)$ is called
the dual of a $\mathbb Z_2$-graded $\cA$-module $P$.

\begin{remark} \label{nn201} \mar{nn201}
Let $\cA$ be a $\mathbb Z_2$-graded commutative ring. A $\mathbb
Z_2$-graded  $\cA$-module $\cG=\cG_0\oplus \cG_1$ is called the
Lie $\cA$-superalgebra if it is an $\cA$-algebra whose product
$[.,.]$, called the Lie superbracket, obeys the rules
\be
&& [\ve,\ve']=-(-1)^{[\ve][\ve']}[\ve',\ve],\\
&& (-1)^{[\ve][\ve'']}[\ve,[\ve',\ve'']]
+(-1)^{[\ve'][\ve]}[\ve',[\ve'',\ve]] +
(-1)^{[\ve''][\ve']}[\ve'',[\ve,\ve']] =0.
\ee
Even and odd parts of a Lie superalgebra $\cG$ satisfy
supercommutation relations
\mar{ss50}\beq
[{\cG_0},{\cG_0}]\subset \cG_0, \qquad [{\cG_0},{\cG_1}]\subset
\cG_1, \qquad [{\cG_1},{\cG_1}]\subset \cG_1. \label{ss50}
\eeq
In particular, an even part $\cG_0$ of a Lie $\cA$-superalgebra
$\cG$ is a Lie $\cA_0$-algebra. Given an $\cA$-superalgebra, a
$\mathbb Z_2$-graded $\cA$-module $P$ is called a $\cG$-module if
it is provided with an $\cA$-bilinear map
\be
&& \cG\times P\ni (\ve,p)\to \ve p\in P, \qquad [\ve
p]=([\ve]+[p])\mathrm{mod}\,2,\\
&& [\ve,\ve']p=(\ve\circ\ve'-(-1)^{[\ve][\ve']}\ve'\circ\ve)p.
\ee
\end{remark}

We mainly restrict our consideration to the following particular
class of $\mathbb Z_2$-graded commutative rings.

\begin{definition} \label{nn114} \mar{nn114}
A $\mathbb Z_2$-graded commutative $\cK$-ring $\cA$ is said to be
the Grassmann algebra if it is a free $\cK$-module of finite rank
so that
\mar{nn115}\beq
\cA_0=\cK\oplus  (\cA_1)^2. \label{nn115}
\eeq
\end{definition}

It follows from the expression (\ref{nn115}) that a $\cK$-module
$\cA$ admits a decomposition
\mar{+11}\beq
\cA=\cK\oplus R, \qquad R =\cA_1 \oplus (\cA_1)^2, \label{+11}
\eeq
where $R$ is the ideal of nilpotents of a ring $\cA$. The
corresponding surjections
\mar{nn204}\beq
\si:\cA\to \cK, \qquad s:\cA\to R \label{nn204}
\eeq
are called the body and soul maps, respectively. Automorphisms of
a Grassmann algebra preserve its ideal $R$ of nilpotents and the
splittings (\ref{nn115}) and (\ref{+11}), but need not the odd
sector $\cA_1$.

A Grassmann-graded ring $\Om^*$ in Definition \ref{nn106}, seen as
a $\mathbb Z_2$-graded commutative ring, exemplifies a Grassmann
algebra. Conversely, any Grassmann algebra $\cA$ admits an
associative $\mathbb N$-graded structure of a Grassmann-graded
ring $\cA^*$ by a choice of its minimal generating $\cK$-module
$\cA^1$.

\begin{theorem} \label{nn288} \mar{nn288}
If $\cK$ is a field, all $\mathbb N$-graded structures $\cA^*$ of
a Grassmann algebra $\cA^*$ are isomorphic by means of
automorphisms of a $\cK$-ring $\cA$ in accordance with Theorem
\ref{nn205}.
\end{theorem}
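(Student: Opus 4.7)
The plan is to reduce the statement to comparing two bases of the odd-generator module and invoking the universal property of the exterior algebra. The key observation is that the ideal of nilpotents $R$ of $\cA$ and the quotient $R/R^2$ are intrinsic to $\cA$ as a $\cK$-ring and do not depend on the choice of $\mathbb N$-graded structure. By Definition \ref{nn106}, any Grassmann-graded structure $\cA^*$ supplies a $\cK$-module isomorphism $\cA^1 \cong R/R^2$; since $\cK$ is a field, $R/R^2$ is a $\cK$-vector space whose dimension $n$ is a ring invariant of $\cA$. Consequently, for any two Grassmann-graded structures $\cA^*$ and $\cA'^*$ on the same $\cK$-ring $\cA$, the free $\cK$-modules $\cA^1$ and $\cA'^1$ have the same finite rank $n$.

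Next, I would use the characterization alluded to in the introduction (Theorem \ref{nn117}): a Grassmann algebra is isomorphic to the exterior algebra $\wedge W$ of its odd generator space. Applied to both structures, this gives $\cA \cong \wedge \cA^1 \cong \wedge \cA'^1$ as $\cK$-algebras. Concretely, picking $\cK$-bases $\{c^i\}_{i=1}^n$ of $\cA^1$ and $\{c'^i\}_{i=1}^n$ of $\cA'^1$, the full ring $\cA$ has a $\cK$-basis consisting of ordered monomials in either family, and $\dim_\cK \cA = 2^n$.

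Third, I would construct the desired $\cK$-ring automorphism $\Phi:\cA\to\cA$ by sending $c^i \mapsto c'^i$ and extending multiplicatively and $\cK$-linearly. For well-definedness, the relations among the $c^i$ in $\cA$ are generated by the Grassmann anticommutation relations $c^i c^j + c^j c^i = 0$: these hold because $\cA^1 \subset \cA_1$ in the associated $\mathbb Z_2$-grading (\ref{nn104}), so graded commutativity (\ref{nn92}) forces anticommutativity among odd generators. The same relations are satisfied by the $c'^i$ for exactly the same reason, so the assignment extends to a well-defined $\cK$-algebra homomorphism $\Phi$. Surjectivity follows because $\{c'^i\}$ generates $\cA$, and since $\cA$ is a free $\cK$-module of finite rank $2^n$, a surjective $\cK$-linear endomorphism is automatically bijective. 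By construction $\Phi(\cA^1) = \cA'^1$, and since each structure is generated in degree $1$, it follows that $\Phi(\cA^k) = \cA'^k$ for every $k$, which is the isomorphism of $\mathbb N$-graded structures required by Theorem \ref{nn205}.

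The main obstacle is verifying that the $c'^i$ satisfy no more relations than the Grassmann ones, so that the assignment $c^i\mapsto c'^i$ genuinely extends. This is precisely where the hypothesis that $\cK$ is a field enters: it guarantees that $R/R^2$ is a vector space of well-defined dimension, that both $\cA^1$ and $\cA'^1$ are free of the same rank $n$, and that the isomorphism $\cA \cong \wedge \cA^1$ (equivalently $\cA \cong \wedge \cA'^1$) pins down the ideal of relations as exactly the Grassmann ideal. Without this, one could have nontrivial additional relations among the $c'^i$ that the $c^i$ do not satisfy, and $\Phi$ would fail to be well-defined.
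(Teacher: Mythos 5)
Your construction is essentially correct, but it follows a genuinely different route from the paper's. The paper offers no hands-on construction at all: it notes that a Grassmann-graded structure is by Definition \ref{nn106} finitely generated in degree 1 (Definition \ref{nn210}), regards the identity map of $\cA$ as a $\cK$-ring isomorphism between $(\cA,\cA^*)$ and $(\cA,\cA'^*)$, and invokes Theorem \ref{nn205} (the Bell--Zhang graded isomorphism lemma) to obtain an automorphism of the $\cK$-ring $\cA$ with $\Phi(\cA^i)=\cA'^i$. You instead build the automorphism explicitly: the intrinsic maximal ideal $R$ of nilpotents and its quotient $R/R^2$ force $\cA^1$ and $\cA'^1$ to have the same finite rank, and the assignment $c^i\mapsto c'^i$ is extended by the universal property of the exterior algebra; your point that the $c'^i$ anticommute because they are odd with respect to the $\mathbb Z_2$-grading associated with the \emph{primed} structure (which need not coincide with the original one) is the correct justification, and surjectivity plus finite dimensionality over the field $\cK$ gives bijectivity, whence $\Phi(\cA^k)=(\cA'^1)^k=\cA'^k$. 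The paper's route buys brevity and generality (Theorem \ref{nn205} uses only finite generation in degree 1 and never needs $\cA$ to be an exterior algebra); your route buys an explicit automorphism of the form $c^i\to c'^i$ discussed around (\ref{nn127}), rather than a bare existence statement.

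The one substantive caveat is your reliance on Theorem \ref{nn117}. In the paper that theorem is stated \emph{after} Theorem \ref{nn288} and is presented as its consequence, so quoting it here risks circularity relative to the paper's development. What your argument actually needs is that the canonical graded epimorphism $\w\cA^1\to\cA^*$ is injective, i.e.\ that the $c^i$ satisfy no relations beyond the Grassmann ones; Definition \ref{nn106} alone does not deliver this, since ``generated by $\Om^1$'' together with $R/R^2=\Om^1$ excludes superfluous generators but not additional relations in degrees $\geq 2$. So to make the proof self-contained you must establish $\cA\cong\w\cA^1$ independently of Theorem \ref{nn288} (for instance by a dimension or filtration count using that $\cA$ is a free $\cK$-module of finite rank), a step your final paragraph asserts but does not carry out; the paper's appeal to Theorem \ref{nn205} is precisely what lets it bypass this point.
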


In particular, we come to the following.

\begin{theorem} \label{nn117} \mar{nn117}
Given a Grassmann algebra $\cA$ over a field $\cK$ and an
associated Grassmann-graded ring $\cA^*$, there exists a
finite-dimensional vector space $W$ over $\cK$ so that $\cA$ and
$\cA^*$ are isomorphic to the exterior algebra $\w W$ of $W$
(Example \ref{ws40}) seen as a Grassmann-graded ring $\cA^*$
generated by $\cA^1=W$.
\end{theorem}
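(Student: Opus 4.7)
The plan is to take $W=\cA^1$, which by Definition \ref{nn106} is a free $\cK$-module of finite rank and hence a finite-dimensional $\cK$-vector space, and to construct an isomorphism $\phi:\w W\to \cA^*$ of $\mathbb N$-graded $\cK$-algebras that restricts to the identity on $\cK=\cA^0=\w^0 W$ and on $W=\cA^1=\w^1 W$.

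First, I would invoke the universal property of the tensor algebra $\ot W$ of Example \ref{ws40}: the inclusion $W=\cA^1\hookrightarrow \cA^*$ of $\cK$-modules extends uniquely to a homomorphism $\ot W\to \cA^*$ of $\mathbb N$-graded $\cK$-algebras. Since $\cA^*$ is $\mathbb N$-graded commutative (Definition \ref{nn44}) with $W$ sitting in odd degree $1$, the elements $w\ot w'+w'\ot w$ with $w,w'\in W$ lie in the kernel of this map. Hence the map factors through the quotient $\w W=\ot W/\langle w\ot w'+w'\ot w\rangle$, producing the desired graded $\cK$-algebra morphism $\phi:\w W\to \cA^*$. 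Surjectivity of $\phi$ is immediate from the third bullet of Definition \ref{nn106}: since $\cA^*$ is generated by $\cA^1=W$ as a $\cK$-algebra, every element of $\cA^k$ is a $\cK$-linear combination of products $w_{i_1}\cdots w_{i_k}$, each of which lies in $\phi(\w^k W)$.

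The essential step is injectivity of $\phi$. Fixing a $\cK$-basis $\{e_1,\dots,e_n\}$ of $W$, the standard basis $\{e_{i_1}\w\cdots\w e_{i_k}:1\leq i_1<\cdots<i_k\leq n\}$ of $\w W$ maps onto a $\cK$-spanning set of $\cA^*$; I must show these images are $\cK$-linearly independent. Here I would exploit the hypothesis that $\cA$ is a Grassmann algebra in the sense of Definition \ref{nn114}, i.e.\ a free $\cK$-module of finite rank with $\cA_0=\cK\oplus(\cA_1)^2$, together with the compatibility $\cA_0=\bigoplus_i\cA^{2i}$, $\cA_1=\bigoplus_i\cA^{2i+1}$ of the $\mathbb Z_2$-grading and the $\mathbb N$-grading \eqref{nn104}. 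Combined with Theorem \ref{nn288} (uniqueness of the $\mathbb N$-graded structure of $\cA$ over a field $\cK$), this forces $\dim_\cK \cA^k$ to equal $\binom{n}{k}$ in every degree, since these dimensions must be realisable by the standard exterior $\mathbb N$-graded structure on $\cA$.

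The main obstacle is therefore this dimension count: showing that the graded surjection $\phi$ has zero kernel in each degree $k\geq 2$. Surjectivity, factorisation through $\w W$, and the degree-$0$ and degree-$1$ cases are all forced by the universal properties and by $W=\cA^1$; the real work lies in ruling out nontrivial relations among products of the generators, which requires the full strength of $\cA$ being a Grassmann algebra (free of finite rank as a $\cK$-module with $\cA_0=\cK\oplus (\cA_1)^2$) rather than merely being graded-commutatively generated in degree $1$.
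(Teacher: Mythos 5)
Your construction of $\phi\colon \w W\to \cA^*$ (universal property of $\ot W$, factorisation through the anticommutativity relations, surjectivity from the third bullet of Definition \ref{nn106}) is sound, and it matches how the paper treats the situation via a generating basis $\{c^i\}$ and the expansion (\ref{z784}). The gap is exactly where you yourself locate the ``real work'': injectivity. Your justification of the dimension count $\dim_\cK\cA^k=\binom{n}{k}$ is circular. Theorem \ref{nn288} (via Theorem \ref{nn205}) only asserts that any two $\mathbb N$-graded structures of $\cA$ finitely generated in degree 1 are isomorphic \emph{to each other}; it does not produce a ``standard exterior $\mathbb N$-graded structure on $\cA$'' to compare dimensions with. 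The existence of such a structure, i.e.\ of a ring isomorphism $\cA\cong\w W$, is precisely the content of Theorem \ref{nn117}, so it cannot be invoked at this point of the argument.

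Moreover, injectivity is not a formal consequence of the conditions you cite, so no amount of bookkeeping with Theorem \ref{nn288} can close the gap: the rank-3 algebra $\cA=\cK\oplus\cK c^1\oplus\cK c^2$ with $c^ic^j=0$ is $\mathbb Z_2$-graded commutative, free of finite rank, satisfies $\cA_0=\cK\oplus(\cA_1)^2$ (both sides equal $\cK$), and its obvious $\mathbb N$-grading is generated in degree 1 in the sense of Definitions \ref{nn106} and \ref{nn210} ($\cA/R=\cK$, $R/R^2=\cA^1$); yet the surjection $\w\cA^1\to\cA^*$ kills $c^1\w c^2$, since $\dim_\cK\cA=3\neq 2^2$. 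Hence the linear independence of the monomials $c^{i_1}\cdots c^{i_k}$ in a generating basis (equivalently $\dim_\cK\cA=2^{\dim_\cK\cA^1}$) must enter as an additional input; it is the crux of the theorem and is supplied neither by your appeal to Theorem \ref{nn288} nor by the definitions alone. For comparison, the paper gives no argument for this step either: it states Theorem \ref{nn117} as an immediate particular case of Theorem \ref{nn288}, so your proposal essentially reproduces the paper's derivation together with its missing ingredient rather than filling it.
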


Let us note that, by automorphisms of a Grassmann-graded ring
$\cA^*$ are meant automorphisms of a $\cK$-ring $\cA$ which
preserve its $\mathbb N$-gradation $\cA^*$ (Section 6).
Accordingly, automorphisms of a Grassmann algebra $\cA$ are
automorphisms of a $\cK$-ring $\cA$ which preserve its $\mathbb
Z_2$-gradation $\cA^*$. However, there exist automorphisms of a
$\cK$-ring $\cA$ which do not satisfy this condition as follows.

Given a generating basis $\{c^i\}$ for a $\cK$-module $\cA^1$,
elements of a Grassmann-graded ring $\cA^*$  take a form
\mar{z784}\beq
a=\op\sum_{k=0,1,\ldots} \op\sum_{(i_1\cdots i_k)}a_{i_1\cdots
i_k}c^{i_1}\cdots c^{i_k}, \label{z784}
\eeq
where the second sum runs through all the tuples $(i_1\cdots i_k)$
such that no two of them are permutations of each other. We agree
to call $\{c^i\}$ the generating basis for a Grassmann algebra
$\cA$ which brings it into a Grassmann-graded ring $\cA^*$.

Given a generating basis $\{c^i\}$ for a Grassmann algebra $\cA$,
any its $\cK$-ring automorphisms are compositions of automorphisms
\mar{nn127}\beq
c^i\to c'^i=\rho^i_jc^j + b^i, \label{nn127}
\eeq
where $\rho$ is an automorphism of $\cK$-module $\cA^1$ and $b^i$
are odd elements of $\cA^{>2}$, and automorphisms
\mar{nn280}\beq
c^i\to c'^i=c^i(\bb + \kappa), \qquad \kappa\in \cA_1.
\label{nn280}
\eeq
Automorphisms (\ref{nn127}) where $b^i= 0$ are automorphisms of a
Grassmann-graded ring $A^*$. If $b^i\neq 0$, the automorphism
(\ref{nn127}) preserve a $\mathbb Z_2$-graded structure  of $\cA$,
does not keep its $\mathbb N$-graded structure $\cA^*$ of, and
yields a different $\mathbb N$-graded structure $\cA'^*$ where
$\{c'^i\}$ (\ref{nn127}) is a basis for $\cA'^1$ and the
generating basis for $\cA'^*$. Automorphisms (\ref{nn280})
preserve an even sector $\cA_0$ of $A$, but not the odd one
$\cA_1$. However, it follows from Theorem \ref{nn288} that, since
$\mathbb N$-graded and $\mathbb Z_2$-graded structures of a
Grassmann algebra are associated, their $\mathbb Z_2$-graded
structures are isomorphic if $\cK$ is a field.

\subsection{$\mathbb Z_2$-Graded differential calculus}

The differential calculus on $\mathbb Z_2$-graded modules over
$\mathbb Z_2$-graded commutative rings is defined similarly to
that over commutative rings \cite{book05,sard09a,book12}, but
different from the differential calculus over non-commutative
rings in Section 9 (Remark \ref{nn121}).

Let $\cK$ be a commutative ring without a divisor of zero and
$\cA$ a $\mathbb Z_2$-graded commutative $\cK$-ring. Let $P$ and
$Q$ be $\mathbb Z_2$-graded $\cA$-modules. A $\mathbb Z_2$-graded
$\cK$-module $\hm_\cK (P,Q)$ of $\mathbb Z_2$-graded $\cK$-module
homomorphisms $\Phi:P\to Q$ can be endowed with the two $\mathbb
Z_2$-graded $\cA$-module structures
\mar{ws11}\beq
(a\Phi)(p)= a\Phi(p),  \qquad  (\Phi\bll a)(p) = \Phi (a p),\qquad
a\in \cA, \quad p\in P, \label{ws11}
\eeq
called $\cA$- and $\cA^\bll$-module structures, respectively. Let
us put
\mar{nn120}\beq
\dl_a\Phi= a\Phi -(-1)^{[a][\Phi]}\Phi\bll a, \qquad a\in\cA.
\label{nn120}
\eeq

\begin{definition} \label{nn122} \mar{nn122}
An element $\Delta\in\hm_\cK(P,Q)$ is said to be the $Q$-valued
$\mathbb Z_2$-graded differential operator of order $s$ on $P$ if
\mar{nn300}\beq
\dl_{a_0}\circ\cdots\circ\dl_{a_s}\Delta=0 \label{nn300}
\eeq
for any tuple of $s+1$ elements $a_0,\ldots,a_s$ of $\cA$. A set
$\dif_s(P,Q)$ of these operators inherits the $\mathbb Z_2$-graded
$\cA$-module structures (\ref{ws11}).
\end{definition}

\begin{remark} \label{nn121} \mar{nn121}
It should be emphasized that, though a $\mathbb Z_2$-graded
commutative ring is a non-commutative ring, the map (\ref{nn120})
differs from the map (\ref{ws133a}), and therefore the
differential calculus over a $\mathbb Z_2$-graded commutative ring
is not the particular non-commutative differential calculus in
Section 9. Let us note that, though $\mathbb Z_2$-graded Clifford
algebras (Example \ref{nn272}) fail to be graded commutative, the
definition of differential operators over them also is based on
the formula (\ref{nn120}), but not (\ref{ws133a}). However, graded
derivations of a Clifford algebra do not form a free module over
it.
\end{remark}

For instance, zero-order $\mathbb Z_2$-graded differential
operators obey a condition
\be
\dl_a \Delta(p)=a\Delta(p)-(-1)^{[a][\Delta]}\Delta(ap)=0, \qquad
a\in\cA, \qquad p\in P,
\ee
i.e., they coincide with graded $\cA$-module morphisms $P\to Q$
(cf. Remark \ref{nn103}).

A first-order $\mathbb Z_2$-graded differential operator $\Delta$
satisfies a condition
\be
&& \dl_a\circ\dl_b\,\Delta(p)=
ab\Delta(p)- (-1)^{([b]+[\Delta])[a]}b\Delta(ap)-
(-1)^{[b][\Delta]}a\Delta(bp)+\\
&& \qquad (-1)^{[b][\Delta]+([\Delta]+[b])[a]}
=0, \qquad a,b\in\cA, \quad p\in P.
\ee

For instance, let $P=\cA$. Any zero-order $Q$-valued $\mathbb
Z_2$-graded differential operator $\Delta$ on $\cA$ is defined by
its value $\Delta(\bb)$. Then there is a graded $\cA$-module
isomorphism $\dif_0(\cA,Q)=Q$ via the association
\be
Q\ni q\to \Delta_q\in \dif_0(\cA,Q),
\ee
where $\Delta_q$ is given by the equality $\Delta_q(\bb)=q$. A
first-order $Q$-valued $\mathbb Z_2$-graded differential operator
$\Delta$ on $\cA$ fulfils a condition
\be
\Delta(ab)= \Delta(a)b+ (-1)^{[a][\Delta]}a\Delta(b)
-(-1)^{([b]+[a])[\Delta]} ab \Delta(\bb), \qquad  a,b\in\cA.
\ee
It is called the $Q$-valued $\mathbb Z_2$-graded derivation of
$\cA$ if $\Delta(\bb)=0$, i.e., the graded Leibniz rule
\be
\Delta(ab) = \Delta(a)b + (-1)^{[a][\Delta]}a\Delta(b), \qquad
a,b\in \cA,
\ee
holds (cf. (\ref{+a20})). One obtains at once that any first-order
$\mathbb Z_2$-graded differential operator on $\cA$ falls into a
sum
\be
\Delta(a)= \Delta(\bb)a +[\Delta(a)-\Delta(\bb)a]
\ee
of a zero-order graded differential operator $\Delta(\bb)a$ and a
graded derivation $\Delta(a)-\Delta(\bb)a$. If $\dr$ is a $\mathbb
Z_2$-graded derivation of $\cA$, then $a\dr$ is so for any $a\in
\cA$. Hence, $\mathbb Z_2$-graded derivations of $\cA$ constitute
a $\mathbb Z_2$-graded $\cA$-module $\gd(\cA,Q)$, called the
graded derivation module.

If $Q=\cA$, $\mathbb Z_2$-graded derivation of $\cA$ obeys the
graded Leibniz rule
\mar{nn124}\beq
\dr(ab) = \dr(a)b + (-1)^{[a][\dr]}a\dr(b), \qquad a,b\in \cA,
\label{nn124}
\eeq
(cf. (\ref{ws100'})). The derivation module $\gd\cA$ also is a Lie
superalgebra (Definition \ref{nn201}) over a commutative ring
$\cK$ with respect to the superbracket
\mar{ws14}\beq
[u,u']=u\circ u' - (-1)^{[u][u']}u'\circ u, \qquad u,u'\in \cA.
\label{ws14}
\eeq
We have an $\cA$-module decomposition
\be
\dif_1(\cA) = \cA \oplus\gd\cA.
\ee

Graded differential operators on a $\mathbb Z_2$-graded
commutative $\cK$-ring $\cA$ form a direct system of $\mathbb
Z_2$-graded $(\cA-\cA^\bullet)$-modules.
\mar{nn136}\beq
\dif_0(\cA)\ar^\mathrm{in} \dif_1(\cA)\cdots
\ar^\mathrm{in}\dif_r(\cA)\ar\cdots \label{nn136}
\eeq
Its direct limit $\dif_\infty(\cA)$ (Remark \ref{nn3}) is a
$\mathbb Z_2$-graded module of all $\mathbb Z_2$-graded
differential operators on $\cA$.

Since the graded derivation module $\gd\cA$ is a Lie
$\cK$-superalgebra, let us consider the Chevalley--Eilenberg
complex $C^*[\gd\cA;\cA]$ where a $\mathbb Z_2$-graded commutative
ring $\cA$ is a regarded as a $\gd\cA$-module \cite{fuks,book12}.
It is a complex
\mar{ws85}\beq
0\to \cA\ar^d C^1[\gd\cA;\cA]\ar^d \cdots
C^k[\gd\cA;\cA]\ar^d\cdots \label{ws85}
\eeq
where
\be
C^k[\gd\cA;\cA]=\hm_\cK(\op\w^k \gd\cA,\cA)
\ee
are $\gd\cA$-modules of $\cK$-linear graded morphisms of the
graded exterior products $\op\w^k \gd\cA$ of a $\mathbb
Z_2$-graded $\cK$-module $\gd\cA$ to $\cA$. Let us bring
homogeneous elements of $\op\w^k \gd\cA$ into a form
\be
\ve_1\w\cdots\ve_r\w\e_{r+1}\w\cdots\w \e_k, \qquad
\ve_i\in\gd\cA_0, \quad \e_j\in\gd\cA_1.
\ee
Then the Chevalley--Eilenberg coboundary operator $d$ of the
complex (\ref{ws85}) is given by the expression
\ben
&& dc(\ve_1\w\cdots\w\ve_r\w\e_1\w\cdots\w\e_s)=
\label{ws86}\\
&&\op\sum_{i=1}^r (-1)^{i-1}\ve_i
c(\ve_1\w\cdots\wh\ve_i\cdots\w\ve_r\w\e_1\w\cdots\e_s)+
\nonumber \\
&& \op\sum_{j=1}^s (-1)^r\ve_i
c(\ve_1\w\cdots\w\ve_r\w\e_1\w\cdots\wh\e_j\cdots\w\e_s)
+\nonumber\\
&& \op\sum_{1\leq i<j\leq r} (-1)^{i+j}
c([\ve_i,\ve_j]\w\ve_1\w\cdots\wh\ve_i\cdots\wh\ve_j
\cdots\w\ve_r\w\e_1\w\cdots\w\e_s)+\nonumber\\
&&\op\sum_{1\leq i<j\leq s} c([\e_i,\e_j]\w\ve_1\w\cdots\w
\ve_r\w\e_1\w\cdots
\wh\e_i\cdots\wh\e_j\cdots\w\e_s)+\nonumber\\
&& \op\sum_{1\leq i<r,1\leq j\leq s} (-1)^{i+r+1}
c([\ve_i,\e_j]\w\ve_1\w\cdots\wh\ve_i\cdots\w\ve_r\w
\e_1\w\cdots\wh\e_j\cdots\w\e_s),\nonumber
\een
where the caret $\,\wh{}\,$ denotes omission. This operator is
called the graded Chevalley--Eilenberg coboundary operator.

Let us consider the extended Chevalley--Eilenberg complex
\be
0\to \cK\ar^\mathrm{in}C^*[\gd\cA;\cA].
\ee
It is easily justified that this complex contains a subcomplex
$\cO^*[\gd\cA]$ of $\cA$-linear graded morphisms. The $\mathbb
N$-graded module $\cO^*[\gd\cA]$ is provided with the structure of
a bigraded $\cA$-algebra  with respect to the graded exterior
product
\ben
&& \f\w\f'(u_1,...,u_{r+s})= \label{ws103'}\\
&& \qquad \op\sum_{i_1<\cdots<i_r;j_1<\cdots<j_s}
\mathrm{Sgn}^{i_1\cdots i_rj_1\cdots j_s}_{1\cdots r+s}
\f(u_{i_1},\ldots,
u_{i_r}) \f'(u_{j_1},\ldots,u_{j_s}), \nonumber \\
&& \f\in \cO^r[\gd\cA], \qquad \f'\in \cO^s[\gd\cA], \qquad u_k\in \gd\cA,
\nonumber
\een
where $u_1,\ldots, u_{r+s}$ are graded-homogeneous elements of
$\gd\cA$ and
\be
u_1\w\cdots \w u_{r+s}= \mathrm{Sgn}^{i_1\cdots i_rj_1\cdots
j_s}_{1\cdots r+s} u_{i_1}\w\cdots\w u_{i_r}\w u_{j_1}\w\cdots\w
u_{j_s}.
\ee
The graded Chevalley--Eilenberg coboundary operator $d$
(\ref{ws86}) and the graded exterior product $\w$ (\ref{ws103'})
bring $\cO^*[\gd\cA]$ into a differential bigraded ring whose
elements obey relations
\ben
&& \f\w \f'=(-1)^{|\f||\f'|+[\f][\f']}\f'\w\f, \label{ws45} \\
&& d(\f\w\f')= d\f\w\f' +(-1)^{|\f|}\f\w d\f'. \label{ws44}
\een
It is called the graded Chevalley--Eilenberg differential calculus
over a $\mathbb Z_2$-graded commutative $\cK$-ring $\cA$. In
particular, we have
\beq
\cO^1[\gd\cA]=\hm_\cA(\gd\cA,\cA)=\gd\cA^*. \label{ws47}
\eeq
One can extend this duality relation to the graded interior
product  of $u\in\gd\cA$ with any element $\f\in \cO^*[\gd\cA]$ by
the rules
\ben
&& u\rfloor(bda) =(-1)^{[u][b]}u(a),\qquad a,b \in\cA, \nonumber\\
&& u\rfloor(\f\w\f')=
(u\rfloor\f)\w\f'+(-1)^{|\f|+[\f][u]}\f\w(u\rfloor\f').
\label{ws46}
\een
As a consequence, any $\mathbb Z_2$-graded derivation $u\in\gd\cA$
of $\cA$ yields a graded derivation
\ben
&& \bL_u\f= u\rfloor d\f + d(u\rfloor\f), \qquad \f\in\cO^*, \qquad
u\in\gd\cA, \label{+117} \\
&& \bL_u(\f\w\f')=\bL_u(\f)\w\f' + (-1)^{[u][\f]}\f\w\bL_u(\f'), \nonumber
\een
termed the graded Lie derivative of a differential bigraded ring
$\cO^*[\gd\cA]$.

The minimal graded Chevalley--Eilenberg differential calculus
$\cO^*\cA\subset \cO^*[\gd\cA]$  over a $\mathbb Z_2$-graded
commutative ring $\cA$ consists of monomials $a_0da_1\w\cdots\w
da_k$, $a_i\in\cA$. The corresponding complex
\beq
0\to\cK\ar \cA\ar^d\cO^1\cA\ar^d \cdots  \cO^k\cA\ar^d \cdots
\label{t100}
\eeq
is called the bigraded de Rham complex  of a $\mathbb Z_2$-graded
commutative $\cK$-ring $\cA$.

Let us note that, if $\cA$ is a commutative ring, $\mathbb
Z_2$-graded differential operators and the graded
Chevalley--Eilenberg differential calculus defined above restart
the familiar commutative ones in Sections 2.2 and 2.4.

\begin{example} \label{nn133} \mar{nn133}
Let $\cA$ be a Grassmann algebra provided with an odd generating
basis $\{c^i\}$. Its $\mathbb Z_2$-graded derivations are defined
in full by their action on the generating elements $c^i$. Let us
consider odd derivations
\mar{nn130}\beq
\dr_i(c^j)=\dl^j_i, \qquad \dr_i\circ\dr_j=-\dr_j\circ\dr_i.
\label{nn130}
\eeq
Then any $\mathbb Z_2$-graded derivation of $\cA$ takes a form
\mar{nn131}\beq
u=u^i\dr_i, \qquad u_i\in\cA. \label{nn131}
\eeq
Graded derivations (\ref{nn131}) constitute the free $\mathbb
Z_2$-graded $\cA$-module $\gd\cA$ of finite rank. It also is a
finite dimensional Lie superalgebra over $\cK$ with respect to the
superbracket (\ref{ws14}). Any $\mathbb Z_2$-graded differential
operator on a Grassmann algebra $\cA$ is a composition of $\mathbb
Z_2$-graded derivations.
\end{example}

\subsection{$\mathbb Z_2$-Graded manifolds}

As was mentioned in Remark \ref{nn245}, the notion of a local ring
is extended to the $\mathbb Z_2$-graded ones (Definition
\ref{nn251}), and formalism of $\mathbb Z_2$-graded commutative
local-ringed spaces can be developed. Grassmann-graded rings in
Definition \ref{nn106} and Grassmann algebras in Definition
\ref{nn114} are local. Their maximal ideals consist of all
nilpotent elements. We restrict our consideration to real
Grassmann algebras $\La$. In this case Theorem \ref{nn1} remains
true, and we follow the notion of local-ringed spaces in Section 3
\cite{bart,book05,sard09a,ijgmmp13}.

A $\mathbb Z_2$-graded manifold  of dimension $(n,m)$ is
 defined as a local-ringed space $(Z,\gA)$
(Definition \ref{nn250}) where $Z$ is an $n$-dimensional smooth
manifold, and $\gA=\gA_0\oplus\gA_1$ is a sheaf of real Grassmann
algebras such that:

$\bullet$ there is the exact sequence of sheaves
\mar{cmp140}\beq
0\to \cR \to\gA \op\to^\si C^\infty_Z\to 0, \qquad
\cR=\gA_1+(\gA_1)^2,\label{cmp140}
\eeq
where $C^\infty_Z$ is the sheaf of smooth real functions on $Z$;

$\bullet$ $\cR/\cR^2$ is a locally free sheaf of
$C^\infty_Z$-modules of finite rank (with respect to pointwise
operations), and the sheaf $\gA$ is locally isomorphic to the
exterior product $\w_{C^\infty_Z}(\cR/\cR^2)$.

A sheaf $\gA$ is called the structure sheaf
 of a $\mathbb Z_2$-graded manifold
$(Z,\gA)$, and a manifold $Z$ is said to be the body
 of $(Z,\gA)$. Sections of the
sheaf $\gA$ are termed graded functions
 on a $\mathbb Z_2$-graded manifold $(Z,\gA)$. They make
up a $\mathbb Z_2$-graded commutative $C^\infty(Z)$-ring $\gA(Z)$
 called the structure ring of $(Z,\gA)$.

By virtue of the well-known Batchelor theorem
 \cite{bart,book12}, $\mathbb Z_2$-graded
manifolds possess the following structure.

\begin{theorem} \label{lmp1a} \mar{lmp1a}
Let $(Z,\gA)$ be a $\mathbb Z_2$-graded manifold. There exists a
vector bundle $E\to Z$ with an $m$-dimensional typical fibre $V$
such that the structure sheaf $\gA$ of $(Z,\gA)$ as a sheaf in
real rings is isomorphic to the structure sheaf $\gA_E=\w E^*_Z$
of germs of sections of the exterior bundle
\mar{nn6}\beq
\w E^*=(Z\times\mathbb R)\op\oplus_Z E^* \op\oplus_Z \w E^*
\op\oplus_Z \op\w^2 E^*\cdots,\label{nn6}
\eeq
whose typical fibre is the Grassmann algebra $\La=\w V^*$ in
Theorem \ref{nn117}.
\end{theorem}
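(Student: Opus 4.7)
The plan is to manufacture the vector bundle $E$ directly from the intrinsic data of $(Z,\gA)$ and then produce a sheaf isomorphism $\gA\cong \gA_E$ by a partition of unity argument. First, the hypothesis that $\cR/\cR^2$ is a locally free sheaf of $C^\infty_Z$-modules of finite rank $m$, combined with the Serre--Swan correspondence (Theorem \ref{sp60}) in its sheaf form (Example \ref{nn265}), produces a rank-$m$ real vector bundle $E^*\to Z$ whose sheaf of germs of sections is exactly $\cR/\cR^2$. Take $E\to Z$ to be its dual bundle with typical fibre $V$ of dimension $m$; the typical fibre of $\w E^*$ is then the Grassmann algebra $\La=\w V^*$ of Theorem \ref{nn117}, and $\gA_E=\w E^*_Z$ tautologically satisfies the defining axioms of a $\mathbb Z_2$-graded manifold with body $Z$ and $\cR_E/\cR_E^2\cong E^*_Z\cong\cR/\cR^2$.

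Next, I would construct the isomorphism locally. Choose a locally finite open cover $\{U_\al\}$ of $Z$ over which $E$ trivialises. On each $U_\al$, both $\gA|_{U_\al}$ and $\gA_E|_{U_\al}$ are isomorphic, by the very definition of a $\mathbb Z_2$-graded manifold, to $\w_{C^\infty_{U_\al}}(\cR/\cR^2)|_{U_\al}$. Composing these local trivialisations, select sheaf isomorphisms $\psi_\al:\gA_E|_{U_\al}\to\gA|_{U_\al}$ which cover the identity on $C^\infty_{U_\al}$ via the body maps $\si$ and which induce the identity on $\cR/\cR^2$.

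The core of the argument is to glue the $\psi_\al$ into a global isomorphism. On overlaps $U_\al\cap U_\bt$, the comparison morphisms $\theta_{\al\bt}=\psi_\al^{-1}\circ\psi_\bt$ are sheaf automorphisms of $\gA_E|_{U_\al\cap U_\bt}$ that reduce to the identity modulo $\cR_E$, and they form a \v{C}ech $1$-cocycle with values in the sheaf $\gU$ of such nilpotent automorphisms of $\gA_E$. The sheaf $\gU$ is filtered by the powers of $\cR_E$, and the associated graded is a finite sum of locally free sheaves of $C^\infty_Z$-modules, hence fine and acyclic. Working inductively up this filtration and using smooth partitions of unity subordinate to $\{U_\al\}$ at each stage, one can successively modify the $\psi_\al$ so as to kill the cocycle one filtration step at a time; the process terminates after finitely many steps because $\La$ is finite-dimensional. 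The adjusted local maps then agree on overlaps and patch to a global sheaf isomorphism $\gA_E\to\gA$.

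The main obstacle is precisely this gluing/cohomology-vanishing step: it is here that the smooth structure on $Z$ enters, through the existence of partitions of unity by smooth functions (Remark \ref{ws1}) and the resulting fineness of sheaves of $C^\infty_Z$-modules. This is exactly the feature of the smooth category that makes Batchelor's theorem hold, while the analogous statement fails in the holomorphic or real-analytic setting, where the sheaves in question are not fine and the obstructions in $H^1$ need not vanish.
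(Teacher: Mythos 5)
The paper itself offers no proof of this statement: Theorem \ref{lmp1a} is quoted as the ``well-known Batchelor theorem'' with a reference to \cite{bart,book12}, and the text immediately proceeds to exploit it (noting only that the isomorphism is non-canonical). Your sketch is, in outline, precisely the standard argument found in those references, and it is sound: $\cR/\cR^2$ locally free of finite rank over $C^\infty_Z$ gives the vector bundle $E^*\to Z$ (here the relevant fact is the equivalence between locally free sheaves of finite rank and vector bundles, i.e.\ the sheaf-theoretic counterpart of Theorem \ref{sp60}, rather than Example \ref{nn265}, which concerns connections); local isomorphisms $\psi_\al$ exist by the very definition of a $\mathbb Z_2$-graded manifold; and the obstruction to gluing is a \v{C}ech $1$-cocycle with values in the sheaf of automorphisms of $\gA_E$ inducing the identity on the body and on $\cR_E/\cR_E^2$. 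That sheaf of groups is non-abelian, but, as you say, it is filtered by the powers of $\cR_E$ with abelian graded quotients which are sheaves of $C^\infty_Z$-module homomorphisms, hence fine and acyclic on the paracompact body $Z$; an induction over the finite filtration (terminating since $\cR_E^{m+1}=0$) kills the cocycle step by step, and you correctly identify this use of smooth partitions of unity as the point where the smooth category enters and where the holomorphic and real-analytic analogues fail. Two small points deserve attention if the sketch were to be written out: the local isomorphisms supplied by the definition need to be normalized (by composing with the automorphisms they induce on $C^\infty_{U_\al}$ and on $(\cR/\cR^2)|_{U_\al}$) so that they really do cover the identity on the body and on $\cR/\cR^2$, and in the inductive step one must check that the partition-of-unity correction at filtration level $k$ does not disturb the agreement already achieved at lower levels — which is exactly what the filtration ordering guarantees. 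With these routine details supplied, your proposal reproduces the proof the paper delegates to the literature.
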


It should be emphasized that Batchelor's isomorphism in Theorem
\ref{lmp1a} fails to be canonical. In applications, it however is
fixed from the beginning. Therefore, we restrict our consideration
to $\mathbb Z_2$-graded manifolds $(Z,\gA_E)$ whose structure
sheaf is the sheaf of germs of sections of some exterior bundle
$\w E^*$.

\begin{definition} \label{nn173} \mar{nn173}
We agree to call $(Z,\gA_E)$  the simple graded manifold modelled
over a vector bundle $E\to Z$,
 called its characteristic
vector bundle.
\end{definition}

Accordingly, the structure ring $\cA_E$  of a simple graded
manifold $(Z,\gA_E)$ is the structure module
\beq
\cA_E=\gA_E(Z)=\w E^*(Z) \label{33f1}
\eeq
of sections of the exterior bundle $\w E^*$. Automorphisms of a
simple graded manifold $(Z,\gA_E)$ are restricted to those induced
by automorphisms of its characteristic vector bundles $E\to Z$.

\begin{remark} \label{nn281} \mar{nn281}
In fact, the structure sheaf $\gA_E$ of a simple $\mathbb
Z_2$-graded manifold in Definition \ref{nn173} is a sheaf in
Grassmann-graded rings $\La^*$ whose $\mathbb N$-graded structure
is fixed. Therefore, it is an $\mathbb N$-graded manifold
(Definition \ref{nn306}).
\end{remark}

Combining Batchelor Theorem \ref{lmp1a} and classical Serre--Swan
Theorem \ref{sp60}, we come to the following Serre--Swan theorem
for $\mathbb Z_2$-graded manifolds \cite{jmp05a,book05}.

\begin{theorem} \label{vv0} \mar{vv0}
Let $Z$ be a smooth manifold. A $\mathbb Z_2$-graded commutative
$C^\infty(Z)$-ring $\cA$ is isomorphic to the structure ring of a
$\mathbb Z_2$-graded manifold with a body $Z$ iff it is the
exterior algebra of some projective $C^\infty(Z)$-module of finite
rank.
\end{theorem}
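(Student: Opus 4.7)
The plan is to split the biconditional and, for each direction, combine Batchelor's Theorem \ref{lmp1a} with the classical Serre--Swan Theorem \ref{sp60}, using the identification (\ref{33f1}) of the structure ring of a simple graded manifold as the module of sections of an exterior bundle.

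For the direction ($\Rightarrow$), I start with a $\mathbb Z_2$-graded manifold $(Z,\gA)$ with structure ring $\cA=\gA(Z)$ and apply Theorem \ref{lmp1a} to obtain a vector bundle $E\to Z$ together with an isomorphism of sheaves of rings $\gA\cong\gA_E=\w E^*_Z$. Passing to global sections and using (\ref{33f1}) gives $\cA\cong (\w E^*)(Z)$. The classical Serre--Swan Theorem \ref{sp60} applied to the dual bundle $E^*\to Z$ yields that $E^*(Z)$ is a projective $C^\infty(Z)$-module of finite rank. The remaining step is to identify $(\w E^*)(Z)$ with the graded exterior algebra $\w_{C^\infty(Z)}(E^*(Z))$ as $\mathbb Z_2$-graded commutative $C^\infty(Z)$-rings.

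For the direction ($\Leftarrow$), I start with $\cA\cong \w P$ for some projective $C^\infty(Z)$-module $P$ of finite rank. Applying Theorem \ref{sp60} in the reverse direction, I obtain a vector bundle $F\to Z$ with $P\cong F(Z)$; setting $E:=F^*$ gives $P\cong E^*(Z)$. Using once more the fibrewise/sections identification $\w(E^*(Z))\cong(\w E^*)(Z)$ together with (\ref{33f1}), I conclude $\cA\cong \cA_E=\gA_E(Z)$, so $\cA$ is the structure ring of the simple $\mathbb Z_2$-graded manifold $(Z,\gA_E)$ of Definition \ref{nn173}, whose body is $Z$.

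The main obstacle is the naturality statement that, for a finite-rank smooth vector bundle $F\to Z$, taking global sections commutes with exterior powers, i.e.\ $(\w F)(Z)\cong \w_{C^\infty(Z)}(F(Z))$ as $\mathbb Z_2$-graded commutative $C^\infty(Z)$-algebras. The canonical $C^\infty(Z)$-algebra morphism $\w_{C^\infty(Z)}(F(Z))\to (\w F)(Z)$ coming from the wedge of sections is surjective by a partition-of-unity argument (every section of $\w F$ is locally a sum of decomposable sections, and these can be glued using a smooth partition of unity subordinate to a trivializing cover), and it is injective because $F(Z)$ is projective of finite rank, so the construction reduces locally to the free case where the statement is trivial. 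Once this identification is secured, the two directions above are immediate, and the theorem follows.
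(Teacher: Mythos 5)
Your proposal is correct and follows essentially the same route as the paper: combine Batchelor's Theorem \ref{lmp1a} with the classical Serre--Swan Theorem \ref{sp60} and the identification (\ref{33f1}) of the structure ring with the sections of the exterior bundle $\w E^*$. The only difference is that you make explicit the identification $(\w E^*)(Z)\cong\w_{C^\infty(Z)}(E^*(Z))$, which the paper leaves implicit; your justification of it (partition of unity plus reduction to the locally free case, or equivalently the monoidal nature of the Serre--Swan equivalence) is sound.
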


\begin{proof}  By virtue of the Batchelor theorem,
any $\mathbb Z_2$-graded manifold is isomorphic to a simple graded
manifold $(Z,\gA_E)$ modelled over some vector bundle $E\to Z$.
Its structure ring $\cA_E$ (\ref{33f1}) of graded functions
consists of sections of the exterior bundle $\w E^*$. The
classical Serre--Swan theorem states that a $C^\infty(Z)$-module
is isomorphic to the module of sections of a smooth vector bundle
over $Z$ iff it is a projective module of finite rank.
\end{proof}

\begin{remark} \label{triv}
One can treat a local-ringed space $(Z,\gA_0=C_Z^\infty)$ as a
trivial even $\mathbb Z_2$-graded manifold. It is a simple graded
manifold whose characteristic bundle is $E=Z\times\{0\}$. Its
structure module is a ring $C^\infty(Z)$ of smooth real functions
on $Z$.
\end{remark}

Given a simple graded manifold $(Z,\gA_E)$, every trivialization
chart $(U; z^A,y^a)$ of a vector bundle $E\to Z$ yields a
splitting domain $(U; z^A,c^a)$ of $(Z,\gA_E)$. Graded functions
on such a chart are $\La$-valued functions
\mar{z785}\beq
f=\op\sum_{k=0}^m \frac1{k!}f_{a_1\ldots a_k}(z)c^{a_1}\cdots
c^{a_k}, \label{z785}
\eeq
where $f_{a_1\cdots a_k}(z)$ are smooth functions on $U$ and
$\{c^a\}$ is the  fibre basis for $E^*$. In particular, the sheaf
epimorphism $\si$ in (\ref{cmp140}) is induced by the body map of
$\La$. One calls $\{z^A,c^a\}$ the local basis for a graded
manifold $(Z,\gA_E)$ \cite{bart}.  Transition functions
$y'^a=\rho^a_b(z^A)y^b$ of bundle coordinates on $E\to Z$ induce
the corresponding transformation
\beq
c'^a=\rho^a_b(z^A)c^b \label{+6}
\eeq
(cf. (\ref{nn302})) of the associated local basis for a simple
graded manifold $(Z,\gA_E)$ and the according coordinate
transformation law of graded functions (\ref{z785}).

\begin{remark}  \label{33r62}
Strictly speaking, elements $c^a$ of the local basis for a simple
graded manifold are locally constant sections $c^a$ of $E^*\to X$
such that $y_b\circ c^a=\dl^a_b$. Therefore, graded functions are
locally represented by $\La$-valued functions (\ref{z785}), but
they are not $\La$-valued functions on a manifold $Z$ because of
the transformation law (\ref{+6}).
\end{remark}

Given a $\mathbb Z_2$-graded manifold $(Z,\gA)$, by the sheaf
$\gd\gA$ of graded derivations  of $\gA$ is meant a subsheaf of
endomorphisms of the structure sheaf $\gA$ such that any section
$u\in \gd\gA(U)$ of $\gd\gA$ over an open subset $U\subset Z$ is a
graded derivation of the real $\mathbb Z_2$-graded commutative
algebra $\gA(U)$, i.e., $u\in\gd(\gA(U))$. Conversely, one can
show that, given open sets $U'\subset U$, there is a surjection of
the graded derivation modules $\gd(\gA(U))\to \gd(\gA(U'))$
\cite{bart}. It follows that any graded derivation of a local
$\mathbb Z_2$-graded algebra $\gA(U)$ also is a local section over
$U$ of a sheaf $\gd\gA$. Global sections of $\gd\gA$ are  called
graded vector fields  on a $\mathbb Z_2$-graded manifold
$(Z,\gA)$. They make up a graded derivation module $\gd\gA(Z)$ of
a real $\mathbb Z_2$-graded commutative ring $\gA(Z)$. This module
is a real Lie superalgebra with respect to the superbracket
(\ref{ws14}).

A key point is that graded vector fields $u\in\gd\cA_E$ on a
simple graded manifold $(Z,\gA_E)$ can be represented by sections
of some vector bundle as follows \cite{book05,book00,book12}.

Due to the canonical splitting $VE= E\times E$, the vertical
tangent bundle $VE$ of $E\to Z$ can be provided with fibre bases
$\{\dr/\dr c^a\}$, which are the duals of bases $\{c^a\}$. Then
graded vector fields on a splitting domain $(U;z^A,c^a)$ of
$(Z,\gA_E)$ read
\beq
u= u^A\dr_A + u^a\frac{\dr}{\dr c^a}, \label{hn14}
\eeq
where $u^\la, u^a$ are local graded functions on $U$. In
particular,
\be
\frac{\dr}{\dr c^a}\circ\frac{\dr}{\dr c^b} =-\frac{\dr}{\dr
c^b}\circ\frac{\dr}{\dr c^a}, \qquad \dr_A\circ\frac{\dr}{\dr
c^a}=\frac{\dr}{\dr c^a}\circ \dr_A.
\ee
The graded derivations (\ref{hn14}) act on graded functions
$f\in\gA_E(U)$ (\ref{z785}) by a rule
\beq
u(f_{a\ldots b}c^a\cdots c^b)=u^A\dr_A(f_{a\ldots b})c^a\cdots c^b
+u^k f_{a\ldots b}\frac{\dr}{\dr c^k}\rfloor (c^a\cdots c^b).
\label{cmp50a}
\eeq
This rule implies a corresponding coordinate transformation law
\be
u'^A =u^A, \qquad u'^a=\rho^a_ju^j +u^A\dr_A(\rho^a_j)c^j
\ee
of graded vector fields. It follows that graded vector fields
(\ref{hn14}) can be represented by sections of a vector bundle
\mar{nn253}\beq
\cV_E= \w E^*\op\ot_E TE\to Z. \label{nn253}
\eeq
Thus, the graded derivation module $\gd\cA_E$ is isomorphic to the
structure module $\cV_E(Z)$ of global sections of the vector
bundle $\cV_E\to Z$ (\ref{nn253}).

Given the structure ring $\cA_E$ of graded functions on a simple
graded manifold $(Z,\gA_E)$ and the real Lie superalgebra
$\gd\cA_E$ of its graded derivations, let us consider the graded
Chevalley--Eilenberg differential calculus
\beq
\cS^*[E;Z]=\cO^*[\gd\cA_E] \label{33f21}
\eeq
over  $\cA_E$ where $\cS^0[E;Z]=\cA_E$.

\begin{theorem} \label{mm1} \mar{mm1}
Since a graded derivation module $\gd\cA_E$ is isomorphic to the
structure module of sections of a vector bundle $\cV_E\to Z$,
elements of $\cS^*[E;Z]$ are represented by sections of the
exterior bundle $\w\ol\cV_E$ of the $\cA_E$-dual
\mar{nn254}\beq
\ol\cV_E=\w E^*\op\ot_E T^*E\to Z \label{nn254}
\eeq
of $\cV_E$.
\end{theorem}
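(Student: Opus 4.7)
The plan is to compute $\cS^k[E;Z]$ degree by degree using two ingredients: the defining identification $\cS^k[E;Z]=\hm_{\cA_E}(\op\w^k\gd\cA_E,\cA_E)$ from the graded Chevalley--Eilenberg construction (in particular (\ref{ws47})), and the hypothesized isomorphism $\gd\cA_E\cong\cV_E(Z)$ with the structure module of the vector bundle $\cV_E=\w E^*\op\ot_E TE$.

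First I would treat the degree-one case. By (\ref{ws47}), $\cS^1[E;Z]=\gd\cA_E^*$ as an $\cA_E$-module. Because $\cV_E\to Z$ is a smooth vector bundle, its section module is a projective $C^\infty(Z)$-module of finite rank by Theorem \ref{sp60}, and hence, after viewing $\gd\cA_E=\cV_E(Z)$ as an $\cA_E$-module, its $\cA_E$-dual is naturally the module of sections of the fibrewise dual bundle. That dual bundle is precisely $\ol\cV_E=\w E^*\op\ot_E T^*E$. To verify this concretely, on a splitting domain $(U;z^A,c^a)$ the module $\gd\cA_E|_U$ is freely generated by $\{\dr_A,\dr/\dr c^a\}$ (cf.\ (\ref{hn14})), so the graded dual basis $\{dz^A,dc^a\}$ furnishes local $\cA_E|_U$-generators of $\ol\cV_E(U)$, and coordinate changes (\ref{+6}) induce the corresponding transformations of these dual sections.

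Next I would extend to arbitrary $k$. The graded exterior product (\ref{ws103'}) together with the symmetry (\ref{ws45}) equips $\cS^*[E;Z]$ with a bigraded $\cA_E$-algebra structure, and locally the monomials
\be
dz^{A_1}\w\cdots\w dz^{A_r}\w dc^{a_1}\w\cdots\w dc^{a_s}
\ee
with coefficients in $\cA_E|_U$ exhaust $\cS^{r+s}[E;Z]|_U$, since $\cS^*[E;Z]$ consists of $\cA_E$-multilinear graded skew-symmetric maps on the free module $\gd\cA_E|_U$. These monomials are exactly the local generating sections of the exterior bundle $\w\ol\cV_E$ when $dz^A$ is declared even and $dc^a$ odd. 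Globalizing by gluing splitting domains via (\ref{+6}) yields the desired representation of $\cS^*[E;Z]$ as sections of $\w\ol\cV_E$.

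The main obstacle will be matching the sign conventions: the bigraded antisymmetry (\ref{ws45}) combines a $\mathbb Z$-degree from the exterior order with a $\mathbb Z_2$-degree inherited from the Grassmann factor, and one must check that the fibrewise exterior algebra of $\ol\cV_E$---whose $\mathbb Z_2$-parity splits $\ol\cV_E$ into the even summand $\w E^*\op\ot_E T^*Z$ and an odd summand generated by the $E^*$-fibre component of $T^*E|_Z$---carries the same bigraded structure. Once this compatibility is verified in a single splitting domain, the transformation law (\ref{+6}) promotes the local isomorphism to a global one, completing the identification.
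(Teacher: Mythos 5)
Your proposal is correct and follows essentially the same route as the paper, which justifies the theorem by the duality $\cS^1[E;Z]=\gd\cA_E^*$ (\ref{ws47}) together with the local description in a splitting domain: dual bases $\{dz^A,dc^a\}$, the transition functions induced by (\ref{+6}), and the fact that $\cS^*[E;Z]$ is locally generated as a bigraded $\cA_E$-algebra by $dz^A$ (antisymmetric) and $dc^a$ (symmetric), which is exactly the fibrewise bigraded exterior algebra of $\ol\cV_E$. The projectivity/Serre--Swan input you invoke for passing between the $\cA_E$-dual module and sections of $\ol\cV_E$, and the sign compatibility you flag, are precisely the points the paper relies on (Theorems \ref{sp60}, \ref{nn228} and the relations (\ref{ws45})), so no genuinely different argument is involved.
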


With respect to the dual fibre bases $\{dz^A\}$ for $T^*Z$ and
$\{dc^b\}$ for $E^*$, sections of $\ol\cV_E$ (\ref{nn254}) take a
coordinate form
\be
\f=\f_A dz^A + \f_adc^a,
\ee
together with transition functions
\be
\f'_a=\rho^{-1}{}_a^b\f_b, \qquad \f'_A=\f_A
+\rho^{-1}{}_a^b\dr_A(\rho^a_j)\f_bc^j.
\ee
The duality isomorphism $\cS^1[E;Z]=\gd\cA_E^*$ (\ref{ws47}) is
given by  the graded interior product
\be
u\rfloor \f=u^A\f_A + (-1)^{\nw{\f_a}}u^a\f_a.
\ee
Elements of $\cS^*[E;Z]$ are called graded exterior forms on a
graded manifold $(Z,\gA_E)$.

Seen as an $\cA_E$-algebra, the differential bigraded ring
$\cS^*[E;Z]$ (\ref{33f21}) on a splitting domain $(z^A,c^a)$ is
locally generated by the graded one-forms $dz^A$, $dc^i$ such that
\be
dz^A\w dc^i=-dc^i\w dz^A, \qquad dc^i\w dc^j= dc^j\w dc^i.
\ee
Accordingly, the graded Chevalley--Eilenberg coboundary operator
$d$ (\ref{ws86}), termed the graded exterior differential, reads
\be
d\f= dz^A \w \dr_A\f +dc^a\w \frac{\dr}{\dr c^a}\f,
\ee
where  derivatives $\dr_\la$, $\dr/\dr c^a$ act on coefficients of
graded exterior forms by the formula (\ref{cmp50a}), and they are
graded commutative with the graded exterior forms $dz^A$ and
$dc^a$. The formulas (\ref{ws45}) -- (\ref{+117}) hold.

\begin{theorem} \label{v62} \mar{v62}
The differential bigraded ring $\cS^*[E;Z]$ (\ref{33f21}) is a
minimal differential calculus over $\cA_E$, i.e., it is generated
by elements $df$, $f\in \cA_E$.
\end{theorem}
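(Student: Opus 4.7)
The plan is to show that every graded exterior form $\phi\in\cS^k[E;Z]$ admits a representation as a finite sum of monomials $f_0\, df_1\w\cdots\w df_k$ with $f_i\in\cA_E$; equivalently, $\cS^*[E;Z]$ coincides with its minimal differential graded subring $\bar\cS^*[E;Z]$ containing $\cA_E$ in the sense of Definition \ref{nn240}. I would proceed in two stages: first establish the representation locally on a splitting domain, then glue the local data by a partition-of-unity argument.

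For the local stage, fix a splitting domain $(U;z^A,c^a)$. The coordinates $z^A$ are smooth real functions on $U$, so they belong to the even part of $\cA_E(U)$, while the odd generators $c^a$ lie by construction in $\cA_E(U)$. The description of $\cS^*[E;Z]$ on a splitting domain just before the theorem statement shows that $\phi|_U$ is a finite $\cA_E(U)$-linear combination of wedge monomials in $dz^A$ and $dc^a$. These are precisely the differentials $d(z^A)$ and $d(c^a)$ of graded functions, so $\phi|_U$ already sits inside the minimal calculus over $\cA_E(U)$.

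For the global stage, I would choose a locally finite open cover $\{U_\alpha\}$ of $Z$ by splitting domains with local bases $\{z^A_\alpha,c^a_\alpha\}$, a subordinate smooth partition of unity $\{\rho_\alpha\}\subset C^\infty(Z)\subset\cA_E$, and cutoff functions $\eta_\alpha\in C^\infty(Z)$ satisfying $\eta_\alpha\equiv 1$ on a neighborhood of $\mathrm{supp}\,\rho_\alpha$ and $\mathrm{supp}\,\eta_\alpha\subset U_\alpha$; such a choice exists because $Z$ is paracompact (Remark \ref{ws1}). The sections $\eta_\alpha z^A_\alpha$ and $\eta_\alpha c^a_\alpha$ are supported strictly inside $U_\alpha$, so they extend by zero to global graded functions $\tilde z^A_\alpha,\tilde c^a_\alpha\in\cA_E$ (the transition rule (\ref{+6}) is linear in $c^a$, so the extension respects the $\gA_E$-structure). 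On $\mathrm{supp}\,\rho_\alpha$ one has $\eta_\alpha=1$ and $d\eta_\alpha=0$, hence $d\tilde z^A_\alpha=dz^A_\alpha$ and $d\tilde c^a_\alpha=dc^a_\alpha$ there, which yields the global identities $\rho_\alpha d\tilde z^A_\alpha=\rho_\alpha dz^A_\alpha$ and $\rho_\alpha d\tilde c^a_\alpha=\rho_\alpha dc^a_\alpha$. Writing the local expansion $\phi|_{U_\alpha}=\sum_M g^M_\alpha\,dz^{i_1}_\alpha\w\cdots\w dc^{j_s}_\alpha$ with $g^M_\alpha\in\cA_E(U_\alpha)$ and applying these identities factor-by-factor yields
\be
\rho_\alpha\phi=\sum_M (\rho_\alpha g^M_\alpha)\,d\tilde z^{i_1}_\alpha\w\cdots\w d\tilde c^{j_s}_\alpha,
\ee
where each $\rho_\alpha g^M_\alpha$ extends by zero to a global element of $\cA_E$. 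Summing the locally finite decomposition $\phi=\sum_\alpha\rho_\alpha\phi$ exhibits $\phi$ as an element of $\bar\cS^*[E;Z]$.

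I expect the main obstacle to lie precisely in this globalization step: the coordinates $z^A$ and graded generators $c^a$ are only local objects on a splitting domain, and expressing their local differentials as differentials of global graded functions requires cutoffs chosen so as not to perturb $d$ where the partition of unity has support. The algebraic reduction and the local step are essentially immediate from the local structure of $\cS^*[E;Z]$; it is the cutoff argument, leaning on the paracompactness of $Z$, that actually uses the geometry.
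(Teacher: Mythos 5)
Your local step and the cutoff mechanism are sound, but the globalization has a genuine gap: the cover $\{U_\alpha\}$ by splitting domains is in general only locally finite, hence infinite, and your final identity $\phi=\sum_\alpha\rho_\alpha\phi$ expresses $\phi$ as an \emph{infinite} (locally finite) sum of monomials whose generating functions $\tilde z^A_\alpha,\tilde c^a_\alpha$ vary with $\alpha$. Membership in the minimal differential calculus (Definition \ref{nn240}) means being a \emph{finite} sum of monomials $f_0\,df_1\w\cdots\w df_k$, $f_i\in\cA_E$; a locally finite infinite sum of such monomials does not by itself lie in the subring they generate, so as written your argument proves the theorem only when $Z$ is compact (or the cover can be taken finite). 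The paper's manifolds are arbitrary, so this is not a cosmetic point. The standard repair is to fix a \emph{finite} global generating family before the partition of unity enters: take finitely many $h^1,\ldots,h^N\in C^\infty(Z)$ whose differentials span $T^*_zZ$ at every point (e.g. the components of a Whitney embedding) and finitely many global sections $c^1,\ldots,c^M$ of $E^*\to Z$ generating the module $E^*(Z)$ (Serre--Swan, Theorem \ref{sp60}); on every splitting domain $dz^A_\alpha$ and $dc^a_\alpha$ are then $\cA_E(U_\alpha)$-combinations of the fixed forms $dh^j$, $dc^i$, so after summing $\rho_\alpha\phi$ over $\alpha$ you can collect, for each of the finitely many wedge monomials in $dh^j$, $dc^i$, a locally finite sum of coefficients which is an honest element of $\cA_E$. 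Alternatively, run your cutoff argument over a finite cover each of whose members is a disjoint union of splitting domains (such covers exist for any vector bundle over a manifold); either device turns your locally finite sum into a finite one.

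It is also worth noting that this is a genuinely different route from the paper's. The paper argues algebraically: $\gd\cA_E=\cV_E(Z)$ is a projective $C^\infty(Z)$- and $\cA_E$-module of finite rank, hence so is its $\cA_E$-dual $\cS^1[E;Z]$, and the duality of Theorem \ref{nn228} is then used to conclude that $\cS^1[E;Z]$ is generated by the elements $df$; non-compactness of $Z$ never intervenes because finiteness is carried by projectivity and finite rank rather than by a choice of cover. Your approach, once supplemented by the finite generating family above, is more constructive and treats all degrees of $\cS^*[E;Z]$ explicitly, but that finiteness device is exactly the ingredient it cannot do without.
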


\begin{proof}
Since $\gd\cA_E=\cV_E(Z)$, it is a projective $C^\infty(Z)$- and
$\cA_E$-module of finite rank, and so is its $\cA_E$-dual
$\cS^1[E;Z]$ (Theorem \ref{nn228}). Hence, $\gd\cA_E$ is the
$\cA_E$-dual of $\cS^1[E;Z]$ and, consequently, $\cS^1[E;Z]$ is
generated by elements $df$, $f\in \cA_E$.
\end{proof}

The bigraded de Rham complex (\ref{t100}) of a minimal graded
Chevalley--Eilenberg differential calculus $\cS^*[E;Z]$ reads
\beq
0\to \mathbb R\to \cA_E \ar^d \cS^1[E;Z]\ar^d\cdots
\cS^k[E;Z]\ar^d\cdots. \label{+137}
\eeq
Its cohomology $H^*(\cA_E)$  is called the de Rham cohomology of a
simple graded manifold $(Z,\gA_E)$.

In particular, given a differential graded ring $\cO^*(Z)$ of
exterior forms on $Z$, there exist a canonical monomorphism
\beq
\cO^*(Z)\to \cS^*[E;Z] \label{uut}
\eeq
and the body epimorphism $\cS^*[E;Z]\to \cO^*(Z)$ which are
cochain morphisms of the de Rham complexes (\ref{+137}) and
(\ref{t37}).

\begin{theorem} \label{33t3}
The de Rham cohomology of a simple graded manifold $(Z,\gA_E)$
equals the de Rham cohomology of its body $Z$.
\end{theorem}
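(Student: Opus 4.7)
The plan is to establish a graded Poincar\'e lemma locally and then invoke the abstract de Rham theorem, in parallel with the classical argument of Theorem \ref{t60}.

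First, I would sheafify the bigraded de Rham complex (\ref{+137}) into a complex of sheaves on $Z$,
\begin{eqnarray*}
0\to \mathbb R\to \gA_E\ar^d \gS^1_E\ar^d \gS^2_E\ar^d \cdots,
\end{eqnarray*}
where $\gS^k_E$ is the sheaf of germs of sections of the vector bundle $\w^k\ol\cV_E$ supplied by Theorem \ref{mm1}. Since each $\gS^k_E$ is a sheaf of $C^\infty_Z$-modules, the argument of Section 4 (any sheaf of $C^\infty_Z$-modules on a paracompact $Z$ is fine and hence acyclic, cf.\ the discussion preceding Theorem \ref{t60}) shows that this is a complex of acyclic sheaves.

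Next, the crucial step is to verify exactness of this sheafified complex, i.e.\ a Poincar\'e lemma for graded exterior forms. Over a splitting domain $(U;z^A,c^a)$ small enough to be contractible, the local generators $\{dz^A,c^a,dc^a\}$ split the complex as a tensor product of the ordinary de Rham complex $(\cO^*(U),d)$ with the Koszul-type subcomplex generated by the odd variables $c^a$ and the even symbols $dc^a$ under the graded exterior differential $d c^a=dc^a$. The ordinary Poincar\'e lemma trivializes the first factor, while for the second factor I would exhibit an explicit contracting homotopy $h$, built from the odd Euler-type derivation sending $dc^a\mapsto c^a$ and annihilating $c^a$, so that $hd+dh$ is the identity on elements of positive total degree. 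Combining the two homotopies gives a local contraction showing exactness of the sequence in positive degree, while the kernel in degree $0$ is $\mathbb R$ by the body map (\ref{cmp140}).

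Having a fine (hence acyclic) resolution of the constant sheaf $\mathbb R$, the abstract de Rham theorem (the sheaf-cohomology computation used for Theorem \ref{t60}) gives
\begin{eqnarray*}
H^*(\cA_E)=H^*\bigl(\cS^*[E;Z]\bigr)=H^*(Z;\mathbb R).
\end{eqnarray*}
Finally, by the classical de Rham theorem \ref{t60} the right-hand side equals the de Rham cohomology $H^*(Z)$ of the body, giving the claim. The main technical obstacle is the graded Poincar\'e lemma on a splitting domain; once the local contracting homotopy is written down, the rest is a formal consequence of the acyclicity of $C^\infty_Z$-module sheaves and the abstract de Rham machinery. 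As a sanity check, the canonical monomorphism (\ref{uut}) and the body epimorphism $\cS^*[E;Z]\to \cO^*(Z)$ are cochain maps whose composition is the identity on $\cO^*(Z)$, so the argument automatically identifies $H^*(\cA_E)$ with $H^*(Z)$ via these maps.
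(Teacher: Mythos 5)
Your argument is essentially the paper's own proof: sheafify the graded de Rham complex, observe that its terms are sheaves of $C^\infty_Z$-modules and hence fine and acyclic, invoke the Poincar\'e lemma for graded exterior forms to get a fine resolution of the constant sheaf $\mathbb R$, and apply the abstract de Rham theorem together with the classical identification $H^*(Z;\mathbb R)=H^*(Z)$. The only difference is that you sketch the graded Poincar\'e lemma via an explicit Koszul-type homotopy (where, strictly, $hd+dh$ is the number operator in the variables $c^a,dc^a$ rather than the identity, so one normalizes degreewise), whereas the paper simply cites the literature for this lemma.
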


\begin{proof}
Let $\gA_E^k$ denote the sheaf of germs of graded $k$-forms on
$(Z,\gA_E)$. Its structure module is $\cS^k[E;Z]$. These sheaves
constitute a complex
\beq
0\to\mathbb R\ar \gA_E \ar^d \gA_E^1\ar^d\cdots
\gA_E^k\ar^d\cdots. \label{1033}
\eeq
Its members $\gA_E^k$ are sheaves of $C^\infty_Z$-modules on $Z$
and, consequently, are fine and acyclic. Furthermore, the
Poincar\'e lemma for graded exterior forms holds \cite{bart}. It
follows that the complex (\ref{1033}) is a fine resolution of the
constant sheaf $\mathbb R$ on a manifold $Z$.  Then, by virtue of
Theorem \ref{spr230}, there is an isomorphism
\beq
H^*(\cA_E)=H^*(Z;\mathbb R)=H^*_{\mathrm{DR}}(Z) \label{+136}
\eeq
of the cohomology $H^*(\cA_E)$ to the de Rham cohomology
$H^*_{\mathrm{DR}}(Z)$ of a smooth manifold $Z$.
\end{proof}

\begin{theorem} \label{33c1}
The cohomology isomorphism (\ref{+136}) is accompanied by the
cochain monomorphism (\ref{uut}). Hence, any closed graded
exterior form is decomposed into a sum $\f=\si +d\xi$ where $\si$
is a closed exterior form on $Z$.
\end{theorem}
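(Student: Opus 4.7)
The plan is to show that the isomorphism from Theorem \ref{33t3} is realized at the level of cochain complexes by the monomorphism (\ref{uut}), and then read off the decomposition as an immediate corollary.

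First, I would lift (\ref{uut}) to the sheaf level. The cochain monomorphism $\cO^*(Z)\to \cS^*[E;Z]$ extends to a monomorphism of sheaves of differential bigraded rings $\cO^*_Z\to \gA_E^*$, because on a splitting domain $(U;z^A,c^a)$ the inclusion sends $f(z)$ to the graded function $f(z)\cdot\bb$ and $dz^A$ to the graded one-form $dz^A$, and this is compatible with restriction and with the graded exterior differential $d$. Both $\cO^*_Z$ and $\gA_E^*$ are fine resolutions of the constant sheaf $\mathbb R$ on $Z$: the former by the classical de Rham resolution used in Theorem \ref{t60}, the latter by the graded Poincar\'e lemma invoked in the proof of Theorem \ref{33t3}. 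The sheaf inclusion $\cO^*_Z\hookrightarrow \gA_E^*$ commutes with the augmentations $\mathbb R\hookrightarrow C^\infty_Z$ and $\mathbb R\hookrightarrow \gA_E$, so it is a morphism of resolutions of $\mathbb R$.

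Second, because any two fine (hence acyclic) resolutions of the same constant sheaf compute the same sheaf cohomology and any morphism of such resolutions over the identity on $\mathbb R$ induces the identity on $H^*(Z;\mathbb R)$, the induced diagram
\begin{equation*}
\begin{array}{ccc}
H^*_{\mathrm{DR}}(Z) & \longrightarrow & H^*(\cA_E)\\
\downarrow\cong & & \downarrow\cong \\
H^*(Z;\mathbb R) & = & H^*(Z;\mathbb R)
\end{array}
\end{equation*}
commutes, where the vertical isomorphisms are those of Theorem \ref{t60} and Theorem \ref{33t3}, respectively. Consequently the top horizontal arrow, induced by the cochain monomorphism (\ref{uut}), is itself an isomorphism; this is the precise meaning of the statement that the isomorphism (\ref{+136}) is accompanied by (\ref{uut}).

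Third, the decomposition follows at once. Given a closed graded exterior form $\f\in \cS^k[E;Z]$, its class $[\f]\in H^k(\cA_E)$ corresponds under the above isomorphism to a class in $H^k_{\mathrm{DR}}(Z)$, which is represented by some closed exterior form $\si\in \cO^k(Z)$. Since $[\si]=[\f]$ in $H^k(\cA_E)$, there exists $\xi\in\cS^{k-1}[E;Z]$ with $\f-\si = d\xi$, i.e.\ $\f=\si+d\xi$ as claimed.

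The main obstacle is the sheaf-theoretic verification that $\cO^*_Z\hookrightarrow \gA_E^*$ is genuinely a morphism of fine resolutions of $\mathbb R$; once this is established, the conclusion is formal and reduces to the comparison theorem for resolutions already exploited in the proof of Theorem \ref{33t3}.
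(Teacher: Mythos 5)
Your argument is correct and is essentially the proof the paper intends: the statement is treated as a corollary of the fine-resolution argument in Theorem \ref{33t3}, and your verification that the monomorphism (\ref{uut}) comes from a morphism of fine resolutions of $\mathbb R$, so that the induced map $H^*_{\mathrm{DR}}(Z)\to H^*(\cA_E)$ is compatible with the abstract de Rham isomorphisms and hence bijective, is exactly the missing naturality step. The final decomposition $\f=\si+d\xi$ then follows formally as you state, so there is nothing to add.
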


\section{$\mathbb N$-graded manifolds}

Let $\cK$ be a commutative ring without a divisor of zero, and let
$\Om=\Om^*$ (\ref{nn41}) be an $\mathbb N$-graded $\cK$-ring
(Definition \ref{nn40}).

As it was observed above, an $\mathbb N$-graded ring seen as a
$\cK$-ring can admit different $\mathbb N$-graded structures. In
the following case, all these structures are isomorphic
\cite{bell}.

\begin{definition} \label{nn210} \mar{nn210}
An $\mathbb N$-graded $\cK$-ring $\cA^*$ is said to be finitely
generated in degree 1 if the following hold:

$\bullet$ $\cA^0=\cK$,

$\bullet$ $\cA^1$ is a free $\cK$-module of finite rank,

$\bullet$ $\cA^*$ is generated by $\Om^1$, namely, if $R$ is an
ideal generated by $\cA^1$, then there are $\cK$-module
isomorphism $\Om/R=\cK$, $R/R^2=\cA^1$.
\end{definition}

\begin{theorem} \label{nn205} \mar{nn205}
Let $\cK$ be a field, and let $\cA^*$ and $\La^*$ be $\mathbb
N$-graded $\cK$-rings finitely generated in degree 1. If they are
isomorphic as $\cK$-rings, there exists their graded isomorphism
$\Phi: \cA^*\to \La^*$ so that $\Phi(\cA^i)=\La^i$ for all $i\in
\mathbb N$.
\end{theorem}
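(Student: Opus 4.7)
The plan is to construct $\Phi$ as the associated graded of a suitably modified $\cK$-ring isomorphism. First, under the hypotheses of Definition~\ref{nn210}, the ideal $R_\cA$ generated by $\cA^1$ coincides with $\op\oplus_{i\geq 1}\cA^i$, and a short induction using the generation of $\cA^*$ by $\cA^1$ together with $\cA^i\cdot\cA^j\subseteq\cA^{i+j}$ yields $R_\cA^k=\op\oplus_{i\geq k}\cA^i$. Consequently each homogeneous piece is intrinsically encoded in the $R$-adic filtration as $\cA^k\cong R_\cA^k/R_\cA^{k+1}$, and likewise $\La^k\cong R_\La^k/R_\La^{k+1}$, so that knowing the augmentation ideal is equivalent to knowing the grading.

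Second, given the $\cK$-ring isomorphism $\phi\colon\cA\to\La$, I modify it so as to respect the augmentation ideals. Let $\epsilon_\cA\colon\cA\to\cA^0=\cK$ and $\epsilon_\La\colon\La\to\La^0=\cK$ be the augmentations, and put $\nu:=\epsilon_\cA\circ\phi^{-1}\colon\La\to\cK$, which is a $\cK$-algebra homomorphism. Fix a $\cK$-basis $\{e_j\}$ of $\La^1$ and set $b_j:=\nu(e_j)\in\cK$; define $\tau\colon\La\to\La$ on generators by $\tau(e_j)=e_j+b_j\bb$ extended as a $\cK$-algebra homomorphism, so that $\epsilon_\La\circ\tau=\nu$. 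Setting $\tilde\phi:=\tau\circ\phi$ yields $\epsilon_\La\circ\tilde\phi=\epsilon_\cA$ and consequently $\tilde\phi(R_\cA^k)=R_\La^k$ for every $k\geq 0$.

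Passing to the associated graded, $\tilde\phi$ induces $\cK$-module isomorphisms
\begin{equation*}
\Phi_k\colon\cA^k\cong R_\cA^k/R_\cA^{k+1}\longrightarrow R_\La^k/R_\La^{k+1}\cong\La^k,
\end{equation*}
whose direct sum $\Phi=\op\oplus_k\Phi_k$ is multiplicative because $\tilde\phi$ is a ring homomorphism and the graded product on the associated graded of a filtered ring is induced from the filtered product. Thus $\Phi$ is the required graded $\cK$-ring isomorphism sending $\cA^i$ to $\La^i$.

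The main obstacle is verifying that the generator-level rule $e_j\mapsto e_j+b_j\bb$ does extend consistently to an automorphism of $\La$. For any homogeneous relation $r\in I_\La$ of the presentation $\La^*=T(\La^1)/I_\La$, the translated element $\tau(r)$ computed in $T(\La^1)$ equals $r$ plus various lower-degree terms built from the $b_j$; descending $\tau$ to $\La$ requires all these lower-degree terms to lie in $I_\La$, equivalently that $\phi(R_\cA)$ and $R_\La$ sit in a common orbit of $\mathrm{Aut}_\cK(\La)$. The field hypothesis on $\cK$ is crucial here: it ensures $\cA^1$ and $\La^1$ are finite-dimensional vector spaces (so bases and the scalars $b_j$ exist and Theorem~\ref{nn1}-style freeness is available), and, together with finite generation in degree 1, it delivers the orbit-transitivity that makes the translation well-defined.
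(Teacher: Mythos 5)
Your steps (1) and (3) are fine: with Definition \ref{nn210} one indeed has $R_\cA^k=\op\oplus_{i\geq k}\cA^i$, so any $\cK$-ring isomorphism $\cA\to\La$ that carries $R_\cA$ onto $R_\La$ induces the required graded isomorphism on associated graded rings. The gap is entirely in step (2), and it is not a technical loose end but the whole content of the theorem. The map $\tau$ defined on a basis of $\La^1$ by $e_j\mapsto e_j+b_j\bb$ is in general not a well-defined ring endomorphism of $\La$: the defining relations of $\La^*=T(\La^1)/I_\La$ are homogeneous and need not survive an inhomogeneous substitution. For instance, if $e_j^2=0$ in $\La$ (the Grassmann-type relations central to this paper), then $(e_j+b_j\bb)^2$ has scalar part $b_j^2\neq 0$ whenever $b_j\neq 0$, so no such $\tau$ exists; compare the paper's own description (\ref{nn127}) of Grassmann automorphisms, where the admissible shifts $b^i$ are odd elements of degree $\geq 3$, never scalars. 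Your argument neither shows that the $b_j$ vanish in the problematic cases nor produces any substitute automorphism; and your fallback statement --- that $\phi(R_\cA)$ and $R_\La$ lie in one orbit of $\mathrm{Aut}_\cK(\La)$ --- is, given your steps (1) and (3), \emph{equivalent} to Theorem \ref{nn205} itself (a graded isomorphism $\Phi$ would give $\tau=\Phi\circ\phi^{-1}$, and conversely). Asserting that the field hypothesis and finite generation in degree 1 ``deliver the orbit-transitivity'' is therefore circular: it assumes exactly what has to be proved.

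Note also that the paper does not prove Theorem \ref{nn205}; it quotes it from Bell and Zhang \cite{bell}. The argument there does not attempt to straighten $\phi$ by an automorphism of $\La$ at all: roughly, one transports the grading filtration, setting $I=\phi^{-1}(R_\La)$, observes that $\phi$ induces a graded isomorphism of $\op\oplus_k I^k/I^{k+1}$ with $\La$, and then the real work is to compare this associated graded ring with $\cA$ itself, using leading-term and dimension (Hilbert-function) arguments that exploit the finite-dimensionality of the graded components. That comparison is precisely the step your proposal replaces by an unsupported (and, as stated, generally impossible) translation of generators, so the proof as written does not go through.
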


Let us mention the following particular types of $\mathbb
N$-graded rings.

$\bullet$ A commutative ring $\cA$ is the $\mathbb N$-graded ring
$\cA^*$ where $\cA^0=\cA$ and $\cA^{>0}=0$.

$\bullet$ It is a particular commutative $\mathbb N$-graded ring
$\cA^*$ where $\al\cdot\bt=\bt\cdot \al$ for all elements
$\al,\bt\in \cA^*$ (cf. (\ref{nn92})).

$\bullet$ A polynomial $\cK$-ring $\cP[Q]$ of a $\cK$-module $Q$
in Example \ref{ws40} exemplifies a commutative $\mathbb N$-graded
ring. If $\cK$ is a field and $Q$ is a free $\cK$-module of finite
rank, a polynomial $\cK$-ring $\cP[Q]$ is finitely generated in
degree 1 by virtue of Definition \ref{nn210}. If $\cK$ is a field,
all $\mathbb N$-graded structures of this ring are mutually
isomorphic in accordance with Theorem \ref{nn205}.

$\bullet$ We consider $\mathbb N$-graded commutative rings
(Definition \ref{nn44}). Any commutative $\mathbb N$-graded ring
$\cA^*$ can be regarded as an even $\mathbb N$-graded commutative
ring $\La^*$ such that $\La^{2i}=\cA^i$, $\La^{2i+1}=0$.

$\bullet$ Grassmann-graded rings (Definition \ref{nn106})
exemplify $\mathbb N$-graded commutative rings. They are finitely
generated in degree 1 (Definition \ref{nn210}).

$\bullet$ As was mentioned above, any $\mathbb N$-graded
commutative ring $\cA^*$ possesses a structure of a $\mathbb
Z_2$-graded commutative ring $\cA$ (Definition \ref{nn112}). In
particular, a Grassmann-graded ring is a Grassmann algebra
(Definition \ref{nn114}).

Hereafter, we restrict our consideration to $N$-graded commutative
rings (Definition \ref{nn44}).

Given an $\mathbb N$-graded commutative ring $\cA^*$, an $\mathbb
N$-graded $\cA^*$-module $Q$ is defined as a graded
$\cA^*$-bimodule which is an $\mathbb N$-graded $\cK$-module such
that
\be
qa = (-1)^{[a][q]}aq, \qquad [aq]=[a]+[q], \qquad a\in\cA^*,
\qquad q\in Q,
\ee
and it also is $\mathbb Z_2$-graded module.

A direct sum, a tensor product of $\mathbb N$-graded modules and
the exterior algebra $\w P$ of an $\mathbb N$-graded module are
defined similarly to those of $\mathbb Z_2$-graded modules
(Section 5.1), and they also are a direct sum, a tensor product
and an exterior algebra of associative $\mathbb Z_2$-graded
modules, respectively.

A morphism $\Phi:P\to Q$ of $\mathbb N$-graded $\cA^*$-modules
seen as $\cK$-modules is said to be homogeneous of degree $[\Phi]$
if $[\Phi(p)]=[p]+[\Phi]$ for all homogeneous elements $p\in P$
and the relations (\ref{nn212}) hold. A morphism $\Phi:P\to Q$ of
$\mathbb N$-graded $\cA^*$-modules as the $\cK$-ones is called a
$\mathbb N$-graded $\cA^*$-module morphism if it is represented by
a homogeneous morphisms. Therefore, a set $\hm_\cA(P,Q)$ of graded
morphisms of an $\mathbb N$-graded $\cA$-module $P$ to an $\mathbb
N$-graded $\cA^*$-module $Q$ is an $\mathbb N$-graded
$\cA^*$-module. An $\mathbb N$-graded $\cA^*$-module
$P^*=\hm_\cA(P,\cA)$ is called the dual of an $\mathbb N$-graded
$\cA$-module $P$. Certainly, an $\mathbb N$-graded $\cA^*$-module
morphism of $\mathbb N$-graded $\cA^*$-modules is their $\mathbb
Z_2$-graded $\cA$-module morphism as associative $\mathbb
Z_2$-graded modules, however the converse need not be true.

By automorphisms of an $\mathbb N$-graded ring $\cA^*$ are meant
automorphisms of a $\cK$-ring $\cA$ which preserve its $\mathbb
N$-gradation $\cA^*$. They also preserve the associated $\mathbb
Z_2$-structure of $\cA$. However, there exist automorphisms of a
$\cK$-ring $\cA$ which do not possess these properties.

For example, let $\cA^*$ be the Grassmann-graded ring
(\ref{z784}). As was mentioned above its automorphisms
(\ref{nn127}) where $b^i\neq 0$ and (\ref{nn280}) do not preserve
an $\mathbb N$-graded structure of $\cA$, and automorphisms
(\ref{nn280}) also do not keep its $\mathbb Z_2$ structure.
However, they are morphisms of $\cA^*$ as an $\mathbb N$-graded
module, because can be represented as a certain sum of homogeneous
morphisms.

The differential calculus on $\mathbb N$-graded modules over
$\mathbb N$-graded commutative rings is defined just as that over
$\mathbb Z_2$-graded commutative rings (Section 5.2), but
different from the differential calculus over non-commutative
rings in Section 9 (Remark \ref{nn121}).

However, it should be emphasized that an $\mathbb N$-graded
differential operator is an $\mathbb N$-graded $\cK$-module
homomorphism which obeys the conditions (\ref{nn300}), i.e., it is
a sum of homogeneous morphisms of fixed $\mathbb N$-degrees, but
not the $\mathbb Z_2$ ones. Therefore, any $\mathbb N$-graded
differential operator also is a $\mathbb Z_2$-graded differential
operator, but the converse might not be true (Remark \ref{nn301}).

In particular, $\mathbb N$-graded derivations of an $\mathbb
N$-graded commutative $\cK$-ring $\cA^*$ constitute a Lie
superalgebra $\gd\cA^*$ (Definition \ref{nn201}) over a
commutative ring $\cK$ with respect to the superbracket
(\ref{ws14}). It is a subalgebra of a Lie superalgebra $\gd\cA$ of
$\mathbb Z_2$-graded derivations of a $\mathbb Z_2$-graded
commutative $\cK$-ring $\cA$ in general. The Chevalley--Eilenberg
complex $C^*[\gd\cA^*;\cA^*]$ (\ref{ws85}) and the
Chevalley--Eilenberg differential calculus $\cO^*[\gd\cA^*]$ over
an $\mathbb N$-graded commutative ring $\cA^*$ are constructed
similarly to those over a $\mathbb Z_2$-graded commutative ring
$\cA^*$ in Section 5.2.

As was mentioned above, the notion of a local ring can be extended
to the non-commutative ones (Definition \ref{nn251}), and
formalism of $\mathbb N$-graded commutative local-ringed spaces
can be developed just as that of commutative local-ringed spaces
in Section 3.1 and $\mathbb Z_2$-graded manifolds in Section 5.3.

Hereafter, we restrict our consideration of $\mathbb N$-graded
commutative rings to Grassmann-graded rings and $\mathbb N$-graded
commutative rings regarded as the even graded ones.

Let $\cA^*$ be a Grassmann-graded $\cK$-ring (Definition
\ref{nn106}) whose associated $\mathbb Z_2$-graded commutative
ring is a Grassmann algebra $\cA$ (Definition \ref{nn114}). Seen
as a $\cK$-ring $\cA$, it admits different structures of an
$\mathbb N$-graded commutative ring and a Grassmann algebra, but
they are mutually isomorphic if $\cK$ is a field in accordance
with Theorem \ref{nn288}.

A Grassmann-graded ring is local because of a unique maximal ideal
$R$ of its nilpotent elements (Remark \ref{nn245}). Given an odd
generating basis $\{c^i\}$ for a $\cK$-module $\cA^1$, elements of
a Grassmann-graded ring $\cA^*$ take the form (\ref{z784}). As was
mentioned above, $\cK$-ring automorphisms of $\cA$ are
compositions of automorphisms (\ref{nn127}) and (\ref{nn280}), but
automorphisms of an $\mathbb N$-graded ring $\cA^*$ take a form
\mar{nn302}\beq
c^i\to c'^i=\rho^i_jc^j  \label{nn302}
\eeq
where $\rho$ is an automorphism of $\cK$-module $\cA^1$.

The differential calculus over a Grassmann-graded $\cK$-ring
$\cA^*$ is exactly the differential calculus over an associated
Grassmann algebra $\cA$ (Example \ref{nn133}). Namely, the
derivations (\ref{nn131}) of $\cA$ also are derivations of a
Grassmann-graded ring $\cA^*$, and any $\mathbb N$-graded
differential operator on $\cA^*$ is a composition of these
derivations.

Since Grassmann-graded rings are local, let us now consider
local-ringed spaces whose stalks are such kind $\mathbb N$-graded
commutative rings. Since Grassmann-graded rings also are Grassmann
algebras, we follow formalism of $\mathbb Z_2$-graded manifolds in
Section 5.3.

Let $Z$ be an $n$-dimensional real smooth manifold. Let $\cA$ be
real Grassmann-graded ring. By virtue of Theorem \ref{nn117}), it
is isomorphic to the exterior algebra $\w W$ of a real vector
space $W=\cA^1$. Therefore, we come to the following definition.

\begin{definition} \label{nn306} \mar{nn306} An
$N$-graded manifold is a simple $\mathbb Z_2$-graded manifold
$(Z,\gA_E)$ modelled over some vector bundle $E\to Z$ (Definition
\ref{nn173}).
\end{definition}

In accordance with Definition \ref{nn306}, an $\mathbb N$-graded
manifold is a local-ringed space.

In view of Definition \ref{nn306}, Serre--Swan Theorem  \ref{vv0}
for $\mathbb Z_2$-graded manifolds also can be formulated for the
$\mathbb N$-graded ones.

\begin{theorem} \label{vv00} \mar{vv00}
Let $Z$ be a smooth manifold. A $\mathbb N$-graded commutative
$C^\infty(Z)$-ring $\cA$ is isomorphic to the structure ring of a
$\mathbb N$-graded manifold with a body $Z$ iff it is the exterior
algebra of some projective $C^\infty(Z)$-module of finite rank.
\end{theorem}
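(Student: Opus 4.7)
The plan is to reduce Theorem \ref{vv00} to the already-established $\mathbb{Z}_2$-graded Serre--Swan theorem (Theorem \ref{vv0}) together with the definition (Definition \ref{nn306}) of an $\mathbb{N}$-graded manifold as a simple $\mathbb{Z}_2$-graded manifold $(Z,\gA_E)$ modelled on a vector bundle $E\to Z$. Since the underlying geometric object is the same in both settings, almost nothing new is needed except to keep track of the $\mathbb{N}$-grading.

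First, for the ``only if'' direction, I would start from an arbitrary $\mathbb{N}$-graded manifold, which by Definition \ref{nn306} is of the form $(Z,\gA_E)$ for some vector bundle $E\to Z$. Its structure ring, by the formula \eqref{33f1}, is $\cA_E = \w E^*(Z)$, i.e., the exterior algebra of the $C^\infty(Z)$-module $E^*(Z)$ of global sections of $E^*\to Z$. By the classical Serre--Swan theorem (Theorem \ref{sp60}), $E^*(Z)$ is a projective $C^\infty(Z)$-module of finite rank, and the natural $\mathbb{N}$-grading of $\cA_E$ is precisely the exterior-algebra grading $\cA_E^k = \w^k E^*(Z)$. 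This exhibits $\cA_E$ as the exterior algebra of a projective $C^\infty(Z)$-module of finite rank.

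For the ``if'' direction, suppose $\cA = \w P$ for some projective $C^\infty(Z)$-module $P$ of finite rank, $\mathbb{N}$-graded by $\cA^k = \w^k P$ with $\cA^0 = C^\infty(Z)$. By Theorem \ref{sp60} applied to $P$, there exists a smooth vector bundle $F\to Z$ such that $P\cong F(Z)$ as $C^\infty(Z)$-modules; setting $E := F^*$, one has $P \cong E^*(Z)$ and consequently $\cA \cong \w E^*(Z) = \cA_E$, which is the structure ring of the simple $\mathbb{Z}_2$-graded manifold $(Z,\gA_E)$. By Definition \ref{nn306}, $(Z,\gA_E)$ is an $\mathbb{N}$-graded manifold with body $Z$, and the isomorphism $\cA \cong \cA_E$ evidently respects the $\mathbb{N}$-gradings on both sides, since both are the natural exterior-algebra grading on $\w E^*(Z)$.

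The main (and really the only) subtle point is to check that the isomorphism produced is an isomorphism of $\mathbb{N}$-graded $C^\infty(Z)$-rings and not merely of $\mathbb{Z}_2$-graded ones; this is where Theorem \ref{vv00} is a genuine refinement of Theorem \ref{vv0}. However, since in both directions the $\mathbb{N}$-gradation coincides with the canonical exterior-power filtration by $k$ of a projective module, and the Serre--Swan equivalence is an equivalence of categories that preserves this exterior-power structure, the degree-preservation is automatic. No further argument, and in particular no appeal to Theorem \ref{nn205}, is required here, because the bundle $E$ is determined up to isomorphism by $P = E^*(Z)$ via Serre--Swan, which already fixes a canonical $\mathbb{N}$-graded structure.
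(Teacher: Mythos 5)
Your proposal is correct and follows essentially the same route as the paper: the paper obtains Theorem \ref{vv00} by observing that, in view of Definition \ref{nn306}, an $\mathbb N$-graded manifold is already a simple $\mathbb Z_2$-graded manifold $(Z,\gA_E)$, so the statement reduces to Theorem \ref{vv0}, whose proof rests on the structure ring $\cA_E=\w E^*(Z)$ (\ref{33f1}) and the classical Serre--Swan Theorem \ref{sp60}, exactly as you argue. Your added remarks on degree preservation and on bypassing Batchelor's theorem (unnecessary here since Definition \ref{nn306} already fixes the simple model) are consistent with the paper's treatment and require no further justification.
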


In Section 5.3 on $\mathbb Z_2$-graded manifolds, we have
restricted our consideration to simple graded manifolds, this
Section, in fact, presents formalism of $\mathbb N$-graded
manifolds.

\begin{remark} \label{nn301} \mar{nn301}
Let emphasize the essential peculiarity of an $\mathbb N$-graded
manifold $(Z,\gA_E)$ in comparison with the $\mathbb Z_2$-graded
ones. Derivations of its structure module $\cA_E$ are represented
by sections of the vector bundle $\cV_E$ (\ref{nn253}). Due to
this fact the Chevalley--Eilenberg differential calculus
$\cS^*[E;Z]$ (\ref{33f21}) over $\cA_E$ is minimal, i.e., it is
generated by elements $df$, $f\in \cA_E$.
\end{remark}

\section{$\mathbb N$-Graded bundles}

An important example of commutative $N$-graded rings are the
polynomial ones (Example \ref{nn224}). In particular, let $\cA$ be
a commutative finitely generated ring over an algebraically closed
field $\cK$, and let it possess no nilpotent elements. Then it is
isomorphic to the quotient of some polynomial $\cK$-ring which is
the coordinate ring (\ref{nn260}) of a certain affine variety
(Theorem \ref{ws93}).

A problem is that a polynomial ring $\cP[Q]$ is not local. Any its
element $q-\la\bb$, $\la\in \cK$, generates a maximal ideal.
Therefore, a question is how to construct ringed spaces in
polynomial rings. Though there is a certain correspondence between
affine varieties and affine schemes (Remark \ref{nn261}).

In a different way, one can consider a subsheaf of polynomial
functions of a sheaf of smooth functions. Let $\pi: Y\to X$ be a
vector bundle and $C^\infty_Y$ a sheaf of smooth functions on $Y$.
Let $\cP_Y$ be a subsheaf of $C^\infty_Y$ of germs of functions
which are polynomial on fibres of $Y$. These functions are well
defined due to linear transition functions of a vector bundle
$Y\to X$. Let $\pi^*\cP_Y$ be the direct image of a sheaf $\cP_Y$
onto $X$ (Example \ref{nn310}). Its stalk $\pi^*\cP_x$ at a point
$x\in X$ is a polynomial ring $\cP[Y_x]$ of a fibre $Y_x$ of $Y$
over $x$ over a local ring $C^\infty_x$ which is a stalk
$C^\infty_x$ of a sheaf of smooth real functions on $X$ at a point
$x\in X$.

In a more general setting, let us consider $\mathbb N$-graded
bundles (Definition \ref{nn175}).

An epimorphism of $\mathbb Z_2$-graded manifolds $(Z,\gA) \to
(Z',\gA')$ where $Z\to Z'$ is a fibre bundle is called the graded
bundle \cite{hern,stavr}.  In this case, a sheaf monomorphism
$\wh\Phi$ induces a monomorphism of canonical presheaves $\ol
\gA'\to \ol \gA$, which associates to each open subset $U\subset
Z$ the ring of sections of $\gA'$ over $\phi(U)$. Accordingly,
there is a pull-back monomorphism of the structure rings
$\gA'(Z')\to\gA(Z)$ of graded functions on graded manifolds
$(Z',\gA')$ and $(Z,\gA)$.

In particular, let $(Y,\gA)$ be an $\mathbb N$-graded manifold
whose body $Z=Y$ is a fibre bundle $\pi:Y\to X$. Let us consider a
trivial graded manifold $(X,\gA_0=C^\infty_X)$ (Remark
\ref{triv}). Then we have a graded bundle
\beq
(Y,\gA) \to (X,C^\infty_X). \label{su3}
\eeq
Let us denote it by $(X,Y,\gA)$. Given a graded bundle
$(X,Y,\gA)$, the local basis for a graded manifold $(Y,\gA)$ can
be brought into a form $(x^\la, y^i, c^a)$ where $(x^\la, y^i)$
are bundle coordinates of $Y\to X$.

\begin{definition} \label{nn175}
We agree to call the graded bundle (\ref{su3}) over a trivial
graded manifold $(X, C^\infty_X)$ the graded bundle over a smooth
manifold \cite{sard14,sard15}.
\end{definition}

If $Y\to X$ is a vector bundle, the graded bundle (\ref{su3}) is a
particular case of graded fibre bundles in \cite{hern,mont} when
their base is a trivial graded manifold.

\begin{remark} \label{su20}
Let $Y\to X$ be a fibre bundle. Then a  trivial graded manifold
$(Y,C^\infty_Y)$ together with a real ring monomorphism
$C^\infty(X)\to C^\infty(Y)$ is a graded bundle $(X,Y,C^\infty_Y)$
of trivial graded manifolds
\be
(Y, C^\infty_Y) \to (X, C^\infty_X).
\ee
\end{remark}

\begin{remark} \label{su21}  A graded manifold $(X,\gA)$ itself can
be treated as the graded bundle $(X,X, \gA)$ (\ref{su3})
associated to the identity smooth bundle $X\to X$.
\end{remark}

Let $E\to Z$ and $E'\to Z'$ be vector bundles and $\Phi: E\to E'$
their bundle morphism over a morphism $\phi: Z\to Z'$. Then every
section $s^*$ of the dual bundle $E'^*\to Z'$ defines the
pull-back section $\Phi^*s^*$ of the dual bundle $E^*\to Z$ by the
law
\be
v_z\rfloor \Phi^*s^*(z)=\Phi(v_z)\rfloor s^*(\vf(z)), \qquad
v_z\in E_z.
\ee
It follows that a bundle morphism $(\Phi,\phi)$ yields a morphism
of $\mathbb N$-graded manifolds
\beq
(Z,\gA_E) \to (Z',\gA_{E'}). \label{w901}
\eeq
This is a pair $(\phi,\wh\Phi=\phi_*\circ\Phi^*)$ of a morphism
$\phi$ of  body manifolds and the composition $\phi_*\circ\Phi^*$
of the pull-back $\cA_{E'}\ni f\to \Phi^*f\in\cA_E$ of graded
functions and the direct image $\phi_*$ of a sheaf $\gA_E$ onto
$Z'$. Relative to local bases $(z^A,c^a)$ and $(z'^A,c'^a)$ for
$(Z,\gA_E)$ and $(Z',\gA_{E'})$, the morphism (\ref{w901}) of
$\mathbb N$-graded manifolds reads $z'=\phi(z)$,
$\wh\Phi(c'^a)=\Phi^a_b(z)c^b$.

The graded manifold morphism (\ref{w901}) is a monomorphism (resp.
epimorphism) if $\Phi$ is a bundle injection (resp. surjection).

In particular, the graded manifold morphism (\ref{w901}) is an
$\mathbb N$-graded bundle if $\Phi$ is a fibre bundle. Let
$\cA_{E'} \to \cA_E$ be the corresponding pull-back monomorphism
of the structure rings. By virtue of Theorem \ref{v62} it yields a
monomorphism of the differential bigraded rings
\beq
\cS^*[E';Z']\to \cS^*[E;Z]. \label{xxx}
\eeq

Let $(Y,\gA_F)$ be an $\mathbb N$-graded manifold modelled over a
vector bundle $F\to Y$. This is an $\mathbb N$-graded bundle
$(X,Y,\gA_F)$:
\beq
(Y, \gA_F) \to (X, C^\infty_X) \label{su11}
\eeq
 modelled over a composite bundle
\beq
F\to Y\to X.  \label{su5}
\eeq
The structure ring of graded functions on an $\mathbb N$-graded
manifold $(Y,\gA_F)$ is the graded commutative $C^\infty(X)$-ring
$\cA_F=\w F^*(Y)$ (\ref{33f1}). Let the composite bundle
(\ref{su5}) be provided with adapted bundle coordinates
$(x^\la,y^i,q^a)$ possessing transition functions
\be
x'^\la(x^\mu), \qquad y'^i(x^\m,y^j), \qquad
q'^a=\rho^a_b(x^\mu,y^j)q^b.
\ee
The corresponding local basis for an $\mathbb N$-graded manifold
$(Y,\gA_F)$ is $(x^\la,y^i,c^a)$ together with transition
functions
\be
x'^\la(x^\mu), \qquad y'^i(x^\m,y^j), \qquad
c'^a=\rho^a_b(x^\mu,j^j)c^b.
\ee
We call it the local basis for an $\mathbb N$-graded bundle
 $(X,Y,\gA_F)$.

 With respect to this basis, graded functions on
$(X,Y,\gA_F)$ take the form (\ref{z785}) where $f_{a_1\ldots
a_k}(x^\la,y^i)$ are local functions on $Y$. Let $\pi:Y\to X$ be a
vector bundle. Then one can consider graded functions $f$ whose
coefficients $f_{a_1\ldots a_k}(x^\la,y^i)$ are polynomial in
fibre coordinates $(y^i)$ of $Y$. Let $\cP$ be the sheaf of germs
of these functions on $Y$. Its direct image $\pi^*\cP$ is a sheaf
on $X$ whose stalk at $x\in X$ is an $\mathbb N$-graded ring of
polynomials both in even variables $y^i$ and the odd ones $c^a$
with coefficients in a stalk $C^\infty_x$ of smooth real functions
on $X$.

\section{Appendix. Cohomology}

For the sake of convenience of the reader, the relevant topics on
cohomology are compiled in this Appendix.

\subsection{Cohomology of complexes}

We start with cohomology of complexes of modules over a
commutative ring \cite{mcl,massey,book12}.

Let $\cK$ be a commutative ring.  A sequence
\mar{b3256}\beq
0\to B^0 \ar^{\dl^0} B^1 \ar^{\dl^1}\cdots B^p\ar^{\dl^p}\cdots
\label{b3256}
\eeq
of modules $B^p$ and their homomorphisms $\dl^p$ is said to be the
  cochain complex (henceforth, simply, a   complex) if
\be
\dl^{p+1}\circ \dl^p =0, \qquad  p\in \mathbb N,
\ee
i.e., $\im \dl^p\subset \Ker \dl^{p+1}$. Homomorphisms $\dl^p$ are
called   the coboundary operators. Elements of a module $B^p$ are
said to be the  $p$-cochains, whereas elements of its submodules
$\Ker \dl^p\subset B^p$ and  $\im \dl^{p-1}\subset \Ker \dl^p$ are
called the   $p$-cocycles and $p$-coboundaries, respectively. The
$p$-th cohomology group of the complex $B^*$ (\ref{b3256}) is the
factor module
\be
H^p(B^*)= \Ker \dl^p/\im \dl^{p-1}.
\ee
It is a $\cK$-module. In particular, $H^0(B^*)=\Ker \dl^0$.

The complex (\ref{b3256}) is said to be   exact at a term $B^p$ if
$H^p(B^*)=0$. It is an exact sequence if all cohomology groups are
trivial.

A complex $(B^*,\dl^*)$ is called   acyclic if its cohomology
groups $H^{p>0}(B^*)$ are trivial. A complex $(B^*,\dl^*)$ is said
to be the   resolution of a module $B$ if it is acyclic and
$H^0(B^*)=B$.

A   cochain morphism of complexes
\mar{spr32'}\beq
\g:B^*_1\to B^*_2 \label{spr32'}
\eeq
is defined as a family of degree-preserving homomorphisms
\be
\g^p: B^p_1\to B^p_2, \qquad p\in\mathbb N,
\ee
which commute with the coboundary operators, i.e.,
\be
\dl^p_2\circ\g^p=\g^{p+1}\circ\dl^p_1, \qquad p\in\mathbb N.
\ee
It follows that if $b^p\in B^p_1$ is a cocycle or a coboundary,
then $\g^p(b^p)\in B^p_2$ is so. Therefore, the cochain morphism
of complexes (\ref{spr32'}) yields an induced homomorphism of
their cohomology groups
\be
[\g]^*: H^*(B^*_1) \to H^*(B^*_2).
\ee

Let us consider a short sequence of complexes
\mar{spr34'}\beq
0\to C^*\ar^\g B^* \ar^\zeta F^*\to 0, \label{spr34'}
\eeq
represented by the commutative diagram
\be
\begin{array}{rcrccrccl}
& & & 0 & & & 0 & &\\
& & & \put(0,10){\vector(0,-1){20}} & & &
\put(0,10){\vector(0,-1){20}}
   & &\\
\cdots &\longrightarrow & & C^p & \op\longrightarrow^{\dl^p_C} & &
C^{p+1} & \longrightarrow &  \cdots \\
& & _{\g_p} & \put(0,10){\vector(0,-1){20}} & & _{\g_{p+1}}&
\put(0,10){\vector(0,-1){20}}
   &  &\\
\cdots &\longrightarrow & & B^p & \op\longrightarrow^{\dl^p_B} & &
B^{p+1} & \longrightarrow &  \cdots \\
& &_{\zeta_p} & \put(0,10){\vector(0,-1){20}} & & _{\zeta_{p+1}}
& \put(0,10){\vector(0,-1){20}} & & \\
\cdots &\longrightarrow & & F^p & \op\longrightarrow^{\dl^p_F} & &
F^{p+1} & \longrightarrow  & \cdots \\
& & & \put(0,10){\vector(0,-1){20}} & & &
\put(0,10){\vector(0,-1){20}}
   & & \\
& & & 0 & & & 0 & &
\end{array}
\ee
It is said to be exact if all columns of this diagram are exact,
i.e., $\g$ is a cochain monomorphism and $\zeta$ is a cochain
epimorphism onto the quotient $F^*=B^*/C^*$.

\begin{theorem} \label{spr36'} \mar{spr36'}
The short exact sequence  of complexes (\ref{spr34'}) yields a
long exact sequence of their cohomology groups
\be
&& 0\to H^0(C^*)\ar^{[\g]^0} H^0(B^*)\ar^{[\zeta]^0} H^0(F^*)\ar^{\tau^0}
H^1(C^*)\ar\cdots \\
&& \qquad \ar H^p(C^*)\ar^{[\g]^p} H^p(B^*)\ar^{[\zeta]^p}
H^p(F^*)\ar^{\tau^p} H^{p+1}(C^*)\ar\cdots.
\ee
\end{theorem}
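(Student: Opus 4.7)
The plan is to follow the classical zig-zag (snake) construction: define the connecting homomorphisms $\tau^p: H^p(F^*) \to H^{p+1}(C^*)$ by explicit diagram chase, then verify exactness at each of the three positions $H^p(C^*)$, $H^p(B^*)$, $H^p(F^*)$ for every $p$. The induced maps $[\g]^p$ and $[\zeta]^p$ are produced already by the remarks preceding the theorem, so the real content is the construction of $\tau^p$ and six local exactness verifications.

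To construct $\tau^p$, I would start with a cocycle $f \in F^p$ representing a class $[f] \in H^p(F^*)$. Since $\zeta_p$ is surjective, lift $f$ to some $b \in B^p$ with $\zeta_p(b) = f$. Apply $\dl^p_B$; because $\zeta$ is a cochain map and $f$ is closed,
\[
\zeta_{p+1}(\dl^p_B(b)) \;=\; \dl^p_F(\zeta_p(b)) \;=\; \dl^p_F(f) \;=\; 0,
\]
so by exactness at $B^{p+1}$ there is a unique $c \in C^{p+1}$ with $\g_{p+1}(c) = \dl^p_B(b)$ (uniqueness uses injectivity of $\g_{p+1}$). The fact that $c$ is a cocycle then follows from $\g_{p+2}(\dl^{p+1}_C(c)) = \dl^{p+1}_B(\dl^p_B(b)) = 0$ combined with injectivity of $\g_{p+2}$. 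I would then set $\tau^p[f] := [c]$ and check well-definedness: changing $b$ by an element of $\g_p(C^p)$ alters $c$ by an element of $\dl^p_C(C^p)$, and changing $f$ by $\dl^{p-1}_F(f_0)$ produces a compatible adjustment of $b$ whose effect on $c$ is again a coboundary. $\cK$-linearity is immediate from the $\cK$-linearity of all the maps involved.

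Next, I would verify the six exactness conditions, each by a short diagram chase using only the exactness of the three columns and the commutativity of the squares. Exactness at $H^p(B^*)$ uses $\zeta\circ\g = 0$ in one direction, and in the other direction, a class $[b]$ with $\zeta_p(b) = \dl^{p-1}_F(f_0)$ is corrected to a cocycle in $\g_p(C^p)$ by subtracting $\dl^{p-1}_B(b_0)$ for any lift $b_0$ of $f_0$. Exactness at $H^p(F^*)$: the composite $\tau^p \circ [\zeta]^p$ vanishes because one may choose the lift of $\zeta_p(b)$ to be $b$ itself, whence $\dl^p_B(b)=0$ forces $c=0$; conversely, if $\tau^p[f] = 0$ with $c = \dl^p_C(c_0)$, then $b - \g_p(c_0)$ is a cocycle in $B^p$ still lifting $f$. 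Exactness at $H^{p+1}(C^*)$: by construction $\g_{p+1}(c)$ is a coboundary in $B^*$, and conversely if $\g_{p+1}(c) = \dl^p_B(b)$ then $f := \zeta_p(b)$ is a cocycle with $\tau^p[f] = [c]$.

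I expect no serious conceptual obstacle; the main difficulty is purely the bookkeeping of which column's exactness and which map's injectivity or surjectivity is invoked at each stage. The one genuine point to watch is the uniqueness of $c$ in the definition of $\tau^p$, which depends on $\g_{p+1}$ being a monomorphism (i.e.\ exactness of the leftmost column at $C^{p+1}$); without this the connecting map is not even single-valued. Once that is under control, each exactness verification is a three-line chase along the same template, and the theorem follows.
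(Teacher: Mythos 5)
Your proposal is correct: it is the standard zig-zag construction of the connecting homomorphism $\tau^p$ followed by the six routine exactness chases, and since the paper states Theorem \ref{spr36'} without proof (referring the reader to standard homological-algebra sources), this is precisely the argument being invoked. The only points you leave implicit --- that exactness at $H^0(C^*)$ reduces to injectivity of $\g_0$ on $\Ker\dl^0_C$ (there being no coboundaries in degree zero), and that in the chase at $H^p(B^*)$ the preimage $c\in C^p$ of the corrected cocycle is itself a cocycle by injectivity of $\g_{p+1}$ --- are trivial to fill in and do not affect the soundness of the plan.
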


\begin{theorem} \label{spr37'} \mar{spr37'}
A direct sequence of complexes
\be
B^*_0\ar B^*_1\ar\cdots B^*_k\ar^{\g^k_{k+1}} B^*_{k+1}\ar \cdots
\ee
admits a direct limit $B^*_\infty$ which is a complex whose
cohomology $H^*(B^*_\infty)$ is a direct limit of the direct
sequence of cohomology groups
\be
H^*(B^*_0)\ar H^*(B^*_1)\ar \cdots H^*(B^*_k)\ar^{[\g^k_{k+1}]}
H^*(B^*_{k+1})\ar \cdots.
\ee
This statement also is true for a direct system of complexes
indexed by an arbitrary directed set (Remark \ref{nn2}).
\end{theorem}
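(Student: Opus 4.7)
The plan is to build the limit complex level by level, then use the exactness of the direct limit functor (Theorem \ref{dlim1}) together with the compatibility of limits with kernels and quotients to identify the cohomology.

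First, I would fix $p\in\mathbb N$ and observe that $\{B^p_k,\,\gamma^p_{k,k+1}\}_{k\in\mathbb N}$ is a direct sequence of $\cK$-modules, so it admits a direct limit $B^p_\infty$ with canonical morphisms $\gamma^{k,\infty}:B^p_k\to B^p_\infty$. The cochain condition $\delta^{p}_{k+1}\circ\gamma^p_{k,k+1}=\gamma^{p+1}_{k,k+1}\circ\delta^p_k$ is exactly the compatibility condition (\ref{nn13}) for a morphism of the direct system $\{B^p_k\}$ into $\{B^{p+1}_k\}$. Hence, by the universal property recalled in Example \ref{nn14} (together with Proposition \ref{nn17}), the family $\{\delta^p_k\}$ induces a unique $\cK$-module morphism $\delta^p_\infty: B^p_\infty\to B^{p+1}_\infty$ such that $\delta^p_\infty\circ\gamma^{k,\infty}=\gamma^{k,\infty}\circ\delta^p_k$ for all $k$. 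The equality $\delta^{p+1}_\infty\circ\delta^p_\infty=0$ then follows from the uniqueness clause, since this composite agrees with the zero morphism on every $B^p_k$. Thus $(B^*_\infty,\delta^*_\infty)$ is a complex.

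Next, for each $k$ the cochain morphism $\gamma^k_{k+1}$ induces a homomorphism $[\gamma^k_{k+1}]:H^*(B^*_k)\to H^*(B^*_{k+1})$, so $\{H^p(B^*_k)\}_k$ is itself a direct system, with direct limit $H^p_\infty:=\lim_{\to}H^p(B^*_k)$. To identify this with $H^p(B^*_\infty)$, I would write, for each $k$, the two short exact sequences
\beq
0\to\Ker\delta^p_k\to B^p_k\to \im\delta^p_k\to 0, \qquad 0\to \im\delta^{p-1}_k\to\Ker\delta^p_k\to H^p(B^*_k)\to 0, \nonumber
\eeq
which, together with the induced maps from the cochain morphisms $\gamma^k_{k+1}$, form short exact sequences of direct systems indexed by $\mathbb N$. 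Applying Theorem \ref{dlim1} twice yields short exact sequences of the direct limits, so direct limits commute with the formation of kernels, images, and quotients in this situation. Combining this with the defining relation $\delta^p_\infty\circ\gamma^{k,\infty}=\gamma^{k,\infty}\circ\delta^p_k$ gives the canonical identifications $\Ker\delta^p_\infty=\lim_{\to}\Ker\delta^p_k$ and $\im\delta^{p-1}_\infty=\lim_{\to}\im\delta^{p-1}_k$, whence
\beq
H^p(B^*_\infty)=\Ker\delta^p_\infty/\im\delta^{p-1}_\infty=\lim_{\to}\bigl(\Ker\delta^p_k/\im\delta^{p-1}_k\bigr)=H^p_\infty. \nonumber
\eeq

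The main obstacle is the exactness of the direct limit functor needed to justify that limits commute with $\Ker$, $\im$, and quotients; but this is exactly the content of Theorem \ref{dlim1} combined with Proposition \ref{nn17}, so the proof reduces to an assembly of those results. Finally, for the extension to an arbitrary directed set $I$, I would note that none of the three ingredients used above (existence of the direct limit of a direct system of modules, the universal factorization in Example \ref{nn14}, and the exactness Theorem \ref{dlim1}) require $I=\mathbb N$; they are stated or proved for general directed sets. Hence the same argument, carried out verbatim with $k$ ranging over $I$, delivers the complex $(B^*_\infty,\delta^*_\infty)$ and the cohomology identification $H^*(B^*_\infty)=\lim_{\to}H^*(B^*_i)$ in the general case.
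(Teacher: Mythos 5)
Your proof is correct. The paper states Theorem \ref{spr37'} in its appendix without giving a proof, so there is no argument to compare against; your route — constructing $B^*_\infty$ degreewise via the universal property (Example \ref{nn14}, Proposition \ref{nn17}), then applying Theorem \ref{dlim1} to the short exact sequences $0\to\Ker\delta^p_k\to B^p_k\to\im\delta^p_k\to 0$ and $0\to\im\delta^{p-1}_k\to\Ker\delta^p_k\to H^p(B^*_k)\to 0$ to identify kernels, images and quotients in the limit — is exactly the standard argument that the paper's own toolkit is designed to support, and, as you note, it extends verbatim to an arbitrary directed index set because none of those ingredients is specific to $I=\mathbb N$.
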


\subsection{Cohomology of Lie algebras}

One can associate the Chevalley--Eilenberg cochain complex
(\ref{spr997}) to an arbitrary Lie algebra \cite{fuks,book12}. In
this Section, $\cG$ denotes a Lie algebra (not necessarily
finite-dimensional) over a commutative ring $\cK$.

Let $P$ be a $\cK$-module, and let $\cG$ act on $P$ on the left by
endomorphisms
\be
&& \cG\times P\ni (\ve,p)\to \ve p\in P, \\
&& [\ve,\ve']p=(\ve\circ
\ve'-\ve'\circ \ve)p, \qquad \ve,\ve'\in\cG.
\ee
One says that $P$ is a   $\cG$-module.  A $\cK$-multilinear
skew-symmetric map
\be
c^k:\op\times^k\cG\to P
\ee
is called the $P$-valued $k$-cochain on a Lie algebra $\cG$. These
cochains form a $\cG$-module $C^k[\cG;P]$. Let us put
$C^0[\cG;P]=P$. We obtain the cochain complex
\mar{spr997}\beq
0\to P\ar^{\dl^0} C^1[\cG;P]\ar^{\dl^1} \cdots C^k[\cG;P]
\ar^{\dl^k} \cdots, \label{spr997}
\eeq
with respect to the   Chevalley--Eilenberg coboundary operators
\mar{spr132}\ben
&& \dl^kc^k (\ve_0,\ldots,\ve_k)=\op\sum_{i=0}^k(-1)^i\ve_ic^k(\ve_0,\ldots,
\wh\ve_i, \ldots, \ve_k)+ \label{spr132}\\
&& \qquad \op\sum_{1\leq i<j\leq k}
(-1)^{i+j}c^k([\ve_i,\ve_j], \ve_0,\ldots, \wh\ve_i, \ldots,
\wh\ve_j,\ldots, \ve_k), \nonumber
\een
where the caret $\,\wh{}\,$ denotes omission. The complex
(\ref{spr997}) is called the   Chevalley--Eilenberg complex with
coefficients in a module $P$. Cohomology $H^*(\cG;P)$ of the
complex $C^*[\cG;P]$ is called the Chevalley--Eilenberg cohomology
of a Lie algebra $\cG$ with coefficients in a module $P$.

\subsection{Sheaf cohomology}

In this Section, we follow the terminology of \cite{bred,hir}. All
presheaves and sheaves are considered on the same topological
space $X$.

A   sheaf on a topological space $X$ is a topological fibre bundle
$\pi:S\to X$ in modules over a commutative ring $\cK$, where a
surjection $\pi$ is a local homeomorphism and fibres $S_x$, $x\in
X$, called the   stalks, are provided with the discrete topology.
Global sections of a sheaf $S$ constitute a $\cK$-module $S(X)$,
called the   structure module of $S$.

Any sheaf is generated by a presheaf. A   presheaf $S_\sU$ on a
topological space $X$ is defined if a module $S_U$ over a
commutative ring $\cK$ is assigned to every open subset $U\subset
X$ $(S_\emptyset=0)$ and if, for any pair of open subsets
$V\subset U$, there exists a  restriction morphism
$r_V^U:S_U\rightarrow S_V$ such that
\be
r_U^U=\id S_U,\qquad  r_W^U=r_W^Vr_V^U, \qquad W\subset V\subset
U.
\ee

Every presheaf $S_\sU$ on a topological space $X$ yields a sheaf
on $X$ whose stalk $S_x$ at a point $x\in X$ is the direct limit
of modules $S_U,\,x\in U$, with respect to restriction morphisms
$r_V^U$. It means that, for each open neighborhood $U$ of a point
$x$, every element $s\in S_U$ determines an element $s_x\in S_x$,
called the   germ of $s$ at $x$. Two elements $s\in S_U$ and
$s'\in S_V$ belong to the same germ at $x$ iff there exists an
open neighborhood $W\subset U\cap V$ of $x$ such that
$r_W^Us=r_W^Vs'$.

\begin{example} \label{spr7} \mar{spr7}
Let $C^0_\sU$ be a presheaf of continuous real functions on a
topological space $X$. Two such functions $s$ and $s'$ define the
same germ $s_x$ if they coincide on an open neighborhood  of $x$.
Hence, we obtain a sheaf $C^0_X$ of continuous functions on $X$.
Similarly, a sheaf $C^\infty_X$ of smooth functions on a smooth
manifold $X$ is defined. Let us also mention a presheaf of real
functions which are constant on connected open subsets of $X$. It
generates the   constant sheaf on $X$
 denoted by $\mathbb R$.
\end{example}

\begin{example} \label{t1} \mar{t1}
Let $Y\to X$ be a smooth vector bundle. A sheaf of germs of its
sections is denoted $Y_X$. Its structure module is a
$C^\infty(X)$-ring $Y(X)$ of global sections of $Y\to X$.
\end{example}

Two different presheaves may generate the same sheaf. Conversely,
every sheaf $S$ defines a presheaf $S(\sU)$ of modules $S(U)$ of
its local sections. It is called the   canonical presheaf of a
sheaf $S$. Global sections of $S$ constitute the   structure
module $S(X)$ of $S$. If a sheaf $S$ is constructed from a
presheaf $S_\sU$, there are natural module morphisms
\be
S_U\ni s\to s(U)\in S(U), \qquad s(x)= s_x, \quad x\in U,
\ee
which are neither monomorphisms nor epimorphisms in general. For
instance, it may happen that a non-zero presheaf defines a zero
sheaf. The sheaf generated by the canonical presheaf of a sheaf
$S$ coincides with $S$.

A direct sum and a tensor product of presheaves (as families of
modules)  and sheaves (as fibre bundles in modules) are naturally
defined. By virtue of Theorem \ref{spr170}, a direct sum (resp. a
tensor product) of presheaves generates a direct sum (resp. a
tensor product) of the corresponding sheaves.

\begin{remark} \label{spr190'} \mar{spr190'}
In the terminology of \cite{tenn}, a sheaf is introduced as a
presheaf which satisfies the following additional axioms.

(S1) Suppose that $U\subset X$ is an open subset and $\{U_\al\}$
is its open cover. If $s,s'\in S_U$ obey a condition
$r^U_{U_\al}(s)=r^U_{U_\al}(s')$ for all $U_\al$, then $s=s'$.

(S2) Let $U$ and $\{U_\al\}$ be as in previous item. Suppose that
we are given a family of presheaf elements $\{s_\al\in
S_{U_\al}\}$ such that
\be
r^{U_\al}_{U_\al\cap U_\la}(s_\al)=r^{U_\la}_{U_\al\cap
U_\la}(s_\la)
\ee
for all $U_\al$, $U_\la$. Then there exists an element $s\in S_U$
such that $s_\al=r^U_{U_\al}(s)$.

Canonical presheaves are in one-to-one correspondence with
presheaves obeying these axioms. For instance, the presheaves of
continuous, smooth and locally constant functions in Example
\ref{spr7} satisfy the axioms (S1) -- (S2).
\end{remark}

A   morphism of a presheaf $S_\sU$ to a presheaf $S'_\sU$ on a
topological space $X$ is defined as a set of module morphisms
$\g_U:S_U\to S'_U$ which commute with restriction morphisms. A
morphism of presheaves yields a   morphism of sheaves generated by
these presheaves. This is a bundle morphism over $X$ such that
$\g_x: S_x\to S'_x$ is the direct limit of morphisms $\g_U$, $x\in
U$. Conversely, any morphism of sheaves $S\to S'$ on a topological
space $X$ yields a morphism of canonical presheaves of local
sections of these sheaves. Let $\hm(S|_U,S'|_U)$ be a commutative
group of sheaf morphisms $S|_U\to S'|_U$ for any open subset
$U\subset X$. These groups are assembled into a presheaf, and
define the sheaf $\hm(S,S')$ on $X$. There is a monomorphism
\mar{+212}\beq
\hm(S,S')(U)\to \hm(S(U),S'(U)), \label{+212}
\eeq
which need not be an isomorphism.

By virtue of Theorem \ref{dlim1}, if a presheaf morphism is a
monomorphism or an epimorphism, so is the corresponding sheaf
morphism. Furthermore, the following holds.

\begin{theorem} \label{spr29} \mar{spr29}
A short exact sequence
\mar{spr208}\beq
0\to S'_\sU\to S_\sU\to S''_\sU\to 0 \label{spr208}
\eeq
of presheaves on a topological space $X$ yields a short exact
sequence of sheaves generated by these presheaves
\mar{ms0102}\beq
0\to S'\to S\to S''\to 0, \label{ms0102}
\eeq
where the   factor sheaf $S''=S/S'$ is isomorphic to that
generated by the factor presheaf $S''_\sU=S_\sU/S'_\sU$. If the
exact sequence of presheaves (\ref{spr208}) is split, i.e.,
\be
S_\sU\cong S'_\sU\oplus S''_\sU,
\ee
the corresponding splitting $S\cong S'\oplus S''$ of the exact
sequence of sheaves (\ref{ms0102}) holds.
\end{theorem}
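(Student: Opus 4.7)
The plan is to reduce everything to stalk-level statements and invoke the results about direct limits proved earlier in Section 2.1, since the sheaf generated by a presheaf has stalks given by direct limits over the directed system of open neighborhoods of each point.

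First I would fix $x \in X$ and pass to stalks. The directed set $\{U \ni x\}$ of open neighborhoods (ordered by reverse inclusion) indexes a direct system for each of the three presheaves $S'_\sU$, $S_\sU$, $S''_\sU$, and the presheaf morphisms in the given short exact sequence (\ref{spr208}) are compatible with these direct systems in the sense of (\ref{nn13}). Since the sequence of presheaves is exact, for each fixed neighborhood $U$ of $x$ we have a short exact sequence $0 \to S'_U \to S_U \to S''_U \to 0$, and these assemble into a short exact sequence of direct systems indexed by $\{U \ni x\}$. Invoking Theorem \ref{dlim1} yields exactness of the direct limit sequence $0 \to S'_x \to S_x \to S''_x \to 0$ of stalks.

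Next I would argue that exactness on every stalk is exactly exactness of the sheaf sequence (\ref{ms0102}). Since the sheaf generated by a presheaf is a local homeomorphism whose fiber over $x$ is the stalk, a sheaf morphism is a mono- (resp.\ epi-)morphism precisely when it is so on every stalk; this is already noted in the excerpt in the paragraph preceding Theorem \ref{spr29}. This establishes the short exact sequence (\ref{ms0102}). For the identification of the factor sheaf, I would observe that the stalk of the sheaf generated by the quotient presheaf $S''_\sU = S_\sU/S'_\sU$ at $x$ is the direct limit of the quotients $S_U/S'_U$, which by Theorem \ref{dlim1} (applied to the exact sequence of presheaves restricted to neighborhoods of $x$) coincides with $S_x/S'_x$, i.e.\ with the stalk of the factor sheaf $S/S'$. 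The induced sheaf morphism is therefore a stalkwise isomorphism, hence an isomorphism of sheaves.

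For the splitting statement, assume $S_\sU \cong S'_\sU \oplus S''_\sU$ as presheaves. Restricting to neighborhoods of a point $x$ and passing to direct limits, Theorem \ref{spr170} (direct limits commute with direct sums) gives $S_x \cong S'_x \oplus S''_x$, and this isomorphism is compatible with the inclusion of $S'_x$ and the projection to $S''_x$ coming from the sequence (\ref{ms0102}). Again, stalkwise isomorphisms lift to sheaf isomorphisms, yielding the splitting $S \cong S' \oplus S''$. The only mildly delicate step is the identification $(S_\sU/S'_\sU)_x = S_x/S'_x$; I expect this to be the main technical point, but it is handled cleanly by Theorem \ref{dlim1} since the short exact sequence of presheaves restricts to a short exact sequence on the cofinal system of neighborhoods of $x$.
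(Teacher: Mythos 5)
Your proposal is correct and follows essentially the same route as the paper, which states Theorem \ref{spr29} without a separate proof and relies precisely on the stalkwise direct-limit machinery you invoke: exactness of direct limits (Theorem \ref{dlim1}), the remark after it identifying the direct limit of factor modules with the factor of the direct limits, and Theorem \ref{spr170} for the split case, with stalks of generated sheaves being the direct limits over neighborhoods. Nothing further is needed.
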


The converse is more intricate. A sheaf morphism induces a
morphism of the corresponding canonical presheaves. If $S\to S'$
is a monomorphism, $S(\sU)\to S'(\sU)$ also is a monomorphism.
However, if $S\to S'$ is an epimorphism, $S(\sU)\to S'(\sU)$ need
not be so. Therefore, the short exact sequence (\ref{ms0102}) of
sheaves yields the exact sequence of the canonical presheaves
\mar{ms0103'}\beq
0\to S'(\sU)\to S(\sU)\to S''(\sU), \label{ms0103'}
\eeq
where $S(\sU)\to S''(\sU)$ is not necessarily an epimorphism. At
the same time, there is the short exact sequence of presheaves
\mar{ms0103}\beq
0\to S'(\sU)\to S(\sU)\to S''_\sU \to 0, \label{ms0103}
\eeq
where the factor presheaf
\be
S''_\sU=S(\sU)/S'(\sU)
\ee
generates the factor sheaf $S''=S/S'$, but need not be its
canonical presheaf.

\begin{theorem} \label{spr30} \mar{spr30}
Let the exact sequence of sheaves (\ref{ms0102}) be split. Then
\be
S(\sU)\cong S'(\sU) \oplus S''(\sU),
\ee
and the canonical presheaves make up the short exact sequence
\be
0\to S'(\sU)\to S(\sU)\to S''(\sU)\to 0.
\ee
\end{theorem}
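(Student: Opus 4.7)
The plan is to leverage the splitting map at the sheaf level and transport it to structure modules over every open subset.

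First, I would unpack what it means for the exact sequence of sheaves (\ref{ms0102}) to be split: there exists a sheaf morphism $\si: S''\to S$ such that the composition $S''\op\to^\si S\to S''$ equals $\id_{S''}$, equivalently, $S\cong S'\oplus S''$ as sheaves on $X$. Since the sheaf morphism $\si$ is by definition a bundle morphism over $X$ whose stalk maps are $\cK$-module homomorphisms, it induces, for every open $U\subset X$, a $\cK$-module homomorphism $\si_U:S''(U)\to S(U)$ between the structure modules (obtained by postcomposition of sections with $\si$). Functoriality of taking sections then gives that the composition $S''(U)\op\to^{\si_U} S(U)\to S''(U)$ equals $\id_{S''(U)}$.

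Next, I would recall from the preceding discussion that we already possess the left-exact sequence of canonical presheaves (\ref{ms0103'}):
\be
0\to S'(\sU)\to S(\sU)\to S''(\sU),
\ee
so the only missing piece is the surjectivity of $S(U)\to S''(U)$ for every open $U$. But the relation $\si_U$ being a right inverse to this morphism immediately shows it is an epimorphism of $\cK$-modules. Thus we obtain the short exact sequence
\be
0\to S'(\sU)\to S(\sU)\to S''(\sU)\to 0
\ee
of canonical presheaves, which is split by the presheaf morphism $\{\si_U\}$ (the compatibility with restriction morphisms is automatic since $\si$ is a sheaf morphism). Applying the standard decomposition for split short exact sequences of $\cK$-modules over each $U$ yields $S(U)\cong S'(U)\oplus S''(U)$, i.e.\ the presheaf isomorphism $S(\sU)\cong S'(\sU)\oplus S''(\sU)$.

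I do not expect any serious obstacle: the only subtle point is to distinguish sheaves from their canonical presheaves and to observe that, although a sheaf epimorphism need not induce an epimorphism of structure modules in general (which is precisely why (\ref{ms0103'}) is not exact on the right without extra hypotheses), the existence of a sheaf-level splitting forces the induced maps on sections to split and hence to be surjective. The argument is essentially the functoriality of the section functor combined with the elementary fact that any retraction survives any covariant functor.
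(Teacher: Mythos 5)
Your argument is correct, and it is the standard one the paper implicitly relies on (the paper states Theorem \ref{spr30} without proof): a splitting of (\ref{ms0102}) in the sense of a section $\sigma:S''\to S$ of the sheaf epimorphism induces, by post-composition on local sections, right inverses $\sigma_U$ of $S(U)\to S''(U)$ compatible with restrictions, which together with the left-exact sequence (\ref{ms0103'}) gives the split short exact sequence of canonical presheaves and hence $S(U)\cong S'(U)\oplus S''(U)$. The only point worth stating explicitly is that ``split'' is taken, as in the paper's Section 2.1, to mean the existence of such a section (equivalently an isomorphism $S\cong S'\oplus S''$ compatible with the maps of the sequence), rather than a bare isomorphism of sheaves; with that understanding your proof is complete.
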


Let us turn now to sheaf cohomology. We follow its definition in
\cite{hir}.

Let $S_\sU$ be a presheaf of modules on a topological space $X$,
and let $\gU=\{U_i\}_{i\in I}$ be an open cover of $X$. One
constructs a cochain complex where a $p$-cochain is defined as a
function $s^p$ which associates an element
\be
s^p(i_0,\ldots,i_p)\in S_{U_{i_0}\cap\cdots\cap U_{i_p}}
\ee
to each $(p+1)$-tuple $(i_0,\ldots,i_p)$ of indices in $I$. These
$p$-cochains are assembled into a module $C^p(\gU,S_\sU)$. Let us
introduce a coboundary operator
\mar{spr180}\ben
&& \delta^p:C^p(\gU,S_\sU)\to C^{p+1}(\gU,S_\sU), \nonumber\\
&& \dl^ps^p(i_0,\ldots,i_{p+1})=\op\sum_{k=0}^{p+1}(-1)^kr_W^
{W_k}s^p(i_0,\ldots,\wh i_k,\ldots,i_{p+1}), \label{spr180}\\
&&
W=U_{i_0}\cap\ldots\cap U_{i_{p+1}},\qquad
W_k=U_{i_0}\cap\cdots\cap\wh U_{i_k}\cap\cdots\cap
U_{i_{p+1}}.\nonumber
\een
It is easily justified that $\delta^{p+1}\circ\delta^p=0$. Thus,
we obtain a cochain complex of modules
\mar{spr181}\beq
0\to C^0(\gU,S_\sU)\ar^{\dl^0}\cdots C^p(\gU,S_\sU)\ar^{\dl^p}
C^{p+1}(\gU,S_\sU)\ar\cdots. \label{spr181}
\eeq
Its cohomology groups
\be
H^p(\gU;S_\sU)=\Ker\dl^p/\im\dl^{p-1}
\ee
are modules. Certainly, they depend on a cover $\gU$ of a
topological space $X$.

\begin{remark}
Only   proper covers throughout are considered, i.e., $U_i\neq
U_j$ if $i\neq j$. A cover $\gU'$ is said to be the refinement of
a cover $\gU$ if, for each $U'\in\gU'$, there exists $U\in\gU$
such that $U'\subset U$.
\end{remark}

If $\gU'$ is a refinement of a cover $\gU$, there is a morphism of
cohomology groups
\be
H^*(\gU;S_\sU)\rightarrow H^*(\gU';S_\sU).
\ee
Let us take the direct limit of cohomology groups $H^*(\gU;S_\sU)$
with respect to these morphisms, where $\gU$ runs through all open
covers of $X$. This limit $H^*(X;S_\sU)$ is called the cohomology
of $X$ with coefficients  in a presheaf $S_\sU$.

Let $S$ be a sheaf on a topological space $X$.   Cohomology of $X$
with coefficients in $S$ or, simply,   sheaf cohomology is defined
as cohomology
\be
H^*(X;S)=H^*(X;S(\sU))
\ee
with coefficients in the canonical presheaf $S(\sU)$ of a sheaf
$S$.

In this case, a $p$-cochain $s^p\in C^p(\gU,S(\sU))$ is a family
$s^p=\{s^p(i_0,\ldots,i_p)\}$ of local sections
$s^p(i_0,\ldots,i_p)$ of a sheaf $S$ over $U_{i_0}\cap\cdots\cap
U_{i_p}$ for each $(p+1)$-tuple $(U_{i_0},\ldots,U_{i_p})$ of
elements of a cover $\gU$. The coboundary operator (\ref{spr180})
reads
\be
\dl^ps^p(i_0,\ldots,i_{p+1})=\op\sum_{k=0}^{p+1}(-1)^k
s^p(i_0,\ldots,\wh i_k,\ldots,i_{p+1})|_{U_{i_0}\cap\cdots\cap
U_{i_{p+1}}}.
\ee
For instance,
\mar{spr188,9}\ben
&& \dl^0s^0(i,j)=[s^0(j) -s^0(i)]|_{U_i\cap U_j},
\label{spr188}\\
&& \dl^1s^1(i,j,k)=[s^1(j,k)-s^1(i,k)
   +s^1(i,j)]|_{U_i\cap U_j\cap U_k}. \label{spr189}
\een
A glance at the expression (\ref{spr188}) shows that a
zero-cocycle is a collection $s=\{s(i)\}_I$ of local sections of a
sheaf $S$ over $U_i\in\gU$ such that $s(i)=s(j)$ on $U_i\cap U_j$.
It follows from the axiom (S2) in Remark \ref{spr190'} that $s$ is
a global section of a sheaf $S$, while each $s(i)$ is its
restriction $s|_{U_i}$ to $U_i$. Consequently, the cohomology
group $H^0(\gU,S(\sU))$ is isomorphic to the structure module
$S(X)$ of global sections of a sheaf $S$. A one-cocycle is a
collection $\{s(i,j)\}$ of local sections of a sheaf $S$ over
overlaps $U_i\cap U_j$ which satisfy the   cocycle condition
\be
[s(j,k)-s(i,k) +s(i,j)]|_{U_i\cap U_j\cap U_k}=0.
\ee

If $X$ is a paracompact space, the study of its sheaf cohomology
is essentially simplified due to the following fact \cite{hir}.

\begin{theorem} \label{spr225} \mar{spr225}
Cohomology of a paracompact space $X$ with coefficients in a sheaf
$S$ coincides with cohomology of $X$ with coefficients in any
presheaf generating a sheaf $S$.
\end{theorem}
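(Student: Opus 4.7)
The plan is to compare an arbitrary generating presheaf $T_\sU$ with the canonical presheaf $S(\sU)$ via a natural cochain map at the level of Čech complexes, and to show that this map induces an isomorphism on cohomology when $X$ is paracompact. There is a natural presheaf morphism
\[
\phi:T_\sU\to S(\sU), \qquad \phi_U(t)(x)=t_x, \qquad t\in T_U, \quad x\in U,
\]
which induces the identity morphism on the generated sheaves; indeed, on each stalk $\phi_x:T_x\to S_x$ is by definition an isomorphism. Splitting $\phi$ into the short exact sequences of presheaves
\[
0\to K_\sU\to T_\sU\to I_\sU\to 0,\qquad 0\to I_\sU\to S(\sU)\to C_\sU\to 0,
\]
where $K_\sU=\Ker\phi$, $I_\sU=\im\phi$, and $C_\sU=\coker\phi$, I would apply the long exact cohomology sequence of Theorem \ref{spr36'} to the associated Čech complexes (\ref{spr181}). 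Consequently, it suffices to prove that $H^*(X;K_\sU)=H^*(X;C_\sU)=0$, because $K_\sU$ and $C_\sU$ generate the zero sheaf on $X$ (since $\phi$ is a stalk isomorphism).

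The core lemma to establish is therefore: if $P_\sU$ is a presheaf on a paracompact space $X$ whose generated sheaf is the zero sheaf, then $H^p(X;P_\sU)=0$ for all $p\geq 0$. For a cocycle $s^p\in C^p(\gU,P_\sU)$ with $\gU=\{U_i\}$, the assumption means that for every $(p+1)$-tuple $(i_0,\dots,i_p)$ and every point $x\in U_{i_0}\cap\cdots\cap U_{i_p}$ there exists an open neighbourhood of $x$ in that intersection over which the restriction of $s^p(i_0,\dots,i_p)$ vanishes. By paracompactness of $X$ I would choose a locally finite open refinement $\gU'=\{V_j\}$ of $\gU$ together with a shrinking $\{W_j\}$, $\overline W_j\subset V_j$, and then refine further so that on every $(p+1)$-fold intersection of members of the refined cover the restriction of the pulled-back cochain is zero. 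Thus, after passage to a sufficiently fine refinement, $s^p$ restricts to the zero cochain, and since $H^*(X;P_\sU)$ is the direct limit over all refinements (Theorem \ref{spr37'}), its class vanishes. The analogous argument for coboundaries shows the higher cohomology vanishes as well.

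Applying this vanishing to $K_\sU$ and $C_\sU$ and chasing the two long exact sequences above, the connecting homomorphisms identify $H^p(X;T_\sU)\cong H^p(X;I_\sU)\cong H^p(X;S(\sU))$ for every $p\geq 0$. Since $H^*(X;S)$ is by definition $H^*(X;S(\sU))$, this yields $H^*(X;T_\sU)\cong H^*(X;S)$ as required.

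The main obstacle will be the vanishing lemma for presheaves generating the zero sheaf. Without paracompactness there is no way to convert the pointwise vanishing of germs into a uniform vanishing over the elements of some refinement, so all the weight of the hypothesis on $X$ is concentrated in that step; everything else is formal consequence of Theorem \ref{spr36'} and Theorem \ref{spr37'} and of the definition of the direct limit defining $H^*(X;-)$.
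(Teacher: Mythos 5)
The paper itself offers no proof of this theorem at all --- it is quoted from Hirzebruch \cite{hir} --- so there is nothing in the text to compare with; judged on its own, your proposal reproduces the classical argument and its architecture is sound. The reduction is exactly right: the natural morphism $\phi:T_\sU\to S(\sU)$ is a stalkwise isomorphism, so its kernel and cokernel presheaves generate the zero sheaf, and the paper's own long-exact-sequence machinery for presheaves (Theorems \ref{spr36'} and \ref{spr37'}, applied to the termwise exact sequences of \v Cech complexes and then to the direct limit over covers) converts the vanishing lemma into the isomorphisms $H^*(X;T_\sU)\cong H^*(X;S(\sU))=H^*(X;S)$, induced by $\phi$.

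The one place where you assert rather than prove is the phrase ``refine further so that \dots the restriction of the pulled-back cochain is zero'' inside the vanishing lemma, and that is precisely where all the content of the theorem sits. Pointwise vanishing of germs only gives, for each point $y$ of a $(p+1)$-fold intersection, vanishing of the relevant component \emph{near} $y$; to make the pulled-back cochain vanish identically you must build the refinement from neighbourhoods $W_x$ of points $x$ chosen so that (i) $W_x$ lies in some $V_{j(x)}$ and meets only finitely many members of the locally finite cover, (ii) whenever $W_x$ meets $V_j$ it is contained in $U_j$ (this is what the shrinking buys you), and (iii) $W_x$ is small enough that the finitely many components $s^p(j_0,\dots,j_p)$ with all $U_{j_m}\supset W_x$ vanish identically on $W_x$, which is possible because only finitely many such tuples occur and each germ at $x$ is zero. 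Condition (ii) guarantees that for any tuple $(x_0,\dots,x_p)$ with $W_{x_0}\cap\cdots\cap W_{x_p}\neq\emptyset$ the pulled-back component is one of those controlled in (iii) at $x_0$, hence vanishes on the whole intersection. With this spelled out the lemma, and with it the theorem, is complete. Two minor points: the refinement argument kills every cochain in every degree at once, so your closing remark about a separate ``argument for coboundaries'' is superfluous; and the claim that $\phi_x$ is an isomorphism ``by definition'' tacitly uses the standard identification of the stalks of the canonical presheaf $S(\sU)$ with those of $S$, which deserves a sentence.
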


\begin{remark} \label{spr200} \mar{spr200}
We follow the definition of a   paracompact topological space in
\cite{hir} as a Hausdorff space such that any its open cover
admits a   locally finite open refinement, i.e., any point has an
open neighborhood which intersects only a finite number of
elements of this refinement. A topological space $X$ is
paracompact iff any cover $\{U_\xi\}$ of $X$ admits the
subordinate   partition of unity $\{f_\xi\}$, i.e.:

(i) $f_\xi$ are real positive continuous functions on $X$;

(ii) supp$\,f_\xi\subset U_\xi$;

(iii) each point $x\in X$ has an open neighborhood which
intersects only a finite number of the sets supp$\,f_\xi$;

(iv) $\op\sum_\xi f_\xi(x)=1$ for all $x\in X$.
\end{remark}

A key point of the analysis of sheaf cohomology is that short
exact sequences of presheaves and sheaves yield long exact
sequences of sheaf cohomology groups.

Let $S_\sU$ and $S'_\sU$ be presheaves on a topological space $X$.
It is readily observed that, given an open cover $\gU$ of $X$, any
morphism $S_\sU\to S'_\sU$  yields a cochain morphism of complexes
\be
C^*(\gU,S_\sU)\to C^*(\gU,S'_\sU)
\ee
and the corresponding morphism
\be
H^*(\gU,S_\sU)\to H^*(\gU,S'_\sU)
\ee
of cohomology groups of these complexes. Passing to the direct
limit through all refinements of $\gU$, we come to a morphism of
the cohomology groups
\be
H^*(X,S_\sU)\to H^*(X,S'_\sU)
\ee
of $X$ with coefficients in the presheaves $S_\sU$ and $S'_\sU$.
In particular, any sheaf morphism $S\to S'$ yields a morphism of
canonical presheaves $S(\{U\})\to S'(\{U\})$ and the corresponding
cohomology morphism
\be
H^*(X,S)\to H^*(X,S').
\ee

By virtue of Theorems \ref{spr36'} and \ref{spr37'}, every short
exact sequence
\be
0\to S'_\sU\ar S_\sU\ar S''_\sU\to 0
\ee
of presheaves on a topological space $X$ and the corresponding
exact sequence of complexes (\ref{spr181}) yield the long exact
sequence
\be
&& 0\to H^0(X;S'_\sU)\ar H^0(X;S_\sU)\ar H^0(X;S''_\sU)\ar \\
&& \qquad
H^1(X;S'_\sU) \ar\cdots  H^p(X;S'_\sU)\ar H^p(X;S_\sU)\ar \\
&& \qquad  H^p(X;S''_\sU)\ar
H^{p+1}(X;S'_\sU) \ar\cdots
\ee
of the cohomology groups of $X$ with coefficients in these
presheaves. This result however is not extended to an exact
sequence of sheaves, unless $X$ is a paracompact space. Let
\be
0\to S'\ar S\ar S'' \to 0
\ee
be a short  exact sequence of sheaves on $X$. It yields the short
exact sequence of presheaves (\ref{ms0103}) where the presheaf
$S''_\sU$ generates the sheaf $S''$. If $X$ is paracompact,
\be
H^*(X;S''_\sU)=H^*(X;S'')
\ee
in accordance with Theorem \ref{spr225}, and we have the exact
sequence of sheaf cohomology
\mar{spr227}\ben
&& 0\to H^0(X;S')\ar H^0(X;S)\ar H^0(X;S'')\ar
\label{spr227}\\
&& \qquad H^1(X;S') \ar\cdots   H^p(X;S')\ar H^p(X;S)\ar \nonumber
\\
&& \qquad H^p(X;S'')\ar H^{p+1}(X;S') \ar\cdots\,. \nonumber
\een

Let us point out the following isomorphism between the sheaf
cohomology and the singular (\v Chech and Alexandery) cohomology
of a paracompact space \cite{bred,span}.

\begin{theorem} \label{spr257} \mar{spr257}
The sheaf cohomology $H^*(X;\mathbb Z)$ (resp. $H^*(X;\mathbb R)$)
of a paracompact topological space $X$ with coefficients in the
constant sheaf $\mathbb Z$ (resp. $\mathbb R$) is isomorphic to
the singular cohomology of $X$  with coefficients in a ring
$\mathbb Z$ (resp. $\mathbb R$).
\end{theorem}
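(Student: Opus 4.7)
The plan is to exhibit a fine (hence acyclic) resolution of the constant sheaf whose complex of global sections computes singular cohomology, and then invoke the abstract resolution theorem (Theorem \ref{spr230}, already used in the de Rham argument of Theorem \ref{t60} and Theorem \ref{33t3}) to identify the two cohomology theories. Throughout let $A$ denote either $\mathbb{Z}$ or $\mathbb{R}$.

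First I would construct the sheaves $\gS^n$ of germs of singular $n$-cochains with coefficients in $A$: take the presheaf $U\mapsto S^n(U;A)$ of singular cochains on open subsets, and sheafify. The singular coboundary $\dl$ commutes with restriction morphisms, so it induces sheaf maps $\dl:\gS^n\to \gS^{n+1}$. Next I would verify stalkwise exactness of
$$0\to A\to \gS^0\ar^\dl \gS^1\ar^\dl \gS^2\ar^\dl\cdots,$$
using the fact that for a sufficiently small (contractible) neighborhood $U$ of a point, the singular cohomology $H^k(U;A)$ is $A$ for $k=0$ and $0$ for $k>0$; passing to the direct limit over such $U$ gives exactness at each stalk. (If one wishes to avoid local contractibility of $X$, one can work instead with the sheaves of Alexander--Spanier cochains modulo locally-zero ones, for which stalkwise exactness follows directly from the construction.)

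Second, I would show that each $\gS^n$ is a fine sheaf. Since $X$ is paracompact (Remark \ref{spr200}), any open cover admits a subordinate partition of unity $\{f_\xi\}$ by real-valued continuous functions. For $A=\mathbb{R}$, multiplication of cochain representatives by $f_\xi$ yields endomorphisms $\eta_\xi:\gS^n\to\gS^n$ with support in $\mathrm{supp}(f_\xi)$ and $\sum_\xi \eta_\xi=\id$, proving fineness. For $A=\mathbb{Z}$ one argues instead that $\gS^n$ is flabby (soft), since a singular cochain defined on an open set extends by zero on simplices not contained in that set, and flabby sheaves on paracompact spaces are acyclic. Either way, applying Theorem \ref{spr230} to the resolution $\gS^*$ yields
$$H^*(X;A)\cong H^*\bigl(\G(X,\gS^*)\bigr).$$

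The last step, and the main obstacle, is to identify the complex $\G(X,\gS^*)$ of global sections with the standard singular cochain complex $S^*(X;A)$ up to quasi-isomorphism. There is a canonical cochain map $S^*(X;A)\to \G(X,\gS^*)$ sending a cochain to its germ; however a global section of $\gS^n$ is only locally a cochain, and to realize it as a genuine global cochain one must patch local data together using the partition of unity. The proof that this comparison map is a cohomology isomorphism rests on the subdivision (Eilenberg--Zilber / small simplices) theorem: singular cochains that are defined only on simplices subordinate to a given open cover compute the same cohomology as all singular cochains. This subdivision step is the technical heart of the argument; once it is in place, the chain of isomorphisms $H^*_{\text{sing}}(X;A)\cong H^*(\G(X,\gS^*))\cong H^*(X;A)$ gives the claim.
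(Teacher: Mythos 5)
The paper offers no proof of Theorem \ref{spr257}: it is quoted from Bredon and Spanier, and the sentence introducing it (``the singular (\v{C}ech and Alexander) cohomology'') indicates that what is really meant is \v{C}ech/Alexander--Spanier cohomology, which on a paracompact space does agree with sheaf cohomology. Judged against that, your argument has a genuine gap, located exactly where you hedge. Stalkwise exactness of $0\to A\to\mathfrak S^0\to\mathfrak S^1\to\cdots$ for the sheafified singular cochains amounts to $\varinjlim_{U\ni x}H^k_{\mathrm{sing}}(U;A)=0$ for $k>0$ (and $=A$ for $k=0$), i.e.\ to local contractibility, or at least to the HLC condition; paracompactness gives nothing of the kind. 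Nor is this a removable technicality: for the Warsaw circle, a compact metric (hence paracompact) space, sheaf (\v{C}ech) $H^1$ with coefficients $\mathbb Z$ equals $\mathbb Z$ while singular $H^1$ vanishes, so the statement with honest singular cochain cohomology is simply false at this level of generality. Your fallback to Alexander--Spanier cochains modulo locally zero ones does produce a stalkwise exact resolution by acyclic sheaves with no local hypotheses, but then Theorem \ref{spr230} identifies $H^*(X;A)$ with Alexander--Spanier cohomology, and the final paragraph of your plan --- converting back to singular cohomology via small simplices --- is precisely the step that cannot be carried out without the local assumptions you were trying to avoid.

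Concretely, you must choose one of two repairs. Either add a hypothesis such as local contractibility (manifolds, CW complexes), under which your outline is the standard textbook argument, with two corrections: the identification of $\Gamma(X,\mathfrak S^*)$ with the singular complex does not rest on a partition of unity, but on the facts that the singular cochain presheaf is conjunctive, so that global sections of $\mathfrak S^n$ are $S^n(X;A)$ modulo locally zero cochains, and that the locally zero subcomplex is acyclic by the small-simplices theorem; also, for $A=\mathbb Z$ the relevant property is flabbiness, which implies softness and hence acyclicity on a paracompact space --- flabby and soft are distinct notions, so do not write ``flabby (soft)''. Or else read ``singular'' as the paper's sources do, namely as \v{C}ech/Alexander--Spanier cohomology; then the Alexander--Spanier resolution argument alone finishes the proof and the subdivision paragraph should be deleted.
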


Since singular cohomology is a   topological invariant (i.e.,
homotopic topological spaces have the same singular cohomology)
\cite{span}, the sheaf cohomology groups $H^*(X;\mathbb Z)$ and
$H^*(X;\mathbb R)$ of a paracompact space also are topological
invariants.

Let us turn now to the abstract de Rham theorem which provides a
powerful tool of studying algebraic systems on paracompact spaces.

Let us consider an exact sequence of sheaves
\mar{spr228}\beq
0\to S\ar^h S_0\ar^{h^0} S_1\ar^{h^1}\cdots S_p\ar^{h^p}\cdots.
\label{spr228}
\eeq
It is said to be the   resolution of a sheaf $S$ if each sheaf
$S_{p\geq 0}$ is   acyclic, i.e., its cohomology groups
$H^{k>0}(X;S_p)$ vanish.

Any exact sequence of sheaves (\ref{spr228}) yields a sequence of
their structure modules
\mar{spr229}\beq
0\to S(X)\ar^{h_*} S_0(X)\ar^{h^0_*} S_1(X)\ar^{h^1_*}\cdots
S_p(X)\ar^{h^p_*}\cdots \label{spr229}
\eeq
which always is exact at terms $S(X)$ and $S_0(X)$ (see the exact
sequence (\ref{ms0103'})). The sequence (\ref{spr229}) is a
cochain complex because $h^{p+1}_*\circ h^p_*=0$. If $X$ is a
paracompact space and the exact sequence (\ref{spr228}) is a
resolution of $S$, the   abstract de Rham theorem establishes an
isomorphism of cohomology of the complex (\ref{spr229}) to
cohomology of $X$ with coefficients in a sheaf $S$ as follows
\cite{hir}.

\begin{theorem} \label{spr230} \mar{spr230}
Given a resolution (\ref{spr228}) of a sheaf $S$ on a paracompact
topological  space $X$ and the induced complex (\ref{spr229}),
there are isomorphisms
\mar{spr231}\beq
H^0(X;S)=\Ker h^0_*, \qquad H^q(X;S)=\Ker h^q_*/\im h^{q-1}_*,
\qquad q>0. \label{spr231}
\eeq
\end{theorem}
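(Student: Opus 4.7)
The plan is to reduce the abstract de Rham theorem to a repeated application of the long exact sheaf cohomology sequence (Theorem \ref{spr227}), using acyclicity of the resolving sheaves to perform dimension shifting.

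First, I would break up the long exact sequence (\ref{spr228}) into short exact sequences of sheaves. Set $Z^0 = h(S) \cong S$ and, for each $p \geq 1$, let $Z^p = \Ker h^p \subset S_p$ which, by exactness of (\ref{spr228}), equals $\im h^{p-1}$. Then for every $p \geq 0$ the sequence
\beq
0 \to Z^p \to S_p \op\to^{h^p} Z^{p+1} \to 0 \label{dr1}
\eeq
is a short exact sequence of sheaves on $X$.

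Next, since $X$ is paracompact, Theorem \ref{spr227} applied to (\ref{dr1}) produces a long exact cohomology sequence. Using the acyclicity hypothesis $H^{k>0}(X;S_p)=0$, all middle terms in positive degrees vanish, which yields the dimension-shifting isomorphisms
\be
H^q(X;Z^{p+1}) \cong H^{q+1}(X;Z^p), \qquad q \geq 1,
\ee
together with, in low degree, the exact sequence
\be
0 \to H^0(X;Z^p) \to S_p(X) \op\to^{h^p_*} H^0(X;Z^{p+1}) \to H^1(X;Z^p) \to 0.
\ee
The identifications $H^0(X;Z^p)=\Ker h^p_*$ and $\im(S_p(X)\to H^0(X;Z^{p+1}))=\im h^p_*$ (viewed inside $S_{p+1}(X)$ via the inclusion $Z^{p+1}\hookrightarrow S_{p+1}$) then give $H^1(X;Z^p)\cong \Ker h^{p+1}_*/\im h^p_*$.

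Iterating the dimension shift $q$ times starting from $Z^0\cong S$ produces
\be
H^q(X;S)=H^q(X;Z^0)\cong H^{q-1}(X;Z^1)\cong\cdots\cong H^1(X;Z^{q-1})=\Ker h^q_*/\im h^{q-1}_*,
\ee
while the $q=0$ statement follows directly from $H^0(X;S)=H^0(X;Z^0)=\Ker h^0_*$, proving (\ref{spr231}). The main obstacle is bookkeeping: one must carefully verify the identifications $H^0(X;Z^p)=\Ker h^p_*$ and of $\im h^p_*$ inside $H^0(X;Z^{p+1})$, which rely on $H^0$ being the global-section functor on sheaves, and one must ensure paracompactness is invoked every time Theorem \ref{spr227} is applied so that the sheaf sequence (\ref{dr1}) actually produces the needed long exact cohomology sequence (the presheaf-level sequence analogous to (\ref{ms0103'}) is only right-exact in general).
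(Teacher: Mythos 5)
Your argument is correct, and it is worth noting that the paper itself does not supply a proof of Theorem \ref{spr230}: it states the abstract de Rham theorem and refers to \cite{hir}. Your dimension-shifting argument --- splitting the resolution (\ref{spr228}) into short exact sequences $0\to Z^p\to S_p\to Z^{p+1}\to 0$ with $Z^0=h(S)\cong S$ and $Z^p=\Ker h^p=\im h^{p-1}$, applying the long exact cohomology sequence (\ref{spr227}) to each of them (legitimate because $X$ is paracompact), and using acyclicity of the $S_p$ to get $H^q(X;Z^{p+1})\cong H^{q+1}(X;Z^p)$ for $q\geq 1$ together with the low-degree exact sequence giving $H^1(X;Z^p)\cong \Ker h^{p+1}_*/\im h^p_*$ --- is precisely the standard proof given in the cited reference, and the identifications you single out ($H^0(X;Z^p)=Z^p(X)=\Ker h^p_*$, and the image of $S_p(X)\to Z^{p+1}(X)$ being $\im h^p_*$ under the inclusion $Z^{p+1}(X)\subset S_{p+1}(X)$) all go through. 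One small slip in your closing remark: the induced sequence of canonical presheaves (\ref{ms0103'}) is \emph{left} exact (exact at the first two terms), not ``only right-exact''; what may fail is surjectivity of the last map, and that failure is exactly why the connecting homomorphisms, and hence paracompactness, are needed in your argument.
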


We will also refer to the following minor modification of Theorem
\ref{spr230} \cite{jmp01}.

\begin{theorem} \label{spr232} \mar{spr232}
Let
\be
0\to S\ar^h S_0\ar^{h^0} S_1\ar^{h^1}\cdots\ar^{h^{p-1}}
S_p\ar^{h^p} S_{p+1}, \qquad p>1,
\ee
be an exact sequence of sheaves on a paracompact topological space
$X$, where the sheaves $S_q$, $0\leq q<p$, are acyclic, and let
\be
0\to S(X)\ar^{h_*} S_0(X)\ar^{h^0_*}
S_1(X)\ar^{h^1_*}\cdots\ar^{h^{p-1}_*} S_p(X)\ar^{h^p_*}
S_{p+1}(X)
\ee
be the corresponding cochain complex of structure modules of these
sheaves. Then the isomorphisms (\ref{spr231}) hold for $0\leq
q\leq p$.
\end{theorem}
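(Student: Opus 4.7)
The plan is to prove Theorem~\ref{spr232} by mimicking the standard proof of the abstract de Rham theorem (Theorem~\ref{spr230}), keeping careful track of how far the acyclicity hypothesis allows the iterative argument to proceed. The key device is to split the given long exact sequence of sheaves into a family of short exact sequences using kernel sheaves. Concretely, I will set $\cK_0=h(S)\cong S$ and $\cK_q=\ker(h^q:S_q\to S_{q+1})$ for $1\le q\le p$. Since the sheaf sequence is exact, $\im h^{q-1}=\cK_q$ for $1\le q\le p$, so for each $0\le q\le p-1$ there is a short exact sequence of sheaves
\begin{equation*}
0\to \cK_q\ar S_q\ar^{h^q} \cK_{q+1}\to 0.
\end{equation*}

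Next I would feed each of these short exact sequences into the long exact cohomology sequence on the paracompact base (the analogue of \eqref{spr227}), obtaining connecting isomorphisms $H^{k-1}(X;\cK_{q+1})\cong H^k(X;\cK_q)$ for every $k\ge 2$ and every $0\le q\le p-1$, because the flanking groups $H^{k-1}(X;S_q)$ and $H^k(X;S_q)$ vanish by the acyclicity of $S_q$ in that range. Iterating these isomorphisms along $q=0,1,\dots,q-1$ yields, for any $1\le q\le p$, the chain
\begin{equation*}
H^q(X;S)=H^q(X;\cK_0)\cong H^{q-1}(X;\cK_1)\cong\cdots\cong H^1(X;\cK_{q-1}).
\end{equation*}
The final step is to identify $H^1(X;\cK_{q-1})$ with $\Ker h^q_*/\im h^{q-1}_*$. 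For this I apply the long exact sequence to the short exact sequence $0\to\cK_{q-1}\to S_{q-1}\to \cK_q\to 0$: since $S_{q-1}$ is acyclic (still using $q-1\le p-1$), the relevant piece reads
\begin{equation*}
0\to \cK_{q-1}(X)\to S_{q-1}(X)\op\to^{h^{q-1}_*}\cK_q(X)\to H^1(X;\cK_{q-1})\to 0,
\end{equation*}
and the left-exactness of the global sections functor identifies $\cK_q(X)$ with $\Ker h^q_*\subset S_q(X)$, giving $H^1(X;\cK_{q-1})\cong \Ker h^q_*/\im h^{q-1}_*$ as required. The $q=0$ case is immediate from exactness of $0\to S(X)\to S_0(X)\to S_1(X)$, which already gives $H^0(X;S)=S(X)=\Ker h^0_*$.

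Essentially nothing is an obstacle, since the construction reuses the standard proof of Theorem~\ref{spr230}; the only thing to watch is that every isomorphism I invoke only needs acyclicity of some $S_r$ with $r\le p-1$, so the argument terminates exactly at $q=p$ and does not use the unknown cohomology of $S_p$ or $S_{p+1}$. The one mildly delicate step is verifying that the kernel sheaf $\cK_q$ really does generate the presheaf whose sections are $\Ker h^q_*$ on each open set; this is immediate because the sheaf kernel of a morphism coincides with the presheaf kernel of the induced map on sections, which is precisely left-exactness of the section functor on the paracompact space $X$ (Theorem~\ref{spr225} ensuring that presheaf and sheaf cohomology agree throughout).
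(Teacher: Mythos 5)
Your proof is correct. The paper states Theorem \ref{spr232} without proof (referring to \cite{jmp01} and, implicitly, to the standard proof of Theorem \ref{spr230}), and your dimension-shifting argument via the kernel sheaves $\cK_q=\ker h^q$ is exactly that standard argument; the index bookkeeping is right, since every long exact cohomology sequence you invoke uses only acyclicity of some $S_r$ with $r\le p-1$, paracompactness of $X$, and exactness of the sheaf sequence up to $S_p$, never the cohomology of $S_p$ or $S_{p+1}$. One small correction of emphasis: the identification $\cK_q(X)=\Ker h^q_*$ is plain left-exactness of the global-sections functor (sections of a kernel sheaf are the kernel of the map on sections) and needs neither paracompactness nor Theorem \ref{spr225}; paracompactness is genuinely needed only where you form the long exact cohomology sequence (\ref{spr227}) of each short exact sequence $0\to\cK_q\to S_q\to\cK_{q+1}\to 0$.
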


Let us mention a fine resolution of sheaves, i.e., a resolution by
fine sheaves. A sheaf $S$  on a paracompact space $X$ is called
fine if, for each locally finite open cover $\gU =\{U_i\}_{i\in
I}$ of $X$, there exists a system $\{h_i\}$ of endomorphisms
$h_i:S\to S$ such that:

(i) there is a closed subset $V_i\subset U_i$ and $h_i(S_x)=0$ if
$x\not\in V_i$,

(ii) $\op\sum_{i\in I}h_i$ is the identity map of $S$.

A fine sheaf on a paracompact space is acyclic. There are the
following important examples of fine sheaves \cite{hir}.

\begin{proposition}  \label{spr256} \mar{spr256}
Let $X$ be a paracompact topological  space which admits the
partition of unity performed by elements of the structure module
$\gA(X)$ of some sheaf $\gA$ of real functions on $X$. Then any
sheaf $S$ of $\gA$-modules on $X$, including $\gA$ itself, is
fine.
\end{proposition}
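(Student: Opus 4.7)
The plan is to produce the required family of endomorphisms $\{h_i\}$ directly from a partition of unity, using only the $\gA$-module structure on $S$. Given a locally finite open cover $\gU=\{U_i\}_{i\in I}$ of $X$, paracompactness together with the hypothesis on $\gA(X)$ yields a subordinate partition of unity $\{f_i\}\subset \gA(X)$ (in the sense of Remark \ref{spr200}), with $\mathrm{supp}\,f_i\subset U_i$, local finiteness of the supports, and $\sum_i f_i=1$ pointwise.

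Next, I would define the candidate endomorphisms $h_i:S\to S$ stalk-wise. Since $S$ is a sheaf of $\gA$-modules, for each open $U\subset X$ the element $f_i|_U\in\gA(U)$ acts on the $\gA(U)$-module $S(U)$ by multiplication; this family of maps commutes with the restriction morphisms of $S$ and therefore determines a sheaf endomorphism $h_i$ whose stalk map at $x$ is multiplication by the germ $(f_i)_x$.

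I would then check the two conditions in the definition of fineness. For condition (i), set $V_i=\mathrm{supp}\,f_i$, which is closed and contained in $U_i$; if $x\notin V_i$ then $(f_i)_x=0$, so $h_i$ vanishes on $S_x$. For condition (ii), note that the local finiteness of $\{\mathrm{supp}\,f_i\}$ means that on a sufficiently small neighborhood of any point only finitely many $f_i$ are nonzero, so $\sum_i h_i$ is a well-defined sheaf endomorphism; since $\sum_i f_i=1$ in $\gA(X)$, it coincides with $\id_S$ on every stalk.

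The only mild subtlety, and the step I would scrutinize, is the passage from the partition of unity in $\gA(X)$ to multiplicative endomorphisms that are genuinely sheaf morphisms of $S$, together with the well-definedness of the infinite sum. Both rest precisely on the local finiteness provided by paracompactness plus the $\gA$-module structure of $S$, so once these are invoked explicitly no further work is needed; the acyclicity conclusion then follows from the standard fact that fine sheaves on paracompact spaces are acyclic, which is already cited in the text preceding the proposition.
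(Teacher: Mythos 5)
Your argument is correct and is exactly the intended one: the paper states Proposition \ref{spr256} without proof, citing \cite{hir}, and the standard argument is precisely yours --- take a partition of unity $\{f_i\}\subset\gA(X)$ subordinate to the given locally finite cover and let $h_i$ be multiplication by $f_i$. Your germwise definition of $h_i$ via the $\gA(U)$-module structures (which commutes with restrictions), the choice $V_i=\mathrm{supp}\,f_i\subset U_i$ for condition (i), and the appeal to local finiteness so that $\sum_i h_i$ is well defined on each stalk and equals $\id$ because $\sum_i f_i=1$, cover all the points that need checking, so nothing is missing.
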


In particular, a sheaf $C^0_X$ of continuous functions on a
paracompact topological space is fine, and so is any sheaf of
$C^0_X$-modules. A smooth manifold $X$ admits the partition of
unity performed by smooth real functions. It follows that a sheaf
$C^\infty_X$ of smooth real functions on $X$ is fine, and so is
any sheaf of $C^\infty_X$-modules, e.g., the sheaves of sections
of smooth vector bundles over $X$.

\section{Appendix. Non-commutative differential calculus}

The notion of the graded differential calculus (Definition
\ref{nn51}) has been formulated for any $\cK$-ring. One can
generalize the Chevalley--Eilenberg differential calculus over a
commutative ring in Section 2.4 to a non-commutative $\cK$-ring
$\cA$ \cite{dub01,book05,book12}. As was mentioned above, the
extension of the notion of a differential operator to modules over
a non-commutative ring however meets difficulties
\cite{book05,book12}. A key point is that a multiplication in a
non-commutative ring is not a zero-order differential operator
(Remark \ref{nn103}). One overcomes this difficulty in a case of
graded commutative rings by means of a reformulation (Definition
\ref{nn122}) of the notion of differential operators (Definition
\ref{ws120}).

Namely, let $\cA$ further be an arbitrary non-commutative
$\cK$-ring. Its derivation $u\in\gd\cA$, by definition, obeys the
Leibniz rule
\mar{ws100'}\beq
u(ab)=u(a)b+au(b), \qquad a,b\in\cA. \label{ws100'}
\eeq
However, if $\cA$ is a graded commutative ring, its Leibniz  rule
(\ref{nn124}) differs from the above one (\ref{ws100'}).

By virtue of the relation (\ref{ws100'}), derivations preserve the
center $\cZ_\cA$ of $\cA$. A set of derivations $\gd\cA$ is both a
$\cZ_\cA$-bimodule and a Lie $\cK$-algebra with respect to the Lie
bracket (\ref{nn60}).

Given a Lie $\cK$-algebra $\gd\cA$, let us consider its
Chevalley--Eilenberg complex (\ref{ws102}) with coefficients in a
ring $\cA$, regarded as a $\gd\cA$-module. This complex contains a
subcomplex $\cO^*[\gd\cA]$ of $\cZ_\cA$-multilinear skew-symmetric
maps (\ref{+840'}) with respect to the Chevalley--Eilenberg
coboundary operator $d$ (\ref{+840}). Its terms $\cO^k[\gd\cA]$
are $\cA$-bimodules. The graded module $\cO^*[\gd\cA]$ is provided
with the product (\ref{ws103}) which obeys the relation
(\ref{ws98}) and brings $\cO^*[\gd\cA]$ into a differential graded
ring. Let us note that, if $\cA$ is not commutative, there is
nothing like the graded commutativity of forms (\ref{ws99}).
Therefore, a differential graded ring $\cO^*[\gd\cA]$ fails to be
$\mathbb N$-graded commutative.

The minimal Chevalley--Eilenberg differential calculus $\cO^*\cA$
over $\cA$ consists of monomials $a_0 da_1\w\cdots \w da_k$,
$a_i\in\cA$, whose product $\w$ (\ref{ws103}) obeys a
juxtaposition rule
\be
(a_0d a_1)\w (b_0d b_1)=a_0d (a_1b_0)\w d b_1- a_0a_1d b_0\w d
b_1, \qquad a_i,b_i\in\cA.
\ee
For instance, it follows from the product (\ref{ws103}) that, if
$a,a'\in\cZ_\cA$, then
\beq
da\w da'=-da'\w da, \qquad ada'=(da')a. \label{w123}
\eeq

\begin{remark} \label{nn102} \mar{nn102}
Let us mention a different graded differential calculus over a
non-commutative ring which often is used in non-commutative
geometry \cite{conn,land,book12}. Given a $\cK$-ring  $\cA$, let
us consider a tensor product $\cA\op\ot_\cK\cA$ of $\cK$-modules.
It is brought into an $\cA$-bimodule with respect to the
multiplication
\be
b(a\ot a')c=(ba)\ot (a'c), \qquad a,a',b,c\in\cA.
\ee
Let us consider its submodule $\Om^1(\cA)$ generated by the
elements $\bb\ot a-a\ot\bb$, $a\in\cA$. It is readily observed
that
\beq
d:\cA\ni a\to \bb\ot a -a\ot\bb\in\Om^1(\cA) \label{w110}
\eeq
is a $\Om^1(\cA)$-valued derivation of $\cA$. Thus, $\Om^1(\cA)$
is an $\cA$-bimodule generated by elements $da$, $a\in\cA$, such
that the relation
\beq
(da)b=d(ab)-adb, \qquad a,b\in \cA, \label{w265}
\eeq
holds. Let us consider the tensor algebra
$\Om^*(\cA)=\ot\Om^1(\cA)$ (\ref{spr620}) of an $\cA$-bimodule
$\Om^1(\cA)$ (Example \ref{ws40}). It consists of the monomials
\beq
a_0da_1\cdots da_k, \qquad a_i\in\cA, \label{w1005}
\eeq
whose product obeys the juxtaposition rule
\be
(a_0d a_1)(b_0d b_1)=a_0d (a_1b_0)d b_1- a_0a_1d b_0 b_1, \qquad
a_i,b_i\in \cA,
\ee
because of the relation (\ref{w265}). The operator $d$
(\ref{w110}) is extended to $\Om^*(\cA)$ by the law
\beq
d(a_0da_1\cdots da_k)=da_0da_1\cdots da_k, \label{w169}
\eeq
that makes $\Om^*(\cA)$ into a differential graded ring. Of
course, $\Om^*(\cA)$ is a minimal differential calculus. One calls
it the universal differential calculus. It differs from the
Chevalley--Eilenberg differential calculus. For instance, the
monomials $da$, $a\in\cZ_\cA$, of the universal differential
calculus do not satisfy the relations (\ref{w123}).
\end{remark}

It seems natural to regard the derivations (\ref{ws100'}) of a
non-commutative $\cK$-ring $\cA$ and the Chevalley--Eilenberg
coboundary operator $d$ (\ref{+840}) as particular differential
operators over $\cA$. Definition \ref{ws131} provides a standard
notion of differential operators on modules over a commutative
ring. However, there exist its different generalizations to
modules over a non-commutative ring
\cite{bor97,dublmp,dub01,book05,lunts,book12}.

Let $P$ and $Q$ be $\cA$-bimodules over a non-commutative
$\cK$-ring $\cA$. The $\cK$-module $\hm_\cK(P,Q)$ of $\cK$-linear
homomorphisms $\Phi:P\to Q$ can be provided with the left $\cA$-
and $\cA^\bll$-module structures (\ref{5.29}) and the similar
right module structures
\beq
(\Phi a)(p)=\Phi(p)a, \qquad (a\bll\Phi)(p)=\Phi(pa), \quad
a\in\cA, \qquad p\in\ P. \label{ws105}
\eeq
For the sake of convenience, we will refer to the module
structures (\ref{5.29}) and (\ref{ws105}) as the left and right
$\cA-\cA^\bll$ structures, respectively. Let us put
\mar{ws133a,3}\ben
&& \dl_a\Phi= a\Phi -\Phi\bll a, \qquad a\in\cA, \label{ws133a} \\
&& \ol\dl_a\Phi=\Phi a-a\bll\Phi, \qquad a\in\cA, \qquad \Phi\in
\hm_\cK(P,Q). \label{ws133}
\een
It is readily observed that
\be
\dl_a\circ\ol\dl_b=\ol\dl_b\circ\dl_a, \qquad a,b\in\cA.
\ee

The left $\cA$-module homomorphisms $\Delta: P\to Q$ obey
conditions $\dl_a\Delta=0$, for all $a\in\cA$ and, consequently,
they can be regarded as left zero-order $Q$-valued differential
operators on $P$. Similarly, right zero-order differential
operators are defined.

\begin{remark} \label{nn103} \mar{nn103}
However, a left (resp., right) multiplication fails to be a left
(resp., right) zero-order differential operator.
\end{remark}

Utilizing the condition (\ref{ws106}) as a definition of a
first-order differential operator, one meets difficulties too. If
$P=\cA$ and $\Delta(\bb)=0$, the condition (\ref{ws106}) does not
lead to the Leibniz rule (\ref{+a20}), i.e., derivations of a
$\cK$-ring $\cA$ are not first-order differential operators. In
order to overcome these difficulties, one can replace the
condition (\ref{ws106}) with the following one \cite{dublmp}.

\begin{definition} \label{ws120} \mar{ws120}
An element $\Delta\in \hm_\cK(P,Q)$ is called the first-order
differential operator of a  bimodule $P$ over a non-commutative
ring $\cA$ if it obeys a condition
\be
&& \dl_a\circ\ol\dl_b\Delta=\ol\dl_b\circ\dl_a\Delta=0,
\qquad  a,b\in\cA, \nonumber \\
&& a\Delta(p)b -a\Delta(pb) -\Delta(ap)b +\Delta(apb)=0, \qquad p\in P.
\ee
\end{definition}

If $P$ is a commutative bimodule over a commutative ring $\cA$,
then $\dl_a=\ol\dl_a$ and Definition \ref{ws120} comes to
Definition \ref{ws131} for first-order differential operators.

First-order $Q$-valued differential operators on a module $P$
constitute a $\cZ_\cA$-module $\dif_1(P,Q)$.

Derivations of $\cA$ are first-order differential operators in
accordance with Definition \ref{ws120}. The Chevalley--Eilenberg
coboundary operator $d$ (\ref{+840}) also is well.

If derivations of a non-commutative $\cK$-ring $\cA$ are
first-order differential operators on $\cA$, it seems natural to
think of their compositions as being particular higher order
differential operators on $\cA$.

By analogy with Definition \ref{ws131}, one may try to generalize
Definition \ref{ws120} by means of the maps $\dl_a$ (\ref{ws133a})
and $\ol\dl_a$ (\ref{ws133}). A problem lies in the fact that, if
$P=Q=\cA$, the compositions $\dl_a\circ\dl_b$ and
$\ol\dl_a\circ\ol\dl_b$ do not imply the Leibniz rule and, as a
consequence, compositions of derivations of $\cA$ fail to be
differential operators \cite{book05,book12}.

This problem can be solved if $P$ and $Q$ are regarded as left
$\cA$-modules \cite{lunts}. Let us consider a $\cK$-module
$\hm_\cK (P,Q)$ provided with the left $\cA-\cA^\bll$ module
structure (\ref{5.29}). We denote by $\cZ_0$ its center, i.e.,
$\dl_a\Phi=0$ for all $\Phi\in\cZ_0$ and $a\in\cA$. Let $\cI_0=\ol
\cZ_0$ be an $\cA-\cA^\bll$ submodule of $\hm_\cK (P,Q)$ generated
by $\cZ_0$. Let us consider:

(i) the quotient $\hm_\cK (P,Q)/\cI_0$,

(ii) its center $\cZ_1$,

(iii) an $\cA-\cA^\bll$ submodule $\ol \cZ_1$ of $\hm_\cK
(P,Q)/\cI_0$ generated by $\cZ_1$,

(iv) an $\cA-\cA^\bll$ submodule $\cI_1$ of $\hm_\cK (P,Q)$ given
by the relation $\cI_1/\cI_0=\ol \cZ_1$.

\noindent Then we define the $\cA-\cA^\bll$ submodules $\cI_r$,
$r=2,\ldots$, of $\hm_\cK (P,Q)$ by induction as
$\cI_r/\cI_{r-1}=\ol \cZ_r$, where $\ol \cZ_r$ is the
$\cA-\cA^\bll$ module generated by the center $\cZ_r$ of the
quotient $\hm_\cK (P,Q)/\cI_{r-1}$.

\begin{definition} \label{ws135} \ref{ws135}
Elements of the submodule $\cI_r$ of $\hm_\cK (P,Q)$ are said to
be left $r$-order $Q$-valued  differential operators of an
$\cA$-bimodule $P$ \cite{lunts}.
\end{definition}

If $\cA$ is a commutative ring, Definition \ref{ws135} comes to
Definition \ref{ws131}.

Let $P=Q=\cA$. Any zero-order differential operator on $\cA$ in
accordance with Definition \ref{ws135} takes a form $a\to cac'$
for some $c,c'\in\cA$. Any derivation $u\in\gd\cA$ of a $\cK$-ring
$\cA$ is a first-order differential operator in accordance with
Definition \ref{ws135}. Indeed, it is readily observed that
\be
(\dl_au)(b)= au(b)-u(ab)=-u(a)b, \qquad b\in\cA,
\ee
is a zero-order differential operator for all $a\in\cA$. The
compositions $au$, $u\bll a$ (\ref{5.29}), $ua$, $a\bll u$
(\ref{ws105}) for any $u\in\gd\cA$, $a\in\cA$ and the compositions
of derivations $u_1\circ\cdots\circ u_r$ also are differential
operators on $\cA$ in accordance with Definition \ref{ws135}.

By analogy with Definition \ref{ws135}, one can define
differential operators on right $\cA$-modules as follows.

\begin{definition} \label{ws151} \mar{ws151}
Let $P$, $Q$ be seen as right $\cA$-modules over a non-commutative
$\cK$-ring $\cA$. An element $\Delta\in\hm_\cK(P,Q)$ is said to be
the right zero-order $Q$-valued differential operator on $P$ if it
is a finite sum $\Delta=\Phi^i b_i$, $b_i\in\cA$, where
$\ol\dl_a\Phi^i=0$ for all $a\in\cA$. An element
$\Delta\in\hm_\cK(P,Q)$ is called the right differential operator
of order $r>0$ on $P$ if it is a finite sum
\be
\Delta(p)=\Phi^i(p)b_i +\Delta_{r-1}(p), \qquad b_i\in\cA,
\ee
where $\Delta_{r-1}$ and $\ol\dl_a\Phi^i$ for all $a\in\cA$ are
right $(r-1)$-order differential operators.
\end{definition}

Definition \ref{ws135} and Definition \ref{ws151} of left and
right differential operators on $\cA$-bimodules are not
equivalent, but one can combine them as follows.

\begin{definition} \label{ws152} \mar{ws152}
Let $P$ and $Q$ be bimodules over a non-commutative $\cK$-ring
$\cA$. An element $\Delta\in\hm_\cK(P,Q)$ is a two-sided
zero-order $Q$-valued differential operator on $P$ if it is either
a left or right zero-order differential operator. An element
$\Delta\in\hm_\cK(P,Q)$ is said to be the two-sided differential
operator of order $r>0$ on $P$ if it is brought both into the form
\be
&& \Delta=b_i\Phi^i +\Delta_{r-1},\qquad b_i\in\cA,\\
&& \Delta=\ol\Phi^i\ol b_i +\ol\Delta_{r-1}, \qquad \ol b_i\in\cA,
\ee
where $\Delta_{r-1}$, $\ol\Delta_{r-1}$ and $\dl_a\Phi^i$,
$\ol\dl_a\ol\Phi^i$ for all $a\in\cA$
  are two-sided $(r-1)$-order differential operators.
\end{definition}

One can think of this definition as a generalization of Definition
\ref{ws120} to higher order differential operators.

It is readily observed that two-sided differential operators
described by Definition \ref{ws152} need not be left or right
differential operators, and \textit{vice versa}. At the same time,
derivations of a $\cK$-ring $\cA$ and their compositions obey
Definition \ref{ws152}.

\addcontentsline{toc}{section}{References}

\end{document}